\documentclass[acmsmall,screen,nonacm]{acmart}

\usepackage{subcaption}
\usepackage{stmaryrd}
\usepackage{xcolor}
\usepackage{ifthen}
\newboolean{techreport}
\setboolean{techreport}{true}
\newcommand*{\appref}[1]{\ifthenelse{\boolean{techreport}}{\cref{#1}}{\cref*{#1}~\cite{tech_report}}}
\ifthenelse{\boolean{techreport}}{}{\input{appendixlabels}}

\usepackage{graphicx,wrapfig,lipsum}
\usepackage{makecell}
\usepackage{multirow}
\usepackage{makecell}
\usepackage[normalem]{ulem}
\usepackage{soul}
\usepackage{natbib}
\usepackage{graphicx}
\usepackage{changepage}
\usepackage{stmaryrd}
\usepackage{amsmath}
\usepackage{amsfonts}
\usepackage{relsize}
\usepackage{mdframed}
\usepackage{comment}
\usepackage{adjustbox}
\usepackage{tabularx}
\usepackage{xltabular}
\makeatletter
\usepackage{listings}
\newcommand*\mysize{%
  \@setfontsize\mysize{8.8}{9.0}%
}
\makeatother

\usepackage{tikz}
\usetikzlibrary{tikzmark,shapes.misc}
\tikzset{white border/.style={preaction={draw,white,line width=4pt}}}  

\newcommand{\wnsay}[1]{}

\newcommand{\whsay}[1]{}
\newcommand{\nownsay}[1]{}
\newcommand{\wnnay}[1]{}

\newcommand{\darius}[1]{}

\newcommand{\trung}[1]{}

\newcommand{\trungdelete}[1]{}
\newcommand{\trungdeleteOK}[1]{}

\newcommand{\todo}[1]{}

\usepackage{listings}

\usepackage[most]{tcolorbox}
\definecolor{mGray}{rgb}{0.5,0.5,0.5}
\usepackage{prooftree}
\usepackage{enumitem}
\usepackage{pifont}

\newcommand{\toolname}{\ensuremath{\ms{TypeHL}}}

\newtcblisting[auto counter]{codelisting}[2][]{sharp corners,
,colframe=gray, listing only, listing options={basicstyle=\tiny\ttfamily,language=java},
}

\usepackage{dutchcal}

\def\fdot{\coden{\,{.}\,}}
\newcommand{\code}[1]{\ensuremath{{\mathtt{#1}}}}
\newcommand{\coden}[1]{\ensuremath{{\mathtt{#1}}}}
\newcommand{\ci}[1]{\ensuremath{{\mathtt{#1}}}}

\newcommand{\ms}[1]{{\mathit{#1}}}
\newcommand{\m}[1]{{\mathit{#1}}}

\newcommand{\kw}[1]{{\tt{#1}}}

\newcommand{\gtype}[2]{\ensuremath{{\code{#1}{(\code{#2})}}}}  

\newcommand*{\notation}[2]{\expandafter\newcommand\csname #1\endcsname{{#2}}}

\newcommand{\emp}{\coden{emp}}

\notation{ann}{\code{{\it ann}}}
\notation{reqw}{\code{{\tt req}}}
\notation{ensw}{\code{{\tt ens}}}
\notation{casew}{\code{{\tt case}}}
\notation{DD}{\ensuremath{D}}
\notation{weaken}{\ensuremath{{\Rightarrow_{weaken}}}}
\newcommand{\req}{{\rm\reqw}}
\newcommand{\ecase}{{\rm\casew}}
\newcommand{\speckw}[1]{\textcolor{blue}{\textbf{\texttt{#1}}}}
\newcommand*{\eref}[1]{\coden{\m{ref}~#1}}

\notation{sh}{\ms{Sh}}
\notation{shz}{\ms{Sh}_0}
\notation{shn}{\ms{Sh}_n}
\notation{shnh}{\ms{Sh^{\#}_n}}
\notation{shh}{\ms{Sh^{\#}}}
\notation{shzh}{\ms{Sh^{\#}_0}}
\notation{rs}{\ms{Rs}}

\notation{sfh}{\#}

\newcommand*{\substassign}{{:=}}
\newcommand*{\subst}[3]{\coden{[#1\substassign#2]#3}}

\newcommand*{\ederef}[1]{\coden{!#1}}
\notation{loc}{\ell}
\newcommand{\ens}{{\rm\ensw}}

\newcommand{\vbar}{\code{\bf~|~}}

\newcommand{\true}{\m{true}}
\newcommand{\false}{\m{false}}

\newcommand{\assertPre}[1]{\req\coden{#1}}

\newcommand{\ensRet}[2]{\ensuremath{{\rm\tt{ens}}[\coden{#2}]\,\coden{#1}}}
\newcommand{\assertPostRet}[2]{\ensuremath{\ensRet{#1}{#2}}}

\newcommand{\seq}{{\bf;}}
\newcommand{\seqS}{\seq\,}

\newcommand{\equivA}{\ensuremath{~\Leftrightarrow~}}
\newcommand{\equivArr}{\ensuremath{~{\textcolor{blue}\Leftrightarrow}~}}
\newcommand{\equivArrB}{\ensuremath{\Leftrightarrow}~}
\newcommand{\implyArr}{\ensuremath{~{\textcolor{blue}\Rightarrow}~}}
\newcommand{\implyArrB}{\ensuremath{~{\Rightarrow}~}}

\newcommand{\hide}[1]{}

\definecolor{blue(pigment)}{rgb}{0.2, 0.2, 0.6}

\newcommand{\effect}{
{\ensuremath{\mathrm{\Phi}}}}

\newcommand{\uniqto}[2]{\ensuremath{{#1}{\mapsto}{#2}@U}}
\newcommand{\heapto}[2]{{#1}{\ensuremath{\mapsto}}{#2}}
\newcommand{\fnS}[1]{{\footnotesize#1}}

\newcommand{\heaptoSS}[2]{\heapto{{\fnS{#1}}}{\fnS{#2}}}

\newcommand{\typeof}[2]{{#1}{:}\,{#2}}

\newcommand*{\lam}{\lambda}

\notation{bigstepeval}{\leadsto}
\newcommand*{\bigstep}[6]{\ensuremath{#1, #2, #3 \bigstepeval #4, #5, #6}}
\notation{Rese}{R_e}
\newcommand{\incrRule}[3]{
 \begin{array}{c}
 \frac{
 \begin{array}{c}
 #1
 \end{array}
 }{
 \begin{array}{c}
 #2
 \end{array}
 }
~{#3}
\end{array}
}

\newcommand{\incrHRule}[3]{
\[
 \incrRule{#1}{#2}{#3}
\]
}

\def\ra{\ensuremath{{\rightarrow}}}
\def\sra{\ensuremath{{\fdot}}}

\newcommand{\HTriple}[3]{\{\,#1\,\}\,#2\,\{\,#3\,\}}
\newcommand{\HTripleIN}[3]{\{\,#1\,\}\,#2\,\{{\downarrow}\,#3\}}

\def\req{\coden{{\tt req{~}}}}
\def\ens{\coden{{\tt ens{~}}}}

\usepackage{pifont}

\def\sep{~\code{*}~}

\usepackage{stfloats}
\fnbelowfloat

\newcommand{\myit}[1]{\textit{#1}}

\def\xpure{\myit{Pure}}

\newcommand*{\logicallambda}{\lambda}
\newcommand{\codelamspB}[3]{\ensuremath{\logicallambda\,#1{\fdot}{#2}}}

\newcommand*{\efunC}[2]{\codelamspB{#1}{#2}{}}
\newcommand*{\tlambda}[2]{\efunC{#1}{#2}}

\def\fresh{\myit{fresh~}}

\newcommand{\heap}{\ensuremath{\mathrm{\sigma}}}
\newcommand{\pure}{\ensuremath{\mathrm{\pi}}}

\newcommand{\entailS}[2]{\ensuremath{#1 \vdash #2}}

\newcommand{\fnAnn}[2]{\code{#1\,{:}\,(#2)}}
\newcommand{\slambda}[2]{\lambda\,{#1}{\fdot}{#2}}
\newcommand{\fnAnnS}[3]{\code{#1\,{::}\,(\slambda{#2}{#3})}}

\def\tyInt{{{Int}}\,}
\def\Int{\tyInt}

\newcommand*{\todelH}[1]{}

\notation{store}{S}
\notation{hp}{h}

\newcommand*{\heapupdate}[3]{#1[#2{:=}#3]}

\newcommand*{\topI}{\Any}
\newcommand*{\AnyP}{\Any_{\mathtt{P}}}
\newcommand*{\AnyS}{\Any_{\mathtt{S}}}
\newcommand*{\botsys}{\ensuremath{\bot_{\mathtt{sys.err}}}}
\newcommand*{\resnorm}[1]{#1}

\notation{resshiftw}{\m{Shft}}

\notation{resshiftzw}{\m{Shft_0}}

\notation{resshiftnw}{\m{Shft_n}}

\notation{resshiftspw}{\m{Shft}}

\notation{resshiftspzw}{\m{Shft_0}}

\notation{resshiftspnw}{\m{Shft_n}}

\notation{rese}{R^e}
\notation{resphi}{R^\Phi}
\notation{reserr}{{\it{Err}}}
\notation{err}{{\ci{err}}}
\notation{ierr}{{\ci{ierr}}}
\notation{Err}{\ci{Err}}
\notation{IErr}{\ci{IErr}}
\notation{Dead}{\ci{Dead}}
\notation{abrt}{{\ci{abrt}}}
\notation{Abrt}{{\ci{Abrt}}}
\notation{Exc}{{\ci{Exc}}}
\notation{ChExc}{{\ci{ChExc}}}
\notation{UnExc}{{\ci{UnExc}}}
\notation{Any}{{\ci{Any}}}
\notation{coq}{{Lean}}
\notation{caseRA}{}
\notation{casewedge}{{\vdash}{\wedge}}
\notation{mSep}{{\code{|}}}
\notation{spatial}{{\it{separation}}}
\notation{Res}{R}

\newcommand*{\stagereq}{\assertPre}

\newcommand*{\stageensres}[2]{\assertPostRet{#2}{#1}}
\newcommand*{\specAnd}[2]{\coden{#1{\wedge}#2}}
\newcommand*{\specSep}[2]{\coden{#1{\sep}#2}}

\newcommand*{\specCase}[1]{\coden{\ecase~\{#1\}}}
\newcommand*{\specEns}[2]{\stageensres{#1}{#2}}
\newcommand*{\stageseq}[2]{\coden{#1{\seqS}#2}}

\newcommand*{\stageho}[2]{#1(#2)}

\newcommand*{\exi}[1]{\coden{\exists\,#1{\fdot}}}

\notation{subsume}{\mathrel{\sqsubseteq}}

\newcommand{\stagesubs}{\!\!\!\subsume\!\!\!}

\newcommand*{\seplogicmodels}[2]{\mathnormal{#1}\models#2}

\newcommand*{\bigstepres}[2]{\ensuremath{#1 \leadsto #2}}
\newcommand*{\vfun}[2]{\ensuremath{\lam\,#1\sra#2}}
\newcommand*{\vfunrel}[2]{\ensuremath{\lam\,#1\sra#2}}

\newcommand*{\ctr}{\ensuremath{\mathrm{\Delta}}}
\newcommand*{\dspec}{\effect}
\notation{normsto}{\leadsto_N}
\notation{reducesto}{\leadsto_R}

\newcommand*{\stagedisj}[2]{\coden{#1{\vee}#2}}

\newcommand*{\andd}{\ \myit{and}\ }
\newcommand*{\orr}{\ \myit{or}\ }
\newcommand*{\negg}{\myit{not}}

\newcommand*{\elet}[3]{\coden{\kw{let}~#1\,{=}\,#2~\kw{in}~#3}}

\newcommand*{\eletBlk}[2]{\coden{\kw{let}~#1~\kw{in}~#2}}

\newcommand*{\ecall}[2]{#1(#2)}
\newcommand*{\eassign}[2]{\coden{#1~{:=}~#2}}
\newcommand*{\econs}[2]{#1{::}#2}

\newcommand*{\ematch}[2]{\coden{\kw{match}~#1~\{#2\}}}
\newcommand*{\etrycatch}[2]{\coden{\kw{try}~#1~\kw{catch}~#2}}
\newcommand*{\efun}{\vfun}

\newcommand*{\stage}{\ensuremath{E}}
\notation{stageexists}{\exists}
\newcommand*{\stageex}[2]{\ensuremath{\stageexists\,#1{\fdot}{\coden{#2}}}}

\newcommand*{\heapdomain}{\m{dom}}
\newcommand*{\heapdom}[1]{\heapdomain(#1)}

\notation{specmodels}{\models}

\notation{specfinalstatew}{\models}

\newcommand*{\heapstoreupd}{{:=}}

\newcommand{\locle}[1]{}
\newcommand{\num}[1]{{\ttfamily \color{purple}{{{\#No}}}}}

\usepackage{thmtools}

\usepackage{thm-restate}

\usepackage{hyperref}

\definecolor{darklavender}{rgb}{0.3, 0.16, 0.4}
\usepackage{mathpartir}
\definecolor{babypink}{rgb}{0.96, 0.76, 0.76}
\definecolor{mossgreen}{rgb}{0.68, 0.87, 0.68}

\usepackage[mathscr]{euscript}
 \let\mathscr\relax
\usepackage[scr]{rsfso}

\newcommand*{\defeq}{\stackrel{\text{def}}{=}}

\newcommand{\syh}[1]{}
\definecolor{darkmagenta}{rgb}{0.55, 0.0, 0.55}

\definecolor{darkred}{rgb}{0.55, 0.0, 0.0}
\definecolor{forestgreen}{rgb}{0.13, 0.55, 0.13}
\usepackage{color}

\definecolor{pblue}{rgb}{0.13,0.13,1}
\definecolor{pgreen}{rgb}{0,0.5,0}
\definecolor{pred}{rgb}{0.9,0,0}
\definecolor{pgrey}{rgb}{0.46,0.45,0.48}
\usepackage[capitalise]{cleveref}
\crefname{section}{Sec.}{Secs.}
\Crefname{section}{Sec.}{Secs.}
\crefname{subsection}{Sec.}{Secs.}
\Crefname{subsection}{Sec.}{Secs.}
\crefname{subsubsection}{Sec.}{Secs.}
\Crefname{subsubsection}{Sec.}{Secs.}
\definecolor{airforceblue}{rgb}{0.36, 0.54, 0.66}

\usepackage{listings}

\begin{document}
\emergencystretch=3em

\title{{\large Type Safety via Hoare Logic with Separation and Pure Types}}

\author{Wenhua Li \quad Darius Foo \quad Quang Trung Ta \quad Wei-Ngan Chin}
\renewcommand{\shortauthors}{Li, Foo, Ta, and Chin}
\authorsaddresses{Department of Computer Science, National University of Singapore. \{liwenhua, dariusf, taqt, chinwn\}@nus.edu.sg}
\begin{abstract}
  Type safety has traditionally rested on carefully crafted type systems, under the
motto {\em ``well-typed programs cannot go wrong''}. Modern demands push type systems
past this basic guarantee: toward greater memory safety (e.g., Rust), stronger
data-structure invariants (e.g., Haskell's GADTs), and broader typability (e.g.,
MLstruct). In principle the classic motto absorbs each such property by enlarging the
set of states deemed ``wrong''; but this collapses them into a single binary verdict,
erasing the very distinctions that make them valuable. Heap ownership, flow-sensitive
changes to a variable's type, and the gap between a recoverable and a fatal error are
relational, stateful facts about a program's \emph{intermediate} states, facts one
``go wrong'' verdict cannot tell apart. Worse, each new demand typically arrives as its
own custom extension, making it hard to say what any one guarantees, whether two are
compatible, or how they combine.

What is missing is a single foundation in which such properties can be stated, compared,
and combined; Floyd-Hoare logic supplies it. We present a framework for type-safety
verification built from four ingredients:
(i) {\em case specifications} for path-sensitive typing;
(ii) {\em separation types}, inspired by separation logic, for flow-sensitive type
mutation and must-aliasing;
(iii) a disciplined distinction between \code{Err} (runtime error values our types track)
and \code{Abrt} (compile-time errors), yielding the refined motto {\em well-typed
programs must never abort}; and
(iv) {\em type predicates} for rich data-structure invariants.
All four are ordinary types in a single Boolean algebra rather than one-off extensions.
The framework therefore subsumes both GADTs and liquid types within one type logic
and spans a spectrum of guarantees: from weak specifications that tolerate \code{Err}
values to strong ones that eliminate them entirely. The algebra also keeps checking
tractable: subtyping reduces to a single decidable emptiness test, so one lightweight
procedure serves the whole framework with no SMT oracle in its trusted base.
We formalise the Hoare rules and prove soundness in a machine-checked Lean mechanisation;
by proof reflection, the mechanisation itself yields a self-certifying type-checker, which
we evaluate on a suite of benchmark programs.



\end{abstract}
\maketitle              


\newcommand{\qedbox}{$\square$}

\newtheorem{defn}{Definition}[section]
\newtheorem{remark2}{Remark}[section]
\newtheorem{remarkA}{Remark}[section]
\newtheorem{remark}{Remark}[section]
\section{Introduction}
\label{sec:intro}
Type systems are being asked to do more than ever. Beyond the classic guarantee that well-typed
programs do not go wrong, modern languages demand richer properties: memory safety through
ownership and borrowing (Rust), stronger data structure invariants through generalised algebraic
data types (GADTs in OCaml and Haskell), broader typability through semantic subtyping (MLsub
\cite{Doplan:POPL17}, MLstruct \cite{Lionel:OOPSLA22}),
and low-latency execution through modal resource management (OxCaml  \cite{LorenzenWDEL24,oxcaml}).
Dynamically-typed languages face the same pressure from a different direction, as gradual typing
retrofits (e.g. Hack \cite{hacklang} and Typed Racket \cite{tobin-hochstadt2008typedscheme}) attempt to recover static guarantees incrementally.
Folding all of these into a single ``go wrong'' predicate blurs distinctions that matter about a
program's \emph{intermediate} states: whether a heap location is owned, how a variable's type
evolves as the program runs, and whether an error
is a recoverable \code{Err} to be tolerated or a fatal \code{Abrt} to be ruled out
(\cref{sec:error}). This last distinction is the one we lean on throughout.
Meanwhile, the foundational framework has not kept pace: each such feature
arrives as its own extension, leaving it unclear how the resulting guarantees relate or compose.
%

In this paper, we argue that Floyd-Hoare logic provides exactly the right foundation for a unified
treatment of type safety. Hoare logic is inherently modular: the type property of each function is
captured by a pre/post specification that can be verified and reused independently. It is
flow-sensitive by design, tracking how types evolve through a program rather than assigning a
single static type to each variable. It is also expressive enough to accommodate both statically- and
dynamically-typed programs within the same framework, by adjusting the strength of the
specifications rather than changing the underlying logic.
\textit{The key question is whether type
signatures (the concise, familiar form of type information) can be faithfully and conveniently
represented as Hoare-style pre/post specifications.}
We show that they can, and that the resulting
framework supports strictly more than standard type systems.

\hide{
  The importance and benefits of type safety have also gained significant
attention of designers for dynamically-typed programming languages.
For example, TypeScript, a statically- and gradually-typed variant of JavaScript,
is now widely adopted by software developers.
To better support dynamically-typed languages, \cite{Castagna:POPL24}
has also recently proposed a new type inference system that
combines first-order polymorphism, intersection types, union types,
and semantic subtyping. Nonetheless, this advanced typing solution is currently unsound
in the presence of side effects.}
\hide{
  While strongly typed systems originate from higher-order functional languages
(with automatic garbage collection for heap memory management),
there is now an increasing realization that advanced type systems could also
be used to better support imperative (higher-order) languages with
programmer/compiler-managed memory, as exemplified by the Rust language.
Here, ownership and uniqueness type system (with borrowing and lifetimes
mechanisms) are deployed to support memory-safety guarantees
that can prevent access to dangling references (from freed memory).
Such a memory-safe type system has helped to anchor
the wider adoption of Rust, due to its ability to combine
type safety with high performance for its well-typed program code.
}
\hide{The increasing demands placed on type systems
have resulted in significant efforts being expended
towards the design and implementation of 
programming languages. The traditional approach
suffers from customized type systems that are
being 
built on a per-language (or even per-compiler) basis.}

We now illustrate these points through a series of examples, building from simple type signatures to
flow-sensitive specifications for heap-mutating programs.
Consider two functions from
an ML-like language: a conversion function \coden{str\_of\_int}
from \code{Int} to its \code{Str} counterpart
; and a polymorphic \code{(!)}
operator to dereference a mutable \code{Ref}.

\begin{minipage}{.8\textwidth}
\begin{lstlisting}[numbers=none]
  str_of_int : Int (*@$\rightarrow$@*) Str
  (!)        : (*@$\forall$@*)T. Ref(T) (*@$\rightarrow$@*) T
\end{lstlisting}
\end{minipage}


While type signatures are concise, it is almost as
easy to write their type specifications using pre/postconditions in Hoare logic. For the above examples, we can obtain the
corresponding type specifications by suitably naming
the inputs \code{v^*} and output \code{r} of each function using
the precondition clause \code{req[v^*]} and the
postcondition clause \code{ens[r]}, respectively (their formal definition is given in Sec.~\ref{sec:type-logic}),
as shown below.

\noindent
\begin{minipage}{.98\textwidth}
\begin{lstlisting}[numbers=none]
str_of_int : req[x] x:Int   ens[r] r:Str
(!)        : (*@$\forall$@*)T. req[x] x:Ref(T)   ens[r] r:T
\end{lstlisting}
\end{minipage}
\label{ex:apply}

For the examples above, each type signature of
the form \code{t_1~{\ra}~t_2} can be directly
translated into a type specification of the form
\code{req[x]\,x{:}t_1\,ens[r]\,r{:}t_2}, where
\code{x} and \code{r} are freshly named.
So far, the gain over type signatures is mainly notational.

\textit{The real advantage of Hoare logic}
emerges when type signatures fall short: a variable's type may \emph{change} as the program runs,
and the mutated location may be \emph{aliased}. In the simplified program below,
a field is coerced in place from its raw \code{Str} form to an \code{Int} (as when normalising
configuration or JSON data).
For instance, normalising a JSON record like \texttt{\{"port": "8080"\}} rewrites the
string-valued \texttt{port} field in place to the integer \texttt{8080}, which downstream code then
uses in arithmetic.
A mutable object is typically reached through several names --
an earlier binding, a record field, a captured variable, a function argument
-- so we use \code{alias} to illustrate an aliased reference. The write goes through \code{box}, while \code{alias} (bound earlier)
stays live and is read afterwards. No mainstream type system tracks this combination of type
mutation and aliasing soundly.

\hide{\trung{I think we can highlight these 2 points as the challenges of type signatures in the first paragraph. Then, we can link to that paragraph and make a coherent argument how the type specification can address these challenges.}\wnsay{Type signature are just syntactic sugar.
  I hesitate to claim that our contribution is type-specification}}

\noindent
{\lstset{morecomment=[l]{//},commentstyle=\color{black}}%
\begin{lstlisting}[numbers=none]
 coerce(b)   = update(b, str_of_int(!b))  // in-place type change
 caller(box) = let alias = box in         // alias, box: must-aliases
               coerce(box);               // box now stores an Int
               !alias + 1                 // safe iff alias : Int
\end{lstlisting}}

To capture this kind of flow-sensitive type change, a pure type \code{box{:}Ref(Any)} is too weak:
it fixes a single stored type and loses the precise \code{Str\,{\to}\,Int} change. We instead
introduce the \textit{separation type} \code{\heapto{box}{Ref(t_1)}}, which carries full ownership of the heap
location at \code{box} and can be updated to a different type \code{\heapto{box}{Ref(t_2)}} as the
program evolves. The notation is inspired by separation
logic~\cite{DBLP:conf/lics/Reynolds02}, but serves a different purpose. Separation logic reasons about
functional correctness of programs \emph{already assumed to be type-checked}. By contrast, we aim to deliver stronger \emph{type-safety} guarantees, here soundly tracking the
type of an aliased, mutable heap location.
Coupled with pre/post type
specification, we can now provide
a precise, flow-sensitive type specification for \code{coerce}, and, more generally, for the
\code{update} it calls, as shown below.


\noindent
\begin{lstlisting}[numbers=none]
coerce : req[b] b(*@$\mapsto$@*)Ref(Str) ens[(*@\_@*)] b(*@$\mapsto$@*)Ref(Int)
update : (*@$\forall$@*)T. case[x,u] { x:Ref(T) (*@$\land$@*) u:T (*@${\caseRA}$@*) ens[r] r:() ;
                         x(*@$\mapsto$@*)Ref(_) (*@$\land$@*) u:Any (*@${\caseRA}$@*) ens[r] x(*@$\mapsto$@*)Ref({u}) (*@$\land$@*) r:() }
\end{lstlisting}

Here \code{case[v^*]\{\ g_1 \caseRA s_1\,;\ \ldots\,\}} names the inputs \code{v^*} and lists
guarded behaviours: when the pre-state satisfies guard \code{g_i}, the function behaves as
\code{s_i} (an \code{ens[r]}-postcondition on the result \code{r}). The clauses, separated by
\code{;}, are pairwise disjoint, so at most one matches any call. (The separation type
\code{\heapto{x}{T}} and the \code{\_} wildcard are introduced in Sec.~\ref{sec:flow-sen}.)

For the \code{update} method, we use a case specification
to capture both
the flow-insensitive pure-type scenario \code{x{:}Ref(T)}, and
the flow-sensitive separation-type scenario \code{\heapto{x}{Ref(\_)}}.
Which case applies at a given call is determined by the caller's static pre-state: the case
whose guard the caller's incoming types satisfy is the one selected. Thus, in a pure-type context the first case applies,
while a caller holding full ownership \code{\heapto{x}{Ref(\_)}} selects the second. In that case's
postcondition \code{\heapto{x}{Ref(\{u\})}}, the value argument \code{u} is used as the singleton type \code{\{u\}},
recording that the cell now stores exactly \code{u}. To be explicit, we can write \code{\{u\}} to
denote each singleton type \code{u}. Since pure
and separation types are disjoint (\code{x{:}Ref(T)} and \code{\heapto{x}{Ref(\_)}} can
never hold at once), the two cases are mutually exclusive.

\paragraph{Why this is beyond existing type systems.} Trace the aliased coercion. The call
\code{coerce(box)} selects the separation case of \code{update}, rewriting \code{box}'s stored type
from \code{Str} to \code{Int}; the post-state records \code{\heapto{box}{Ref(Int)}}. The subtle point
is \code{alias}: again using a variable as a singleton type, we record the must-alias as
\code{alias{:}\{box\}} (read \code{alias{=}box}). Our logic then propagates the change,
so \code{\heapto{alias}{Ref(Int)}} holds afterwards and \code{!alias\,{+}\,1} type-checks. Type mutation
and must-aliasing are both essential: drop either and the program is wrongly rejected or unsoundly
accepted. This is precisely where existing systems fall short.
\begin{itemize}[leftmargin=1.4em,itemsep=1pt,topsep=2pt]
\item \emph{OCaml, Haskell.} A mutable cell (\code{'a~ref}, \code{IORef~a}) fixes its stored type at
creation: only its \emph{value}, never its type, may change, so the \code{Str\,{\to}\,Int} update is
inexpressible.
\item \emph{Rust, uniqueness/linear types.} Strong updates are permitted, but only for
\emph{unaliased} locations; the live \code{alias} is rejected outright, so the safe program cannot be
written without an \code{unsafe} or \code{Rc{<}RefCell{>}} escape hatch.
\item \emph{TypeScript (gradual).} Control-flow narrowing may change \code{box}'s static type
locally, but the refinement is discarded across the \code{coerce} call and never propagates to
\code{alias}; the code type-checks only via an unchecked cast \code{(!alias~as~Int)} that silently
escapes the type system.\footnote{See e.g. TypeScript issues \#35972 and \#51851 on narrowing being
lost across calls and bypassed under aliasing.}
\item \emph{Python (dynamic).} The mutation is permitted but wholly unchecked: if \code{coerce} is
omitted or misordered, \code{!alias\,{+}\,1} raises a runtime \code{TypeError}, an \code{\Err}
our framework rules out statically.
\end{itemize}
No mainstream type system soundly tracks the stored type of an \emph{aliased} heap location as it
changes; our separation types, combining full ownership with must-aliasing, close exactly this gap.
This use of a variable singleton to record aliasing is not itself new; it underpins Scala's
path-dependent types and their formal calculus pDOT~\cite{RapoportL19}, where a singleton
\code{x.type} states that two paths denote the same object. There, however, singletons are confined
to \emph{immutable} (stable) paths, precisely because a value's type never changes, so they support
only type-\emph{equality} reasoning. We instead pair the variable singleton with a separation type
carrying full ownership, which licenses a \emph{type-changing} update of the aliased
cell, the capability those systems by design cannot support.

\hide{
  In a traditional type system,
we use
typing relation:
\coden{\Gamma \vdash e: t}, where \coden{\Gamma} denotes
the type environment of free variables that may occur
in expression \coden{e}. This type judgement is
flow-insensitive. To build a flow-sensitive type system,
we will need to use a more complex typing relation,
namely \coden{\Gamma \vdash e: t,~\Gamma'} where
\coden{\Gamma'} capture type mutations which have
occurred for \coden{\Gamma}. However,
 \coden{\Gamma, \Gamma'} also relies on the use of linear types
(e.g. uniqueness types), so
that it can be safely mutated.
\trung{I find that this challenge is less convincing than the previous ones, since the rule like \coden{\Gamma \vdash e: t,~\Gamma'} is quite popularly used. Similar rules are also used in verification systems.

  Can we address this challenge from the specification perspective instead? For example, we can say that the type signature cannot represent or cannot capture the flow-sensitivity, while the type specification can.}
\wnsay{there have been extension to type system to address flow-sensitivty.
  saying that type-signature is a problem may be considered to be superficial argument. what we really need is separation type}
}
In this paper, we develop a
Hoare logic framework for type safety in
which
{\em separation} types (with must-aliasing) provide flow-sensitivity,
while flow-insensitive {\em pure} types
(with arbitrary aliasing) are used whenever type mutation is not required.

\hide{
  allows program (typing)
state to contain disjunction
and be flow-sensitive (for separation types
with support for must-aliasing).
Flow-sensitive types are permitted in many languages (such as Python), which makes it challenging to provide
static type safety for dynamically-typed programs, such as
?.
}
\hide{
Let us now express each method's type specification in terms of Hoare logic's pre/post-conditions.
Functions with typical type declarations
can be easily assigned a Hoare-style specification, which is
verifiable in our proof system.
We modify the precondition (the require clause -- \code{req})
and the postcondition (the ensure clause -- \code{ens})
to capture names for each input and output of functions.
Thus, each pair of \code{req/ens} clause define a function type.
For example, the function type \code{t_1{\ra}t_2}
could be immediately translated to \code{req[x]~x~{:}~t_1~ens[r]~r~{:}~t_2},
where \code{x} and \code{r} are now variables to denote the function's input and output.
Specifically, the \code{req} clause captures the function parameter \code{x} and its type \code{t_1},
while the \code{ens} clause gives a name \code{r} for the function's result and
its type \code{t_2}. Hence, the corresponding Hoare-style specifications for the two functions can
be given as,

\noindent
\begin{minipage}{.8\textwidth}
\begin{lstlisting}[numbers=none]
  (+) : req[x,y] x:Int (*@$\land$@*) y:Int ens[r] r:Int
\end{lstlisting}

\begin{lstlisting}[numbers=none]
  (!) : (*@$\forall$@*)T. req[x] x:Ref(T) ens[r] r:T
\end{lstlisting}
\end{minipage}

Now, consider two user methods, \code{inc} and
\code{global\_add}, with the latter increasing the input by a value stored in
a global variable \code{g} whose type must be known when type-checking the body of the
\code{global\_add} method. Their type signatures and implementations are as follows.

\noindent
\begin{minipage}{.8\textwidth}
\begin{lstlisting}[numbers=none]
  inc : Int (*@$\rightarrow$@*) Int
  inc x = x + 1;
\end{lstlisting}

\begin{lstlisting}[numbers=none]
  global_add : Int (*@$\rightarrow$@*) Int
  global_add x = x + !g;
\end{lstlisting}
\end{minipage}


In our proposal, we will verify the code modularly.
By modular verification, we mean that the functions will just depend on the pre/post type specification provided without
referring to the global context.
Suitable modular type specifications
for \code{inc} and \code{global\_add} are as follows.

\noindent
(??)\begin{minipage}{.5\textwidth}
\begin{lstlisting}[numbers=none]
inc : req[x] x:Int ens[r] r:Int
\end{lstlisting}

\begin{lstlisting}[numbers=none]
global_add : req[x] x:Int (*@$\land$@*) g:(*@$\gtype{\code{Ref}}{\code{Int}}$@*)
             ens[r] r:Int
\end{lstlisting}
\end{minipage}

Take note that the Hoare-style type specification clearly captures the
expected type \code{\gtype{Ref}{Int}} of global variable \code{g} in our method's
precondition. In contrast, the type signature of \code{global\_add} is completely
silent on the required type for the \code{g} global variable,
and had to rely on the type environment built and passed by the type
system.

\label{ex:apply}
We can also handle higher-order functions,
such below,
where \code{A{\ra}B} is a shorthand
for \code{(req[w_1]~w_1{:}A~ens[w_2]~w_2{:}B)}.

\begin{lstlisting}[numbers=none]
apply : (*@$\forall$@*)A,B. req[f] f:(A(*@\ra@*)B) (req[x] x:A ens[r] r:B)
apply f x = f x
\end{lstlisting}

\noindent
\begin{minipage}{.8\textwidth}
\begin{lstlisting}[numbers=none]
stateful_add : req[x] x:Int (*@$\land$@*) (*@\heapto{\code{g}}{\code{\gtype{Ref}{\Any}}}@*)
               ens[r] r:Int (*@$\land$@*) (*@\heapto{\code{g}}{\code{\gtype{Ref}{Int}}}@*)
stateful_add x = g := x; !g + x
\end{lstlisting}
\end{minipage}


}
\hide{
  This method performs a type mutation from
\code{\heapto{g}{\gtype{Ref}{\Any}}~}
  to \code{\heapto{g}{\gtype{Ref}{Int}}}, which typical type systems cannot handle.
\code{\Any} is the most general type that
can be passed as methods' arguments or stored inside data structures.
The traditional type system typically assumes
\code{g} to be \code{\gtype{Ref}{Int}} both before and after \code{stateful\_add}.
In order to support flow-sensitivity in type specification,
we propose a novel type called \spatial~type, denoted by 
\code{\heapto{x}{t}}, 
to state that
\code{t} is the type of heap memory object referenced by variable \code{x}.
This new type is adopted from separation logic\cite{DBLP:conf/lics/Reynolds02}.
In our example, \code{\heapto{g}{\gtype{Ref}{\Any
}}},
the variable \code{g} is
pointing to a mutable type that is allowed to change entirely,
including being explicitly deallocated (see Sec.~\ref{sec:spatial} later)! Subsequently, the state change in types (from \code{\gtype{Ref}{\Any
}} to \code{\gtype{Ref}{Int}}) is also captured in the
post-condition. Moreover, such state changes are
assumed to work under \emph{must-aliasing},
where strong updates are possible.



\noindent




Our type specification is both {\em modular} and {\em flow-sensitive},
in that we only require its precondition to be satisfied at each of its methods' invocations
(rather than its method's declaration, as assumed by type systems).
Flow-sensitive capability for mutable objects in
heap memory is needed for better precision,
and can be facilitated by {\em separation types}.
}

\hide{
  Some readers may feel that pre/post type specifications are
considerably more verbose than type signatures of functions without the use
of global variables. We view type signatures as concise
shorthands in support of the more expressive type specifications.
As an example, the type
\coden{(A{\ra}B){\ra}A{\ra}B} can be naively encoded
into the following nested pre/post type specification:
\[\coden{req[f]f{:}(req[x_1]x_1{:}A~ens[r_1]r_1{:}B)}\]
\[\coden{ens[r]r{:}(req[x_2]x_2{:}A~ens[r_2]r_2{:}B)}\]


\noindent However, we can use abbreviations, as illustrated
below, where some namings are omitted
\[\coden{req[f]f{:}(req[x_1]x_1{:}A~ens[r_1]r_1{:}B)~(req[x_2]x_2{:}A~ens[r_2]r_2{:}B)}\]
\[\coden{req[f]f{:}(A{\ra}B)~(req[x_2]x_2{:}A~ens[r_2]r_2{:}B)}\]
\[\coden{req[f]f{:}(A{\ra}B)~ens[r]r{:}(A{\ra}B)}\]
}
\hide{
Due to increasing demands placed on type systems,
we propose in this paper a new approach for type-based verification
that utilizes the more expressive Hoare logic framework.
In particular, Hoare logic is capable of both path-senstivity
(via the use of disjunctive formulae) and flow-sensitivity (via
the mutation of type states). Moreover, we can also
support higher-order functions with the use of
nested Hoare triple \cite{?}
and use a recent but precise staged logic \cite{?} framework
to support automated verification of imperative higher-order functions.

Statically-typed programming are being imbued with new type
features, such as more advanced types (e.g. GADT),
semantic-subtyping (e.g. MLsub, MLstruct) and modal-types
(e.g. OxCaml to support memory reuse).

It can be used to prevent simple/silly errors from taking
root in user programs.
}

\hide{
More could be done to confirm the
practicality and viability of our framework
for expressive type-safety via Hoare logic.
We remain committed to making progress in
this direction.}

\hide{
  As the main technical contribution,
this paper shall show that Hoare logic is expressive
enough to handle the type specifications/declarations
used (and inferred by) by semantic-based
subtyping approaches
proposed by \cite{Lionel:OOPSLA22,Castagna:POPL24}.
Moreover, we are also able to express type 
specification
for 
{\em field mutation} via {\em variant-based} typing
that was omitted
by \cite{Doplan:POPL17,Lionel:OOPSLA22,Castagna:POPL24}.
To support statically-typed languages, such as Haskell,
this paper shall also show that
our Hoare-logic approach can support 
stronger type safety guarantees,
including the use of {\em type-classes} and {\em GADT}.
This 
is achievable
by extensions to our type-based 
logics, but without changing the underlying Hoare logic
proving framework.
Lastly, while type inference is pursued for
statically-typed programs in \cite{Lionel:OOPSLA22}
and for dynamically-typed programs in \cite{Castagna:POPL24},
this paper shall focus on {\em only} 
automated {\em type-checking} (or proving)
for both statically-typed and dynamically-typed programs.

}
\hide{
\begin{verbatim}
 First result (to CAV25 or ICFP25?)
  - Hoare logic for comprehensive typing for both
      of static languages - type-classes + GADT
      of dynamic languages - more permissive/precise specs
  - use of structured spec and singleton types
  - Type hieracrchy : Bot <: Err <: Types <: Any <: Top
                      Bot <: Exc <: Abrt <: Top
    Allows tracking of Err/Abrt as outputs
  - match construct for pattern-matching and type testing
  - richer case specs to support more expressibe path sensitivity
  - more precise function subtyping relation
  - Ref type with variant-based subtyping
\end{verbatim}
}


A further benefit of the distinct \code{Err}/\code{Abrt} hierarchy is more precise error reporting:
an \code{Err} is a recoverable value that may live in a well-typed program, whereas an \code{Abrt}
marks a genuine type error flagged as a compilation error, rather than collapsing both into a
single opaque ``type error''.

Our main contributions are:
\begin{itemize}[nosep]
\item We propose a
Floyd-Hoare logic approach to type-safety verification, and introduce separation types to support
must-aliasing and flow-sensitive type mutation. Used with flow-insensitive pure types, these two
provide a unified framework with wider coverage.

  \item Our approach supports
  path-sensitivity via
  case specifications, and strengthened preconditions
  via type predicates,
  subsuming GADTs and liquid types. 
\item We distinguish
  recoverable runtime errors (\code{\Err})
  from
  compile-time errors (\code{\Abrt}), absent in well-typed programs, and fold
   exceptions (\code{\Exc}) as first-class types.
  \hide{
  \item To support generic types with subtyping, we allow variance
  annotations for both pure and separation types.
  Variance annotation can support more precise subtyping
  based on how values are flowing into and out of data structures.
  For separation types that are unchanged in
  the post-condition, we use the borrow mechanism
  for more 
  precise type specifications.
  }
\item We design a small core language, define its operational
  semantics, and construct a \emph{single} type logic in which
  \emph{separation} types and \emph{pure} types coexist as first-class citizens.
  Distinctively, the two are related by a controlled \emph{one-way} conversion: a separation
  type can be weakened to its corresponding pure type (giving up heap ownership), but
  the weakening is \emph{irreversible}, since ownership once released cannot be reclaimed.

\item We give a self-contained machine-checked
  soundness proof in Lean~4 (Sec.~\ref{sec:lean}) across the pure, separation, and
  higher-order fragments, proving well-typed programs never abort.

\item \emph{Expressive, certified type-checking.} From the mechanisation we obtain a
  \emph{self-certifying type-checker}: a checker proved sound once against the Hoare rules,
  certifying each program by evaluation. One automatic checker covers both fast pure-type
  checking and expressive separation-type and predicate specs, so automation and
  expressiveness are not traded off.

\end{itemize}

\paragraph{Positioning.}
Our framework occupies a deliberate point on two axes. On the
\emph{automation} versus \emph{expressiveness} axis it sits between automatic SMT-based
separation-logic provers and interactive program logics such as Iris. Like the
former it is fully automatic and, in fact, needs \emph{no} SMT solver (entailment
is a lightweight, custom procedure). Yet like the latter it carries a
machine-checked soundness proof, discharged \emph{once} for the rule set
(Sec.~\ref{sec:lean}), not per program. On the \emph{property} axis it targets
\emph{type safety} (semantic subtyping, flow-sensitive type mutation, and the
\code{Err}/\code{Abrt} distinction) rather than arbitrary functional
correctness; scoping to type properties is precisely what buys back full
automation. The payoff is a single checker that is at once automatic, SMT-free,
and certified sound, while reaching properties no mainstream type system offers:
the coexistence of separation and pure types under a one-way weakening, and the
subsumption of GADTs and liquid types within one type logic. Concretely, for the \emph{checking} direction, this
answers in the affirmative whether such
a logic could serve as the type system of a compiler that verifies well-typedness
automatically. App.~\ref{sec:leanchecker} realises the equality fragment of a
decidable core (App.~\ref{sec:decidable}), and, for the mechanised
\code{SortedList} flagships' obligations, the inequality fragment, as a
machine-checked, executable checker
that certifies each program's typing derivation by reflection
(Sec.~\ref{sec:checkerlia}, \ref{sec:checkerlia-ineq}). Deciding the inequality fragment in general, and type
\emph{inference}, remain future work. Detailed comparisons appear in
Sec.~\ref{sec:compare} (GADTs, liquid types) and Sec.~\ref{sec:related}
(related logics).

\paragraph{Paper structure.}
The rest of our paper is organised as follows.
Sec.~\ref{sec:motivate} covers the novel aspects of our
type logic via examples and can be read independently as
an overview.
Sec.~\ref{sec:lang} presents a core language and
the type logic 
that 
supports 
expressive type specifications 
for both statically-typed
and dynamically-typed languages.
Sec.~\ref{sec:Hoare} presents 
forward-style Hoare rules for type safety, building on 
Sec.~\ref{sec:lang}.
Sec.~\ref{sec:compare} shows how our type predicate framework
subsumes both GADTs and liquid types.
Sec.~\ref{sec:Soundness} defines the semantics of
our type logic and establishes
soundness of the Hoare rules. 
Sec.~\ref{sec:lean} presents the machine-checked Lean mechanisation.
Sec.~\ref{sec:related} discusses related work.
The appendices cover additional features and extensions.
\hide{
  We also present our solutions for stronger types,
GADT (see \appref{sec:GADT})
and type-classes (see \appref{sec:typeclasses}).
Related works are discussed in Sec.~\ref{sec:related}.
Sec.~\ref{sec:exception} shows how
checked and unchecked exceptions are handled in our core language..
}


\section{Improvements to Type-Safety via Hoare Logic}
\label{sec:motivate}

The following four sections introduce four novel ingredients
of our proposal.
Case specifications
(Sec.~\ref{sec:semsubtype}) give path-sensitivity.
Separation types (Sec.~\ref{sec:flow-sen})
provide the expressive power needed for flow-sensitive heap reasoning.
The \code{\Err}/\code{\Abrt}
error hierarchy (Sec.~\ref{sec:error}) makes error
coverage explicit, verifiable and
customisable to the precision required by the caller. Type predicates (Sec.~\ref{sec:GADT}) capture data structure
invariants. Although independent contributions, all four compose coherently, as we summarise
at the end of Sec.~\ref{sec:motivate}.


\hide{
  This section presents four novel aspects of our Hoare-logic approach to type safety, each
addressing a different limitation of standard type systems. Section 2.1 shows how semantic
subtyping, lifted into a logic of types under Hoare-style reasoning, enables path-sensitive type
specifications via case specifications. Section 2.2 introduces {\em separation types} for flow-sensitive
heap reasoning and shows how they interact with pure types and aliasing. Section 2.3
introduces a disciplined error hierarchy -- distinguishing recoverable runtime errors (Err) from
compile-time aborts (Abrt) -- and explains why this distinction is essential for our refined notion
of type safety. Section 2.4 introduces {\em type predicates} as a complement to algebraic data types .....
These three ingredients are independent contributions, but they work together:
case specifications give path-sensitivity, separation types give flow-sensitivity, error
hierarchy makes the coverage of type safety explicit and
{\em type predicates} provides a systematic way to support
data structure invariants.
}

\hide{We close
the section with a combined example (Section 2.5) that uses all four ingredients simultaneously, previewing
how they interact before the formal machinery of Section 3 begins.}

\hide{
  The first two features are \underline{adapted} from existing
type systems. 
\trung{I think if we emphasize that 2/3 of the features are adapted from existing works, then we will be complained for the lack of novelty.}
The last feature cover our \underline{novel} flow-sensitive
separation types and how they are related to pure types.
\hide{
\begin{wrapfigure}{r}{0.38\textwidth}
  \begin{tikzpicture}[x=1.8cm,y=1.0cm]
\node at (0,-1.5)        (Bot)       {$\bot$};

\node at (-0.5,-0.5)     (botSys)    {\botsys};

\node at (-0.5,0.5)      (UnExc)     {\UnExc};
\node at (0.5,0.5)       (ChExc)     {\ChExc};
\node at (0.5,2.4)       (Exc)       {\Exc};

\node at (1.5,1.5)       (Abrt)      {\Abrt};
\node at (-1.5,1.5)      (Err)       {\Err};

\node at (-1.0,2.4)      (AnyP)      {$\AnyP$};
\node at (-0.5,2.4)      (AnyS)      {$\AnyS$};
\node at (-0.5,3.3)      (Any)       {\Any};

\node at (0,4.2)         (Top)       {$\top$};

\draw[->] (Bot)  --  (botSys);
\draw[->] (Bot)  --  (Abrt);
\draw[->] (botSys)  --  (UnExc);
\draw[->] (botSys)  --  (ChExc);
\draw[->] (botSys)  --  (Err);
\draw[->] (UnExc)  --  (Exc);
\draw[->] (UnExc)  --  (AnyP);
\draw[->] (UnExc)  --  (AnyS);
\draw[->] (ChExc)  --  (Exc);
\draw[->] (Err)  --  (AnyP);
\draw[->] (AnyP)  --  (Any);
\draw[->] (AnyS)  --  (Any);
\draw[->] (Any)  --  (Top);
\draw[->] (Exc)  --  (Top);
\draw[->] (Abrt)  --  (Top);
\end{tikzpicture}

\caption{Lattice of Types}
\label{fig:lattice}
\end{wrapfigure}
}

\trung{I realize that our presentation from Section 2 to Section 3 is a bottom-up approach. We describe many details in Section 2 (which can be hard to follow for the reviewers since there are many details), and then will only present the overall language \& type design in Section 3. The reviewers might base on the fact that some of the details in Section 2 are adopted from existing works to reject our paper.

How about we follow a top-down approach by first presenting the overall language \& type design (move Section 3 to Section 2), and then present the details of each feature in Section 3 (which is currently Section 2). This way, in Section 2, we can claim that our language and type design is novel, and then in Section 3, we can elaborate on each feature (some of which are adapted from existing works). This way, the reviewers cannot complain about the lack of novelty of the overall framework.
}
\wnsay{You have a good point.
  Let me rewrite it to highlight novel aspects of these
  three points instead.}
}
\hide{
  \trung{When reading into this section, I realize that if we can address the challenges in the Introduction to align with these 3 features, then we can have a more consistent and coherent story about our type specification framework.}
\wnsay{Error types and semantic subtying are not covered in Intro, since they
  are important but not significant contributions}
}










\subsection{Semantic Subtyping, Path-Sensitivity and Case Specifications}
\label{sec:semsubtype}
Semantic-based subtyping interprets each type \code{t}
as a set of values, and  defines subtyping
\code{t_1{<:}t_2} 
via
set containment \code{Set(t_1){\subseteq}Set(t_2)}.
This lets types be composed using a Boolean algebra:

\centerline{$t := t\lor t\ |\ t\land t\ |\ \neg t\ |\ ...$}

Consider the ML-style function below. ML's type system rejects it because unification fails when it tries to
combine the \code{Int} output of one branch with the
\code{Str} output of the other:

\begin{lstlisting}[numbers=none]
inc_if_int(x) = match x of { Int -> x+1;
                             _   -> "Not supported!" }
\end{lstlisting}


Semantic subtyping resolves this: one valid formulation assigns the output type as \code{Int {\vee} Str},
yielding the following type signature:

\begin{lstlisting}[numbers=none, commentstyle=\color{gray}]
inc_if_int :  Any (*@$\rightarrow$@*) Int (*@$\lor$@*) Str
\end{lstlisting}

The most precise set-theoretic type is an intersection of
arrow types, \code{(Int {\to} Int) \wedge (\neg{Int} \to Str)}
\cite{DBLP:books/sp/24/Castagna24},
which our case specification below directly mirrors:


\begin{lstlisting}[name=map, numbers=none]
inc_if_int : case[x] { x:Int  (*@${\caseRA}$@*) ens[r] r:Int;
                       x:(*@$\neg$@*)Int (*@${\caseRA}$@*) ens[r] r:Str }
\end{lstlisting}

Case specifications differ from intersection types in one crucial respect: disjointness is a
verified obligation, not a programmer convention. The type checker rejects any case
specification whose guards overlap, making the specification exhaustive and non-redundant by
construction. A programmer writing the intersection type
 \code{(Int {\ra} Int) \wedge (\Any {\ra} Int \vee Str)} receives
no such check: the overlapping cases pass silently. This enforcement is what makes the
otherwise clause and the \code{\Abrt} discipline of Sec.~\ref{sec:error} well-defined.

\hide{
\paragraph{Disjointness and the \code{\Abrt} discipline.} This disjointness requirement does more than aid readability: it is what
makes our error-handling tractable. Because cases are pairwise disjoint, any input that falls outside all stated cases
is flagged as a compile-time error (\code{Abrt}) via an implicit otherwise clause (see Section 2.3). Without disjointness, it
would be unclear which case to blame for an overlap, and the otherwise clause would be unreliable. Thus the
design of case specifications is tightly coupled to the refined motto of Sec.~\ref{sec:error}: {\em well-typed programs must never
abort}.
}
\hide{
Under Hoare logic, the same function can also be expressed as:

\begin{minipage}{1.1\textwidth}
\begin{lstlisting}[numbers=none]
(* Type specification of inc_if_int *)
inc_if_int : req[x] x:Any ens[r] (x:Int(*@$\land$@*)r:Int)(*@$\lor$@*)(x:(*@$\neg$@*)Int(*@$\land$@*)r:Str)
\end{lstlisting}
\end{minipage}

\noindent
where relational type properties between inputs and outputs are captured via disjunction. In path-sensitive
analysis, case specifications are more informative than this disjunctive form because they explicitly identify the
cause–effect relationship between the type of the input \code{x} and the type of the output \code{r}.
}

\label{sec:path}

\hide{
\begin{verbatim}
Using intersection type (from POPL24), it
is possible to infer:
  f : (Int -> Int)
    /\ (~Int -> Str)

Alternatively, we could also use multi pre/post
which has the same expressivity as intersection type.

  f : req[x] x:Int ens[r] r:Int
    /\ req[x] x:~Int ens[r] r:Str

Better still, we can use case-specification..
as shown below, which allows more complete reasoning
(with the help of case analysis) to be supported.

  f : case[x] { x:Int ens[r] r:Int;
               x:~Int ens[r] r:Str }

valid types = primitives, data constructor, record types

multiple inheritance allowed so that traits and
interfaces may also be supported.

recent work in this space has led to
advancement in type inference algorithms for
semantic-based typing

Our proposal here advancement in spec and verification setting..
\end{verbatim}
}




\subsection{Flow-Sensitivity and Separation Types}
\label{sec:spatial}
\label{sec:flow-sen}

Many languages support pass-by-value-result
parameters (e.g. \code{inout} parameters in Swift) that allow variable
arguments
to be mutated by a
method call. Dynamically-typed languages go further, allowing
the type of a variable to change arbitrarily, as follows:





\begin{lstlisting}[numbers=none]
  inc_transform (inout x) = x := str_of_int(x+1)
\end{lstlisting}

One might suggest \code{Int{\vee}Str\,{\ra}\,\texttt{()}} as the type signature, but this is not
type-safe. Standard type systems (including Swift's) are flow-insensitive, so the one type assigned
to \code{x} must serve both as the \code{Int} argument to \code{(+)} and as the \code{Str} value finally
stored, and \code{Int{\vee}Str} satisfies neither.

To type such a program, we make the mutation explicit in the heap: the mutable parameter
\code{x} is modelled as a heap cell of type \code{Ref(\_)}, dereferenced by the body
(\code{!x}), which writes back the string form via a strong update.
\begin{lstlisting}[numbers=none]
  inc_transform(x) = let v = str_of_int(!x+1) in update(x,v)
\end{lstlisting}
This write changes the cell's \emph{type}, not merely its value: \code{x}
points to a \code{Ref(Int)} on entry and to a \code{Ref(Str)} on exit, so a type system must
support a \emph{flow-sensitive type change on a heap location}. Separation types capture exactly this, recording the owned cell's type before and
after the call:
\begin{lstlisting}[numbers=none]
inc_transform : req[x] x(*@$\mapsto$@*)Ref(Int) ens[_] x(*@$\mapsto$@*)Ref(Str)
\end{lstlisting}
The precondition owns \code{\heapto{x}{Ref(Int)}} and the postcondition the strongly-updated
\code{\heapto{x}{Ref(Str)}}, a change that neither a single flow-insensitive type for \code{x}
nor a plain \code{Ref(Int{\vee}Str)} can express.



\hide{
  This use of primed variables
may also be used to track type mutations that can occur
for each local mutable variable .
\trung{I feel like the previous sentence repeats the last two sentences in the previous paragraph. Maybe we can remove it?}
\wnsay{actually, I meant to say local mutable variable
  in addition to inout parameters but maybe we can omit
  since our core language currently uses immutable let binding}
}
\hide{Such (stack allocated) variables can be directly mutated as they
are never\footnote{unless we allow the
address-of \code{\&} operator on variables, as is the case for
the C language}
aliased with another variable.}

\paragraph{Where existing type systems stand.} As the introduction's comparison showed, existing
systems handle such type-changing updates only partially, if at all: none soundly tracks the stored
type of a \emph{heap} location as it changes under aliasing. Closing this gap requires accounting for
must-aliases among heap locations. Linear (uniqueness) types sidestep it by guaranteeing a single
reference per location and no aliases at all, but thereby rule out programs where aliasing is
intentional. We instead introduce a {\bf separation type}, written \code{\heapto{x}{T}}, which carries
full ownership of the heap location at \code{x} while still tracking must-aliases (references
guaranteed to point to the same location) explicitly. Full ownership is what makes a strong update
sound: every alias of an owned location is a known must-alias, so all aliases observe the new type
consistently. By contrast, pure types permit arbitrary (unknown) aliasing and so forbid type mutation.

\paragraph{Why must-aliasing matters: a preview.} We first highlight the key feature that
distinguishes separation types from uniqueness types. Consider the \code{swap} function:
\begin{lstlisting}[numbers=none]
 swap(x,y) = let v1=!x in let v2=!y in
             update(x,v2); update(y,v1)
\end{lstlisting}
The interesting case is when \code{x} and \code{y} must-alias the same location, making the swap a
no-op: uniqueness types cannot express it (they forbid aliasing), whereas separation types capture it
via the singleton \code{y{:}\{x\}} (meaning \code{y{=}x}). We give the full case specification below.



\paragraph{Formal specifications.} Separation types yield stronger, flow-sensitive specifications for the core
heap operations. We also introduce spatial conjunction \code{\sep} for pairs of disjoint heap locations:

\begin{lstlisting}[numbers=none]
mkRef: req[x] x:(*@$\topI$@*) ens[r] r(*@$\mapsto$@*)Ref(x)
(!): (*@$\forall$@*)T:(*@$\AnyP$@*),Q:(*@$\topI$@*). case[m] { m:Ref(T) (*@${\caseRA}$@*) ens[r] r:T;
                            m(*@$\mapsto$@*)Ref(Q) (*@${\caseRA}$@*) ens[r] m(*@$\mapsto$@*)Ref(Q) (*@$\land$@*) r:Q}
update: (*@$\forall$@*)T:(*@$\AnyP$@*). case[m,v] { m:Ref(T) (*@$\land$@*) v:T (*@${\caseRA}$@*) ens[r] r:();
                            m(*@$\mapsto$@*)Ref(_) (*@$\land$@*) v:(*@$\topI$@*) (*@${\caseRA}$@*) ens[r] m(*@$\mapsto$@*)Ref(v) (*@$\land$@*) r:()}
\end{lstlisting}

Here \code{Any} is a top type, split into the pure types \code{Any_P} (arbitrary aliasing) and the
separation types \code{Any_S} (exclusive heap ownership, covering mutable data).

Throughout, lowercase identifiers (\code{x,y,u,v}) denote value variables and single uppercase letters
(\code{T,R,A}) type variables. Accordingly, \code{x{:}t} is used in two ways: for a value variable \code{x}, it asserts \code{x} has type \code{t}; for a type variable \code{T}, the
bounded quantifier  \code{\forall T{:}R} means  \code{T} ranges over subtypes of  \code{R}; subtyping is otherwise written
 \code{T {<:} R}. Thus \code{x{:}Ref(T)} is a pure type assertion on a value. A reference \code{Ref(T)} stores a
value of type \code{T}; \code{Ref(u)} (as in \code{\heapto{x}{Ref(u)}}) stores the specific value \code{u}, the
singleton \code{Ref(\{u\})}, and \code{\heapto{x}{Ref(y)}} asserts heap ownership where \code{y} is a type or
singleton \code{\{v\}}. Predicate membership \code{x{:}P(\ldots)} uses the same colon: \code{x} is a value of the
(inductive) predicate \code{P(\ldots)} (Sec.~\ref{sec:pred}).

The precondition of \code{mkRef} is maximally weak (\code{Any} covers both pure and separation types), its
postcondition maximally strong. The two-case structure of \code{update} reflects the pure vs.\ separation distinction, made precise by
Definitions~\ref{def:type-consistent-pure} and~\ref{def:type-consistent-sep} below: a pure \code{m{:}Ref(T)}
permits only a type-consistent write, returning \code{()} unchanged; owned \code{\heapto{m}{Ref(\_)}} permits
type mutation, yielding \code{\heapto{m}{Ref(v)}}. A single specification serves both regimes, the case
guard selecting which applies.

\begin{defn}[Type-Consistent Mutation for Pure Types]
\label{def:type-consistent}
\label{def:type-consistent-pure}
A write update(x,v) to a location \code{x{:}Ref(T)} in pre-state \code{\ctr}
is {\em type-consistent} if \code{v{:}T}.
\end{defn}

\noindent For pure types, changing the stored type (\code{T} to \code{T_2{\neq}T}) is disallowed:
arbitrary aliasing would let other references to \code{x} observe an inconsistent type.

\begin{defn}[Type Mutation for Separation Types]
\label{def:type-consistent-sep}
Type mutation at a location
\code{\heapto{x}{Ref(T)}} is the replacement of the stored type \code{T} with a distinct type \code{T_2}, effected by a write
\code{update(x,v)} where \code{v{:}T_2}. The postcondition reflects the updated type: \code{\heapto{x}{Ref(T_2)}}.
\end{defn}

\noindent This is sound because full ownership rules out any unknown alias of \code{x} in the pure heap.

\hide{The pre-condition of \code{mkRef} is the weakest possible,
with \code{\topI} denoting either pure or separation
types.
The
post-condition of \code{mkRef} is now stronger since
\code{\heapto{r}{{\gtype{Ref}{T}}}} with pure type \code{T}
can be irreversibly weakened to
  \code{\typeof{r}{\gtype{Ref}{T}}}, if desired.
 Moreover, the type specifications of \code{update}
 and \code{(!)} now have extra pre/post specifications
 using separation types, that are potentially flow-sensitive.
 Each separation type \code{\heapto{m}{\gtype{Ref}{q}}} may either have
 pure component, as denoted by \code{\heapto{m}{\gtype{Ref}{q{:}T}}} \code{\equiv} \code{\heapto{m}{\gtype{Ref}{q}}{\wedge}\typeof{q}{T}}, or
 separation component,
 as denoted by \code{\heapto{m}{\gtype{Ref}{\heapto{q}{T}}}}
\code{\equiv} \code{\heapto{m}{\gtype{Ref}{q}}{\sep}\heapto{q}{T}}.
These scenarios allow for more expressive type specifications.
}
\hide{
  With separation types, we may now
support either programmer-managed memory (e.g., C)
or compiler-managed memory (e.g., Rust or OxCaml).
In the presence of (flow-insensitive)
pure types, we would also require garbage collection
for heap recovery (e.g., Haskell).
Explicit deallocation of heap locations
can be carried out
using 
a \code{free} method with the following type specification.
\hide{
  \trung{I feel like we suddenly mention "garbage collection" is a bit out of context. Maybe we can remove the "garbage collection" part since we don't really discuss it later?}
\wnsay{I explain the approaches above}
}
\begin{lstlisting}[numbers=none]
free: req[m] m(*@$\mapsto$@*)Ref(_) ens[r] m:Dead (*@$\land$@*) r:()
\end{lstlisting}


 On each successful \code{free}, the 
 separation
type \code{\heapto{m}{\gtype{Ref}{\_}}} would be mutated to \code{m{:}Dead}
 denoting a dangling reference.
Each instance of \code{m{:}\Dead} is 
considered a run-time error value, namely \code{\Dead{<:}\Err},
that
must be explicitly tracked by our type system.
This ensures that dangling references
are never wrongly accessed.
This use of separation types also applies to immutable
data constructors as well.
Memory of \code{\heapto{xs}{\gtype{List}{T}}} can be similarly
freed and reused
at compile-time, while that of pure type
  \code{xs{:}{\gtype{List}{T}}} would have to be garbage-collected.
  For simplicity, we require
  programmers to manually insert \code{free} commands, though it
  should be possible to support automatic insertion,
  as guided by the use of \code{Dead} types.
  }

\hide{
  \wnsay{I suppose \code{\Dead} cannot be type-tested since
  we have already reused the original memory space for other purposes.
  It is similar to an error state but is a compile-time entity.
  It cannot be passed as input to functions and it cannot
  be pattern-matched. In this case, free cannot return a boolean.
  The only exception if we use indirection pointers.
}
\begin{verbatim}
Dead is ghost type
Dead can also be a real type (but need indirection
pointers in Heap)

foo(x) = (match x with Ref(v) -> free(v));free(x)
foo : req[x] x->Ref(v)*v->Ref(_)
      ens[r] r:() /\ x:Dead;

foo2(x) = (match x with Ref(v) -> free(v))
foo2 : req[x] x->Ref(v)*v->Ref(_)
       ens[r] r:() /\ x->Ref(Dead);

x:Dead can be a ghost type or a real type.
    Dead <: Ghost
	Dead <: Err

Ghost types cannot be pattern-matched, but
real types can be pattern-matched. To support
Dead as a real type, we will need to provide
indirection pointers to each data types that may
have to be explicitly deallocated. This will allow
us to identify each pointer reference as either dead
or alive.

Can we have x->Int and how is this different from x:Int?
*Int vs Int is quite different
For any x->T, T must be an object type and not be a primitive.
\end{verbatim}
}



\paragraph{Irreversible weakening.} Separation and pure types occupy two disjoint heap partitions. The
{\em separation heap} gives exclusive ownership \code{\heapto{x}{T}} (full rights to read, mutate the
type, and track must-aliases); the {\em pure heap} gives only shared
access \code{x{:}T} (no ownership, and type-consistent writes only). Accordingly, a separation type
\code{x{\mapsto}T} for heap memory
is \emph{not} a subtype of the corresponding pure type \code{x{:}T}: the two are
mutually disjoint, connected
only by a one-way, irreversible weakening, not by subtyping:

\vspace{-0.5em}
\begin{small}
\[
\begin{array}{lll}
  \code{x{\mapsto}T} ~\land~\code{T\,{<:}\,A}~\land~ \code{A\,{<:}\,Any_P} 
  & \weaken &  \code{x:A}
\end{array}
\]
\end{small}

Since this weakening is the sole crossing rule, a separation type may nest pure components but never
the reverse, so well-formedness holds by construction, with no separate check required.
Once ownership is surrendered via weakening, the right to type-mutate is permanently lost:
\code{\heapto{r}{Ref(T)}} (from \code{mkRef}) can be weakened to  \code{r:Ref(T)} when type mutation is no
longer required. The reverse is disallowed, since other aliases may already exist. This relationship is captured by the equivalences below (\code{\equivA} is semantic equivalence); it differs for primitive and non-primitive types:
\[
\begin{array}{ll}
\text{for a primitive } \code{p} \text{ (e.g.\ } \code{Int}, \code{Str}\text{):} & \code{x{\mapsto}p} ~\equivA~ \code{x:p} \\[0.4em]
\text{for a non-primitive heap } \code{T} \text{ (e.g.\ } \code{Ref(\ldots)}\text{):} & \code{x{\mapsto}T} ~\land~ \code{x:T} ~\equivA~ \code{\false}
\end{array}
\]

\paragraph{Must-aliasing: the swap example.} We can now give the case specification for swap:
\begin{lstlisting}[numbers=none]
swap:(*@$\forall$@*)T:(*@$\AnyP$@*),A,B:(*@$\topI$@*). case [x,y] {
   x:Ref(T) (*@$\land$@*) y:Ref(T) (*@${\caseRA}$@*) ens[r] r:();
   x(*@$\mapsto$@*)Ref(A) * y(*@$\mapsto$@*)Ref(B) (*@${\caseRA}$@*) ens[r] x(*@$\mapsto$@*)Ref(B) * y(*@$\mapsto$@*)Ref(A) (*@$\land$@*) r:();
   x(*@$\mapsto$@*)Ref(A) (*@$\land$@*) y:{x} (*@${\caseRA}$@*) ens[r] x(*@$\mapsto$@*)Ref(A) (*@$\land$@*) r:() }
\end{lstlisting}

The first case handles flow-insensitive pure-type access; the second, two distinct locations (disjoint heap ownership) whose values are swapped. The third uses the singleton type \code{y{:}\{x\}} to express that y
must-alias x (i.e. \code{y{=}x}). Encoding the alias as a type rather than a separate equation keeps
must-alias reasoning within the type language, and swapping is then a no-op preserving the separation type. This case is
inexpressible with uniqueness types, confirming separation types are strictly more expressive (see Appendix~\ref{sec:unique}).


\begin{remark}[Singleton types and aliasing]
  A singleton type \code{\{z\}} denotes the one-element
set \code{\{z\}} --- the most precise type for a value, naming its exact witness. A singleton
is itself pure (\code{\{z\} <: Any_P}) --- value equality needs no ownership, so a bare
singleton never claims the heap; to combine the witness with ownership, it appears as an
owned cell's exact content, \code{\heapto{x}{Ref(\{z\})}}. Aliasing consequences follow
the assertion's precision: (i) \code{\heapto{x}{Ref(\{z\})}} --- full ownership, exact
content named, no unknown aliases, type mutation sound; (ii) \code{x{:}\{z\}} --- witness
known but pure heap, unknown aliases may mutate independently, type mutation unsound;
(iii) \code{x{:}A} with \code{A <: Any_P} --- no witness named, arbitrary unknown aliases,
type mutation unsound. Yet (ii) is strictly more informative than (iii) --- from
\code{x{:}\{z\}} the verifier knows \code{x{=}z}, supporting equality and must-alias
reasoning.
\end{remark}

\begin{remark}[Mixed scenarios via weakening]
  The spec omits mixed cases where one parameter is a pure type and
  the other a separation type, specifically: \code{x{:}Ref(A) \wedge \heapto{y}{Ref(B)}} and \code{\heapto{x}{Ref(A)} \wedge y{:}Ref(B)}.
  A caller willing to surrender ownership can apply
the weakening lemma to reduce to Case 1; otherwise the implicit otherwise clause applies,
yielding \code{Abrt}.

\end{remark}





\hide{
  while the third specification is
used to support \code{\heapto{x}{\gtype{Ref}{a}} \wedge y{:}x}
where \code{y{:}x} denotes a singleton type scenario such that
\code{y} must-alias with \code{x}, namely
\code{y{=}x}. The last two cases preserve
separation types, whose types can be mutated.
We also omit considerations
for the uniqueness type, since separation types are strictly more expressive (see App~\ref{sec:unique}).
}



\hide{
  \wnsay{I guess it is simpler not to distinguish
    between primitive from heap-allocated types}
  For built-in primitive types, e.g. \code{Int} or \code{Str},
and enumerated types that need not be on the
heap, we only have pure types since \code{x{:}Int \equiv \heapto{x}{Int}}.
In contrast, pure and separation types for
heap allocated objects are unambiguously differentiated, as follows
\code{(x{:}Ref(Int) \wedge \heapto{x}{Ref(Int)}) \equiv \false}.
}

\hide{
  \begin{remark}[Support for mutable fields]
   We use \code{Ref(T)} as the sole mutable
heap location throughout this paper, and omit records. This is not a fundamental restriction: the framework
generalises naturally to any data structure with mutable fields.
As an example,
\code{Pair(Ref(T1),T2)} denotes a
constructor type with one mutable and one immutable component.
\qedbox
\end{remark}
}

\subsection{A Disciplined Error Hierarchy}
\label{sec:error}

We begin with a concrete example that motivates our error hierarchy before presenting it in full. Consider the \code{head}
function on lists:

\begin{minipage}{.8\textwidth}
\begin{lstlisting}[numbers=none]
  head : (*@$\forall$@*)T. List(T) (*@$\rightarrow$@*) T
\end{lstlisting}
\end{minipage}

Two distinct kinds of failure can arise. If \code{head} is called with the wrong argument type (say an \code{Int}), the type
checker can detect this statically and flag an \code{\Abrt} error: a distinguished value representing
compile-time type-errors that must never appear in the postcondition
of a well-typed program. The program is not well-typed and must not be allowed to run.

If \code{head} is called with a \code{Nil} value (the correct type, but an empty list), a runtime error occurs: the program has
gone wrong in a way that was not caught at compile time, and the result is best represented by an \code{Err} value that
propagates through subsequent computations.

Traditional type systems (e.g. OCaml, Scala) conflate runtime errors with the bottom type \code{\bot},
treating them as outside the type system entirely. This conflation loses precision: \code{\bot} denotes
non-termination or unreachability: a function with postcondition \code{r{:}\bot} is one that never
returns. A runtime error, by contrast, is a propagatable value that flows into subsequent
computations. By internalising runtime errors as the first-class value \code{\Err} in our type lattice (Figure~\ref{fig:lattice}),
our framework can track where errors arise and propagate, something \code{\bot}-based
treatments cannot express.

\paragraph{Three type specifications for \code{head}.}
Our framework allows the programmer to choose the level of error
specification appropriate for their context:


\noindent
\begin{minipage}{.98\textwidth}
\begin{lstlisting}[numbers=none]
  head : (*@$\forall$@*)T. req[x] x:List(T) ens[r] r:T(*@${\vee}$@*)Err
  head : (*@$\forall$@*)T. case[x] { x:Cons(T,_) (*@${\caseRA}$@*) ens[r] r:T;
                       x:Nil       (*@${\caseRA}$@*) ens[r] r:Err }
  head : (*@$\forall$@*)T. req[x] x:Cons(T,_) ens[r] r:T
\end{lstlisting}
\end{minipage}

\begin{wrapfigure}{r}{0.42\textwidth}
  \begin{tikzpicture}[x=1.8cm,y=1.0cm]
\node at (0,-1.5)        (Bot)       {$\bot$};

\node at (-0.5,-0.5)     (botSys)    {\botsys};

\node at (-0.5,0.5)      (UnExc)     {\UnExc};
\node at (0.5,0.5)       (ChExc)     {\ChExc};
\node at (0.5,2.4)       (Exc)       {\Exc};

\node at (1.5,1.5)       (Abrt)      {\Abrt};
\node at (-1.5,1.5)      (Err)       {\Err};

\node at (-1.0,2.4)      (AnyP)      {$\AnyP$};
\node at (-0.5,2.4)      (AnyS)      {$\AnyS$};
\node at (-0.5,3.3)      (Any)       {\Any};

\node at (0,4.2)         (Top)       {$\top$};

\draw[->] (Bot)  --  (botSys);
\draw[->] (Bot)  --  (Abrt);
\draw[->] (botSys)  --  (UnExc);
\draw[->] (botSys)  --  (ChExc);
\draw[->] (botSys)  --  (Err);
\draw[->] (UnExc)  --  (Exc);
\draw[->] (UnExc)  --  (AnyP);
\draw[->] (UnExc)  --  (AnyS);
\draw[->] (ChExc)  --  (Exc);
\draw[->] (Err)  --  (AnyP);
\draw[->] (AnyP)  --  (Any);
\draw[->] (AnyS)  --  (Any);
\draw[->] (Any)  --  (Top);
\draw[->] (Exc)  --  (Top);
\draw[->] (Abrt)  --  (Top);
\end{tikzpicture}

  \caption{Lattice of Types}
  \label{fig:lattice}
\end{wrapfigure}
The first two specifications treat \code{head(Nil)} as a tracked runtime error; the third eliminates it
as a compile-time error. The first and second are comparable: the second (case form) is strictly
\emph{more precise} than the first, since on a \code{Cons} input it guarantees the result is a \code{T}
(never \code{Err}), whereas the first permits \code{Err} on every input. The second and third, by
contrast, are not comparable in terms of strength: they make different trade-offs between
permissiveness and static guarantees (the case form tolerates \code{Nil} as a recoverable \code{Err},
while the third rules \code{Nil} out as a compile-time error, and \code{Err} and \code{Abrt} are
unrelated). Together, the three
specifications illustrate the range of expressiveness available in our framework, a range that
eludes most traditional type systems without dependent types or type qualifiers.

\paragraph{Error hierarchy.}
Just above the empty type \code{\bot}, the node \botsys{} denotes runtime system
errors (such as out-of-memory or hardware faults) that lie beyond the control of our type
framework, and which we accordingly set aside in the discussion that follows. The three error types
we design are:
\begin{itemize}[leftmargin=2em]
\item \code{\Err}: a pure error value that may be passed as an argument to method calls. Subtypes include \code{unDef} (for
uninitialised values) and \code{null} (for null pointer dereferences). These are tolerated in programs and do not trigger
compile-time errors until they are actually used.
\item \code{\Exc}: an exception, further split into \code{ChExc} (checked; must be tracked explicitly by our type system) and \code{UnExc}
(unchecked; treated as a subtype of both \code{\AnyS} and \code{\AnyP} and may be omitted from specifications). Exception handling is
discussed in Appendix~\ref{sec:exception}.
\item \code{\Abrt} (and \code{\top}): a compile-time type error. Our type-safety verification framework must prove \code{\Abrt} and \code{\top}
unreachable for any well-typed program.
\end{itemize}

Consider \code{map}:

\begin{lstlisting}[numbers=none]
  map(f,xs) = match xs of { Nil -> Nil;
                            Cons(y,ys) -> Cons(f(y),map(f,ys)) }
\end{lstlisting}

Its type specification with case specs is:

\begin{lstlisting}[numbers=none]
  map: (*@$\forall$@*)A,B: (*@${\AnyP}$@*). case[f,xs] {
        f:Any (*@$\land$@*) xs:Nil (*@${\caseRA}$@*) ens[r] r:Nil;
        f:A(*@$\ra$@*)B (*@$\land$@*) xs:(*@\gtype{Cons}{A,\gtype{List}{A}}@*) (*@${\caseRA}$@*) ens[r] r:(*@\gtype{Cons}{B,\gtype{List}{B}}@*);
        _ (*@${\caseRA}$@*) ens[r] r:(*@$\top$@*) }
\end{lstlisting}

The first two cases capture valid type specifications; the otherwise clause \code{\_} triggers a compile-time error if the
inputs f and xs do not match either case. Here \code{\_} is
\emph{syntactic sugar} for the negation of all preceding guards.
Explicitly specifying \code{\top} or \code{\Abrt} flags compile-time type errors.
The disjointness of case guards  (Sec.~\ref{sec:semsubtype})
is essential
here: an input satisfying multiple guards would induce an
intersection type, and without
disjointness the verifier would need to perform a proof search to determine which case applies.
Enforced disjointness eliminates this need for proof search.

\paragraph{An \code{\Err} tolerated, then captured as \code{\Abrt}: reading an uninitialised value.}
An \code{\Err} is \emph{tolerated} (it may flow through the program freely) and becomes a
captured \code{\Abrt} only where a function demands its validity. Consider an uninitialised memory
slot, whose default contents are the zero word \code{0x0}. We type this as an \code{\Err} (its
\code{unDef} subtype), disjoint from every valid type:
\begin{lstlisting}[numbers=none]
 0x0 : Err
\end{lstlisting}
As an \code{\Err}, \code{0x0} is tolerated: bit-for-bit an ordinary word, it may be bound, copied, and
passed around freely. An \code{\Err} faults only where validity is demanded, i.e.\ when it reaches
a function whose precondition requires a valid argument, such as \code{use}:
\begin{lstlisting}[numbers=none]
 use : req[v] v:Valid ens[(*@\_@*)] ()
\end{lstlisting}
Here $\code{Valid} \defeq \code{Any} \land \neg\code{Err}$ denotes the valid (non-error) types.
The call \code{use(0x0)} cannot satisfy \code{v{:}Valid}, so it is ill-typed and \emph{captured} as
\code{\Abrt}, ruled out by our well-typedness requirement that well-typed programs never abort. The same \code{0x0} is thus a harmless \code{\Err}
while in flight, but an \code{\Abrt} the instant it reaches \code{use}; keeping the two distinct is
what rules the bug out by typing.
\hide{
  Users may also provide additional subclasses
for 
\Err, \UnExc~and~\ChExc~to provide finer classifications for
these 
error types.}

\hide{
  App~\ref{App:app-error} contains examples on how
to
use the different types of errors.
}




\begin{remark}[The \code{\bot} type and non-termination]
Following standard convention, \code{\bot} denotes the empty
set of values and is a subtype of every type.
As a postcondition, \code{r{:}\bot} asserts that
no value is ever returned -- the computation diverges. As a precondition,
\code{x{:}\bot} asserts that the
input never exist, that is unreachable \code{\false}. Crucially, in our framework \code{\bot} is reserved for
these two standard roles. \code{\Err} is not conflated with \code{\bot}, as they are internalised as 
first-class values of the type \code{\Err}, enabling precise error tracking that \code{\bot}-based treatments cannot provide.
The fatal-error type \code{\Abrt} is likewise distinct from \code{\bot}, and in a different way.
\code{\bot} is \emph{uninhabited} and is a legitimate specification outcome (divergence or an
unreachable input). By contrast, \code{\Abrt} is an \emph{inhabited} distinguished value marking a
compile-time type error; it is never a desired outcome and must be proven unreachable in any
well-typed program. Since \code{\bot} is a subtype of every type, we do
formally have \code{\bot <: \Abrt}; but this does not weaken the guarantee, because \code{\bot} is
uninhabited and so contributes no actual value to \code{\Abrt}. As
termination is undecidable, the guarantee that well-typed programs never abort is accordingly a
partial-correctness one: it ensures that whenever a well-typed program does produce an outcome, that
outcome is never \code{\Abrt}. It does not account for potential non-termination.
For example, \code{(req[x]~x:Any~ens[r]~r{:}\bot)} is the correct specification of a diverging function that never returns.



\end{remark}



\begin{remark}[\code{UnExc} and \code{\bot}]
\code{UnExc} should not be conflated with \code{\bot}, since \code{UnExc} is inhabited (it
classifies untracked but raised exceptions), whereas \code{\bot} has no inhabitants and denotes non-termination.
\end{remark}

\subsection{Pure and Separation Type Predicates}
\label{sec:GADT}
\label{sec:pred}
Modern functional languages, such as ML and Haskell,
support algebraic data types (ADTs). Four simple examples are shown
below.

\begin{lstlisting}[name=functor,numbers=none]
 data Color   = Red  | Black
 data Nat     = Zero | Succ(Nat)
 data List(T) = Nil  | Cons(T,List(T))
 data Tree(T) = Leaf | Node(Color,T,Tree(T),Tree(T))
\end{lstlisting}

We propose using type predicates to model
algebraic data types.
Like types, type
predicates can be recursive, but are strictly more expressive.
Each ADT can
be encoded as a pure type predicate, as shown below.
\begin{lstlisting}[name=functor,numbers=none]
 pred x:List(T) = x:Nil  (*@$\vee$@*) (*@$\exists$@*) r. x:Cons(T,r)(*@$\wedge$@*)r:List(T)
 pred n:Nat     = n:Zero (*@$\vee$@*) (*@$\exists$@*) r. n:Succ(r)(*@$\wedge$@*)r:Nat
 pred x:Color   = x:Red  (*@$\vee$@*) x:Black
 pred x:Tree(T) = x:Leaf (*@$\vee$@*) (*@$\exists$@*) c,lt,rt. x:Node(c,T,lt,rt)
                     (*@$\wedge$@*)c:Color(*@$\wedge$@*)lt:Tree(T)(*@$\wedge$@*)rt:Tree(T)
\end{lstlisting}

Type predicates are interpreted inductively as least fixed points; their recursive unfolding is
assumed to terminate (well-foundedness), a condition stated precisely in Remark~\ref{remark:well-foundness} at the
end of this section.
(We assume here a strict language. If a lazy language is used,
we will need to make use of an interpretation via greatest fixed points.)

To capture data structures with stronger invariant properties,
Haskell and OCaml
also support
a restricted form of dependent types known as Generalised
Algebraic Data Types (GADTs).
Here, their type parameters may be separately
instantiated, depending on the data constructors used,
as illustrated by
the red/black balanced tree below.
\begin{lstlisting}[name=functor,numbers=none]
data RBTree :: Color (*@\ra@*) Nat (*@\ra@*) * (*@\ra@*) * where
  Leaf      :: RBTree Black Zero a
  RedNode   :: RBTree Black h a (*@\ra@*) a
               (*@\ra@*) RBTree Black h a (*@\ra@*) RBTree Red h a
  BlackNode :: RBTree cl h a (*@\ra@*) a
               (*@\ra@*) RBTree cr h a (*@\ra@*) RBTree Black (Succ h) a
\end{lstlisting}

Rather than adopting GADT data types, we propose that
our framework use type predicates
for capturing data structures with
strong
invariant properties.
Two examples
are shown below.

\begin{lstlisting}[name=functor,numbers=none]
 pred x:List(T,s) = x:Nil (*@$\wedge$@*) s:Zero
       (*@$\vee$@*) (*@$\exists$@*) r,m. x:Cons(T,r)(*@$\wedge$@*)r:List(T,m)(*@$\wedge$@*)s:Succ(m)
 pred x:RBTree(T,c,h) = x:Leaf(*@$\wedge$@*)c:Black(*@$\wedge$@*)h:Zero
       (*@$\vee$@*) (*@$\exists$@*) h1,lt,rt. x:Node(c,T,lt,rt)(*@$\wedge$@*)lt:RBTree(T,_,h1)
                    (*@$\wedge$@*)rt:RBTree(T,_,h1)(*@$\wedge$@*)c:Black(*@$\wedge$@*)h:Succ(h1)
       (*@$\vee$@*) (*@$\exists$@*) r,lt,rt. x:Node(c,T,lt,rt)(*@$\wedge$@*)lt:RBTree(T,Black,h)
                    (*@$\wedge$@*)rt:RBTree(T,Black,h)(*@$\wedge$@*)c:Red
\end{lstlisting}

The predicate \code{x{:}List(T,s)} uses an extra parameter \code{s}
to track the length of the list. The predicate \code{x{:}RBTree(T,c,h)} uses \code{c} to denote the colour of the red/black tree and \code{h} to capture its
black-height, and enforces that
the black-height is always balanced. These richer predicates are related to their base-type counterparts
by the following lemmas.

\begin{lstlisting}[name=functor,numbers=none]
 lemma (*@\code{\forall x,T.~(\exists n.}@*) x:List(T,n))   (*@\equivArr @*) x:List(T)
 lemma (*@\code{\forall x,T,c,h .}@*) x:RBTree(T,c,h) (*@\implyArr @*) x:Tree(T)
\end{lstlisting}

\paragraph{Stable pure and separation types.}
With length-indexed predicates in hand, we now define
stability for pure types (Definition~\ref{def:pure-type-pred}) and separation types (Definition~\ref{def:sep-pred}). Stability of type
predicates themselves is deferred to Sec.~\ref{sec:type-logic}, where the necessary
formal vocabulary is introduced.

\begin{defn}[Stable Pure Type]
\label{def:s-pure-type}
A pure type \code{t} is stable under heap mutation if it
cannot be invalidated by
any type-consistent mutation (Definition~\ref{def:type-consistent}). Formally, for any heaps \code{h, h'} related by a type-consistent
mutation with respect to the current pre-state \code{\ctr}, \code{s,h  {\models} x:t} implies \code{s,h' {\models} x:t}.
Singleton types \code{y{:}\{x\}} (asserting \code{y{=}x}) \cite{Lionel:OOPSLA22,Castagna:POPL24
} are trivially stable:
they live in the pure
constraint \code{\pure} and express only an equality between program or logical variables (including
existentially quantified \code{\exists v^*} in \code{\ctr}). This fact is independent of the heap and therefore
unaffected by any heap mutation.
\end{defn}

Pure data types in strongly typed languages, like OCaml, are
designed to be stable by default. Moreover, GADTs are also stable with
the use of immutable constructors.

\begin{defn}[Stable Separation Type]
A separation type  \code{\heapto{x}{T}} is stable if the content type
\code{T} is a stable pure type.
Full ownership of x is guaranteed by the separation type discipline
itself, so the non-trivial requirement is on the content type T. For example, \code{\heapto{x}{Ref(Int{\vee}Str)}}
is stable because \code{Int{\vee}Str} is stable. By contrast, \code{\heapto{x}{Ref(T_2)}}  may not be stable if \code{T_2} can be invalidated
by aliased writes to its fields. Unlike a stable pure type, a stable separation type additionally
permits type mutation of its content via an explicit update. This is safe because full ownership ensures no
pure-heap alias to x exists and every must-alias observes the updated type consistently.
\end{defn}

It is also possible to support richer data structures via separation type predicates which may
capture heap nodes exclusively owned by the predicates, such as \code{\heapto{x}{MList(T)}}:
\begin{lstlisting}[name=functor,numbers=none]
 data MList(T)  = Nil  | Cons(T,Ref(MList(T)))
 pred (*@\code{\heapto{x}{MList(T)}}@*) = x:Nil (*@$\vee$@*) (*@$\exists~r,r_1$@*) . (*@\code{\heapto{x}{Cons(T,r)}{\sep}\heapto{r}{Ref(r_1)}{\sep}\heapto{r_1}{MList(T)}}@*)
 \end{lstlisting}

Such mutable lists can be used to construct fully owned mutable structures,
including circular and doubly-linked lists, a domain well covered
by two decades of research in separation logic.
In particular, a pure type \code{x{:}MList(\_)} cannot be used to guarantee the construction of such a
circular list: being freely aliasable and owning no cell, it cannot pin down the back-reference that
closes the cycle. This guarantee is attainable only with separation types and predicates, whose
exclusive ownership of the nodes lets the predicate tie the final tail back to an owned cell.
We propose to make these features available within the type system to
support greater memory safety.




Precise definitions of stable pure and separation type predicates, in terms of the formal type
logic, are deferred to Definitions~\ref{def:pure-type-pred} and \ref{def:sep-pred} in Sec.~\ref{sec:type-logic}, once the required vocabulary --
immutable constructors, heap assertions, ownership -- has been formally established.

The well-foundedness assumption underlying all type predicate definitions is stated formally as
Remark~\ref{remark:well-foundness} in Sec.~\ref{sec:type-logic}, once the formal vocabulary is in place.


A more detailed comparison of type predicates with
GADTs and liquid types is in
Section~\ref{sec:compare}.

\subsection{A Combined Example}
\label{sec:finale}
\label{sec:combined}
We close with a single example exercising all four ingredients at once. Consider \code{rm\_head}, which
dereferences a heap-allocated list reference, returns the head if non-empty, and \code{Err} otherwise:

\begin{lstlisting}[numbers=none]
 rm_head(m) = let xs = !m in
    match xs of { Nil -> err ;
            Cons(y,_) -> update(m, tail(xs)); y }
\end{lstlisting}
We use the length-indexed predicate \code{List(T,s)}, where \code{s} is a \code{Nat} tracking the length, to give a precise
three-case specification. \code{s{:}Zero} means the list is empty; \code{s{:}Succ(s_1)} means it is non-empty.
\begin{lstlisting}[numbers=none]
 rm_head : (*@$\forall$@*) T,l,s,s(*@\textsubscript{1}@*).
   case[m] {
     (* Case 1: pure-type access, possible runtime Err *)
     m:Ref(l) (*@$\land$@*) l:List(T) (*@${\caseRA}$@*) ens[r] r:T (*@$\vee$@*) Err;
     (* Case 2: sep type, empty list, rm_head returns Err *)
     m(*@$\mapsto$@*)Ref(l) (*@$\land$@*) l:List(T,s) (*@$\land$@*) s:Zero (*@${\caseRA}$@*) ens[r] m(*@$\mapsto$@*)Ref(l) (*@$\land$@*) r:Err;
     (* Case 3: sep type, non-empty list, flow-sensitive mutation *)
     m(*@$\mapsto$@*)Ref(l) (*@$\land$@*) l:List(T,s) (*@$\land$@*) s:Succ(s(*@\textsubscript{1}@*)) (*@${\caseRA}$@*) ens[r] (*@$\exists$@*) l(*@\textsubscript{1}@*). m(*@$\mapsto$@*)Ref(l(*@\textsubscript{1}@*)) (*@$\land$@*) l(*@\textsubscript{1}@*):List(T,s(*@\textsubscript{1}@*)) (*@$\land$@*) r:T }
\end{lstlisting}
This specification uses all four ingredients:
\begin{itemize}
\item Case specifications (Sec.~\ref{sec:semsubtype}): three disjoint guards: pure-type access, and the empty-
and non-empty-list separation cases.
\item  Separation types (Sec.~\ref{sec:spatial}): Cases 2 and 3 use \code{\heapto{m}{Ref(l)}} to own the heap location; Case 3's
postcondition reflects the flow-sensitive \code{update}, with \code{m} now pointing to the tail \code{l1}.
\item  Error typing (Sec.~\ref{sec:error}): Case 1 tolerates a runtime \code{Err} via \code{r:T \vee \Err}; Case 2 returns \code{Err} precisely
when the list is known empty, not conflating it with \code{\bot} or an unchecked exception.
\item  Type predicates (Sec.~\ref{sec:GADT}): the length-indexed \code{List(T,s)} uses its \code{Nat} index \code{s} to distinguish
\code{Zero} (empty) from \coden{Succ(s_1)} (non-empty), letting the verifier prove Cases 2 and 3 disjoint and track
the length decrease in Case 3.
\end{itemize}

\paragraph{Stability of type predicates.} The example also pinpoints where pure-type stability holds and breaks
down. \code{l{:}List(T,s)} is stable: \code{List} is built from immutable \code{Cons} constructors, so the length index
\code{s}, fixed at construction, is observed identically by every alias. By contrast, \code{m{:}Ref(List(T,s))} is not:
any alias can update the \code{Ref} to a list of different length \code{s'}, invalidating \code{s}. This is why Case 1
uses the weaker \code{m{:}Ref(List(T))}, the length being untrackable through a pure \code{Ref} under aliased
mutation. Cases 2 and 3 resolve this by taking full ownership via \code{\heapto{m}{Ref(l)} \wedge l{:}List(T,s)}: with no
pure-heap alias to \code{m}, the cell is fully owned, permitting the precise update in Case 3's postcondition.
\toolname~checks the three cases independently, the separation type being consumed and re-emitted by the Call
rule (Sec.~\ref{sec:Hoare}) so no ownership is lost across \code{update}.










\section{Core Language and a Logic for Types}
\label{sec:lang}
This section has two parts. The first (Sec.~\ref{sec:core}) presents the syntax of the core language: a
small strict higher-order functional language with immutable constructors and a single mutable
heap type \code{Ref}. The second (Sec.~\ref{sec:type-logic}) defines the type logic (the grammar of states, heap
assertions, pure constraints, and type specifications) that Sec.~\ref{sec:Hoare} uses to state and verify
type-safety properties. The notation req/ens used informally throughout Sec.~\ref{sec:motivate} is given its
formal definition here. The big-step operational semantics of the core language, against which our
soundness result (Sec.~\ref{sec:Soundness}) is stated, is given in \cref{sec:bigstep}.

\subsection{Core Language}
\label{sec:core}
We consider a strict higher-order functional language with
immutable data constructors
with the exception of the \code{Ref} type.
The key design choices are as follows.
To support both
static and dynamic typing, we organise data constructors and primitive types into a subtype
lattice with \code{Any} as the supertype of all valid types, covering both user-defined and primitive
types. Each \code{let} binding introduces an immutable variable. Functions and data constructors are
always fully applied. Function definitions and lambda abstractions are always given explicit type
specifications.
We also provide an alternative style for function specifications,
\code{\vfunrel{x^*\,r}{\dspec[r]}}, so that \code{case} and \code{\ens} constructs
do not need to explicitly track the parameters/result.
For simplicity, we omit \code{inout} parameters
and exception-handling constructs.

The syntax is summarised in Figure~\ref{fig:core_lang_syntax}.
We briefly explain each syntactic category. A program \ensuremath{\m{({Prog})}}
is a sequence of definitions. Each definition \ensuremath{\m{({Defn})}} binds a function name to a body expression
together with a type specification~\code{\vfunrel{x^*\,r}{\dspec}}. The full expression language \ensuremath{\m{({Full})}} includes the
standard constructs (values, \code{let}-bindings, constructor applications, function calls, type casts
\code{(t)e}, and \code{match}) and is the language the programmer writes. The core expression language
\ensuremath{\m{({Core})}} is the normalised form used for Hoare-style reasoning: constructor arguments and
function arguments are always simple variables, and the \code{match} scrutinee is always a variable.
Base types \ensuremath{\m{({Base})}} include singleton types, primitive types, constructor types, predicate types, and the special
types \code{\Err}, \code{\Abrt}, \code{Any}, \code{\bot}, and \code{\top}. Types \ensuremath{\m{({Type})}} extend base types with separation types \code{(\heapto{}{base})},
function types, and Boolean combinations \code{\wedge}, \code{\vee}, and \code{\neg}. The separation type \code{\heapto{}{base}} is a type in
its own right. A variable \code{x} of this type would be written \code{x{:}\heapto{}{base}}, but for simplicity we omit
the colon for separation types and write \code{\heapto{x}{base}}. Patterns \ensuremath{\m{({Pattern})}} mirror types and are used
in \code{match}. Note that \code{\top} and \code{\bot} are excluded from patterns since the former trivially holds and
testing for the latter is equivalent to the halting problem.

{
\begin{figure}[h]
\centering
    {\small
 $
      \begin{array}{lrcl}


    \m{({Prog})} & \code{prog} & ::= & \code{def_1}~;\cdots~;\code{def_n}
   \\ [0.1em]
     \m{({Defn})} & \code{def} & ::= & \code{f}~\code{x_1} {\cdots} \code{x_n} = \code{e}~\code{where}~
     \code{f}{::}\vfunrel{\code{x_1}{\cdots}\code{x_n}\,\code{r}}{\dspec[\code{r}]}
   \\ [0.1em]
     \m{({Full})} & \code{e} & ::= &  \code{val}
     \vbar \elet{x{:}t}{e_1}{e_2}
      \vbar \code{C}(\code{e_1}{\ldots}\code{e_n}) 
      \vbar (\code{f}~\code{e_1}{\cdots}\code{e_n})
      \vbar (\code{t})~\code{e} \\
      &&&
      \vbar \ematch{e}{p_1{\ra}e_1{{\mSep}\cdots{\mSep}}p_n{\ra}e_n}
   \\[0.1em]
     \m{({Value})} & \code{val} & ::= &
           \code{x}
      \vbar \code{c}
      \vbar \err
      \vbar \abrt
       \vbar \efun{\code{x^*}}{\code{e}}::\vfunrel{\code{x^*}\,\code{r}}{\dspec[\code{r}]} 
\\[0.1em]
     \m{({Core})} & \code{e} & ::= &
      \code{val}
     \vbar \elet{x}{e_1}{e_2}
      \vbar  \code{C}(\code{x_1}{\ldots}\code{x_n}) 
      \ \vbar (\code{f}~\code{x_1}{\cdots}\code{x_n})
      \vbar (\code{t})~\code{x} \\[0.1em]
      & & &
      \vbar \ematch{x}{p_1{\ra}e_1{{\mSep}\cdots{\mSep}}p_n{\ra}e_n}
   \\[0.1em]
    \m{({Base})} & \code{base} & ::= &
      \{\code{x}\}
      \vbar \{\code{c}\}
      \vbar \code{Bool}
      \vbar \code{Int}
      \vbar \code{Str}
      \vbar \code{Ref(\cdots)}
      \vbar \cdots
      \vbar \Err
      \vbar \Abrt
      \vbar \Any
      \vbar \bot
      \vbar \top\\
         & & &
      \vbar \code{C}(\code{t_1}{\ldots}\code{t_n})

      \phantom{c}      \vbar \code{pred}(\code{t_1}{\ldots}\code{t_n})
   \\ [0.1em]
   \m{({Type})} & \code{t} & ::= &
      \code{base}
      \vbar \heapto{}{\code{base}}
      \vbar \code{t_1} {\ra} \code{t_2}
       \vbar \code{t_1}{\wedge}\code{t_2}
      \vbar \code{t_1}{\vee}\code{t_2}
      \vbar \neg \code{t}
   \\ [0.1em]
     \m{({Pattern})} & \code{p} & ::= &
      \code{c}
      \vbar \code{Bool}
      \vbar \code{Int}
      \vbar \code{Str}
      \vbar \cdots
      \vbar \Err
      \vbar \Abrt 
      \vbar \Any
      \vbar \code{C}(\code{x_1}{\ldots}\code{x_n})
      \phantom{c}      \vbar \code{pred}(\_) \\
      &&&
      \vbar \_ {\ra} \_
       \vbar \code{p_1}{\wedge}\code{p_2}
      \vbar \code{p_1}{\vee}\code{p_2}
      \vbar \neg \code{p}
      \vbar \_
\\



\hide{      \m{(Values)} &  v & ::= &
      c
      \vbar
      \m{nil}
      \vbar \econs{x_1}{x_2}
      \vbar \efun{x^*}{\dspec[r]}{e}
     \\[0.1em]

      \m{({Staged})} & \code{\dspec}& ::= &
      \coden{\stage}
      \vbar
      \stagedisj{\dspec_1}{\dspec_2}
      \vbar
      \stageseq{\dspec_1}{\dspec_2}
      \vbar
      \stageex{x^*}{\dspec}
     \\[0.1em]

      \m{({Stage})} &  \code{\stage}& ::= &
      \stagereq{\DD}
      \vbar \stageensres{r}{\DD}
      \vbar \stageho{f}{x^*,r}

      \qquad
      \qquad
      \qquad
      \m{({State})}
      \quad
      \code{\DD}
      ~
      \code{::=}
      ~
      \sigma\wedge\pi

     \\[0.1em]

      \m{({Heap})} &  \code{\sigma}&\code{::=}&
      \coden{\emp}
      \vbar \coden{\heapto{x_1}{x_2}}
      \vbar \coden{\heap_1\,{\sep}\,\heap_2}

     \\[0.1em]

      \m{({Pure})} &  \code{\pi}&\code{::=}&
        \coden{\m{true}}
        \vbar \coden{\pure_1{\vee}\pure_2}
        \vbar \coden{\neg\pure}
        \vbar \coden{\exi{x}\,\pure}
        \vbar \coden{t_1{=}t_2}
        \vbar \coden{a_1{<}a_2}
        \vbar \coden{\dspec_1{\stagesubs}\dspec_2}
       \\[0.1em]
      \m{({A{\operatorname{-}}Terms})} &  \code{a}&\code{::=}&
       \coden{i} \vbar
       \coden{x} \vbar
       \coden{a_1\,{+}\,a_2} \vbar
       \coden{{-}a}
      \\[0.1em]
      \m{({Terms})} &  \code{t}&\code{::=}&
      \m{nil}
      \vbar \econs{t_1}{t_2} \vbar
       \coden{c} \vbar
       \coden{a} \vbar f \vbar \tlambda{x^*,r}{\dspec}
\\[0.8em]
}
  \end{array}$
$\begin{array}{lrcl}
\multicolumn{4}{c}{
\  \code{ \code{c} \in  \mathbb{B} \cup \mathbb{Z}
  \cup \cdots }
           \qquad\quad\   \qquad\quad\
\code{x, f, r \in \emph{vars}}

}\\
      \end{array}
      $}
 \caption{Syntax of Full and Core Language with Types}
 \label{fig:core_lang_syntax}
 \label{fig:type_syntax}
\end{figure}
}

The Full{\code{\equivArr}}Core preprocessing works by inserting eager casts to \code{\Any} at every constructor
argument, function argument, and match scrutinee.
\code{\Any} is the supertype of all normal runtime
types but does not include \code{\Abrt} or \code{ChExc}. The casts are the propagation points. If any
sub-expression produces \code{\Abrt} or \code{ChExc}, the cast short-circuits and propagates the exceptional
value outward, so the core \code{let} construct never needs to handle these cases directly. The
preprocessing rules are:
\begin{small}
\[
\begin{array}{rl}
\code{C(e_1,\cdots,e_n)} & \equivArr \eletBlk{\{x_i{:}\Any{=}e_i\}_{i=1}^n}{C(x_1,\cdots,x_n)}  \\
\code{f~e_1\,{\cdots}\,e_n} & \equivArr \eletBlk{\{x_i{:}\Any{=}e_i\}_{i=1}^n}{f~x_1\,{\cdots}\,x_n}  \\
\ematch{e}{\ldots} & \equivArr \elet{x{:}\Any}{e}{\ematch{x}{\ldots}} \\
\elet{x{:}t}{e_1}{e_2} & \equivArr \code{let}\ \code{{x}={(t)~e_1}~in} 
     ~{\ematch{x}{\Abrt{\ra}x{\vbar}\_\,{\ra}e_2}}  \\
\code{(t)~e} & \equivArr \elet{x}{e}{(\elet{x_1}{(t)~x}{x_1})}  \\
\end{array}
\]
\end{small}

\paragraph{Pattern overlap and desugaring.} For ease of programming, we allow the patterns in each match
construct to overlap, relying on top-down processing to give an unambiguous semantics. Before
Hoare-style reasoning begins, overlapping patterns are desugared into disjoint form. For
example, \code{\ematch{x}{p_1{\ra}e_1{\vbar}p_2{\ra}e_2{\vbar}p_3{\ra}e_3}}
becomes
\code{\ematch{x}{p_1{\ra}e_1{\vbar}p_2{\wedge}\neg(p_1){\ra}e_2{\vbar}p_3{\wedge}\neg(p_1{\vee}p_2){\ra}e_3}}.
The Hoare rules can then deal with complex patterns formed from \code{\wedge}, \code{\vee}, and \code{\neg}.
To support efficient runtime execution, complex type patterns are further translated to simpler
patterns using the rules below:

\begin{small}
\[
\begin{array}{ll}
  \ematch{x}{\ldots{\vbar}p_1{\vee}p_2~{\ra}~e{\vbar}..}
& \equivArr
\ematch{x}{\ldots{\vbar}p_1~{\ra}~e{\vbar}p_2~{\ra}~e{\vbar}..}
 \\
\ematch{x}{\ldots{\vbar}p_1{\wedge}p_2{\ra}e{\vbar}..} 
& \equivArr
\ematch{x}{\ldots{\vbar}p_1{\ra}
 \ematch{x}{p_2~{\ra}~e{\vbar}..}{\vbar}..}
 \\
\ematch{x}{\ldots{\vbar}\neg\,p~{\ra}~e{\vbar}..} 
& \equivArr
\ematch{x}{\ldots{\vbar}p~{\ra}~\ematch{x}{..}{\vbar}\_~{\ra}~e}
\end{array}
\]
\end{small}

\subsection{Type Logic and Specifications}
\label{sec:type-logic}
We now define the type logic that underpins the Hoare rules of Sec.~\ref{sec:Hoare}. The central object is a
type state \code{\ctr}, which pairs a heap assertion \code{\heap} capturing ownership of heap locations with a pure
constraint \code{\pure} capturing type and equality facts about variables. The grammar is given in Figure~\ref{fig:type_spec}.

{
\begin{figure}[h]
\centering
 $
  \begin{array}{lrcl}


    \m{({State})} & \ctr & ::= &
      \bigvee_{i=1}^n~\exists v^* {\cdot} \heap_i{\wedge}\pure_i
       \\
    \m{({Heap})} & \heap & ::= &
      \emp
      \vbar \heapto{x}{t}
      \vbar \heap_1{\sep}\heap_2 \\
    \m{({Pure})} & \pure & ::= &
      x{:}t \vbar x_1{=}x_2
      \vbar \pure_1{\wedge}\pure_2
      \vbar \pure_1{\vee}\pure_2
      \vbar \neg \pure
      \vbar \exists v^* {\cdot} \pure
   \\ [0.1em]
     \m{({Spec})} & \dspec & ::= &
      \stageensres{r}{\ctr}
      \vbar \specCase{\ctr_1{\caseRA}\dspec_1;{\cdots};\ctr_n{\caseRA}\dspec_n} 
      \vbar \forall x^* {\cdot} \dspec
   \\ [0.1em]
\end{array}$
 \caption{Type Logic Specification}
 \label{fig:type_spec}
\end{figure}
}

Each syntactic category plays a distinct role. A state \code{\ctr}
is a disjunction of existentially quantified
heap-pure pairs; disjunction arises naturally from path-sensitive case analysis. A heap assertion
\code{\heap} is either empty (\code{emp}), a separation type \code{\heapto{x}{t}} asserting full ownership of the location at \code{x} with
content type \code{t}, or a separating conjunction \code{\heap_1 \sep \heap_2} asserting disjoint ownership. A pure
constraint \code{\pure} records type memberships \code{x{:}t}, equalities, and Boolean combinations thereof; it
does not assert ownership. Note that we support \code{\vee}, \code{\wedge}, and \code{\neg} in both the type and assertion
languages, with the union-type equivalence \code{r{:}T_1{\vee}T_2 \equivArrB r{:}T_1 {\vee} r{:}T_2} holding as a derived rule.
A specification \code{\dspec} is either a simple postcondition \code{ens[r]~\ctr}, a case
specification with pairwise-disjoint guards
\code{(\forall i{\neq}j.\,\ctr_i{\wedge}\ctr_j{=}\false)},
or a universally quantified
specification. The shorthand \code{req~\ctr_1\quad ens[r]~\ctr_2} abbreviates
\code{case\,\{\ctr_1\,{\implyArrB}\,ens[r]\,\ctr_2\}}, formalising the
notation used throughout Sec.~\ref{sec:motivate}.

For completeness, every case specification is implicitly extended with an
otherwise clause
\code{(\_\,{\implyArrB}\,ens[r]~r{:}\top)}. This clause flags any input not covered by the stated cases
as a compile-time \code{Abrt} error. Its postcondition \code{r{:}\top} signals this because \code{\top}, unlike \code{Any}, admits \code{\Abrt} values. This is
consistent with the disjointness discipline of Sec.~\ref{sec:semsubtype}: because the stated guards are pairwise
disjoint, the otherwise clause is well-defined and covers precisely the remaining inputs.

Figure~\ref{fig:type_spec} establishes the vocabulary we now need: immutable constructors (types built by
the Base grammar that do not involve \code{\heapto{}{}}), pure sub-components (occurrences of \code{\pure} in a state),
and ownership (heap assertions \code{\heap}). With these in hand, we can give the formal definitions of stable type
predicates that were deferred from Sec.~\ref{sec:GADT}.

\begin{defn}[Stable Pure Type Predicate]
\label{def:pure-type-pred}
A pure type predicate \code{x:P(v^*)} is stable if (i) its unfolding uses
only pure types and immutable constructors (Base terms not involving
\heapto{}{}), and (ii) every \code{Ref(T')} in the unfolding has \code{T'} a stable pure type.
Stability of recursive occurrences follows
from these two conditions by structural induction on the unfolding.
\end{defn}

A further refinement is needed for separation type predicates,
since they may contain pure sub-components that remain exposed to aliased mutation even though the
separation part itself is protected by heap assertion \code{\heap}.
These pure sub-components must therefore themselves be stable.

\begin{defn}[Stable Separation Type Predicate]
\label{def:sep-pred}
A separation type predicate
\code{\heapto{x}{P(v^*)}} is well-formed and stable if: (i) full ownership of \code{x} is maintained, guaranteed
automatically by the heap assertion \code{\heapto{x}{}} in Figure~\ref{fig:type_spec}; and (ii) every pure sub-component in the
unfolding of \code{P} is stable: bare pure sub-components  \code{y{:}T} must be stable pure types, and pure
type predicate sub-components \code{y{:}Q(v^*)} must be stable pure type predicates in the sense of
Definition~\ref{def:pure-type-pred}. Unlike pure predicates, a stable separation predicate additionally permits type
mutation: the predicate may be updated to \code{\heapto{x}{P'(v^*)}} via an explicit update, tracked in the
postcondition. Such an update is allowed provided the updated predicate also satisfies conditions (i) and (ii).

\end{defn}

Definitions~\ref{def:pure-type-pred} and \ref{def:sep-pred} require well-founded predicate unfoldings, as does Definition~\ref{def:s-pure-type} for
pure types appearing inside them. We make this assumption explicit.

\begin{remark}[Well-foundedness and co-induction]
\label{remark:well-foundness}
As is standard in Hoare logic for strict
(eager) languages, all type predicates, both pure and separation, are assumed to be
{\bf well-founded}: their recursive unfolding always terminates. For lazy languages, this assumption
must be relaxed via co-inductive reasoning. Extending our framework to lazy languages via
co-inductive type predicates is a direction for future work.
\end{remark}

The \code{\casewedge} operator, used in the Hoare rules of Figure~\ref{fig:forward_rules}, is a special conjunction that performs
case analysis on both separation and pure types. It refines a state by a pure constraint and, when heap
ownership is present, extracts a residual frame via bi-abductive entailment. This residual frame lets the Frame rule of
Sec.~\ref{sec:structural} thread ownership through function calls. We give its precise definition where it is
used, in Sec.~\ref{sec:Hoare}.


\hide{
  \trung{I think we also need to list the error types and some basic types into the Type Logic Specification figure above?}
\wnsay{They are in Fig 3}}













\section{Hoare Logic Rules for Type-Safety}
\label{sec:Hoare}
This section presents the forward Hoare rules that constitute the core of our type-safety
verification framework. Sec.~\ref{sec:structural} introduces the two structural rules ($Conseq$ and $Frame$) and
explains how a type specification is composed with a program state via the (;) operator. Sec.~
\ref{sec:rules} presents the rules for each expression form. Sec.~\ref{sec:we} gives a worked example showing
how the rules interact.

\subsection{Structural Rules and Specification Composition}
\label{sec:structural}

We present a forward-style Hoare rule of the form
\coden{\HTriple{\ctr_{pre}}{e}{\specEns{r}{\ctr_{post}}}}:
given an input type
state \coden{\ctr_{pre}}, the rule computes the strongest post-state
\coden{\ctr_{post}} resulting from evaluating \coden{e}, binding
the result and its type to \coden{r}. Two structural rules apply to every expression form. $Conseq$ allows
the pre-state to be weakened and the post-state to be strengthened via entailment (\code{\vdash} denotes
  intuitionistic implication). $Frame$ allows a disjoint heap context
  \coden{\ctr} to be threaded through any
expression unchanged, provided the expression does not touch the locations asserted in \coden{\ctr}; this
is the separation logic frame rule lifted to our type logic.

  {\small
    \[
\begin{array}{c}
  \incrRule{
  \entailS{\ctr_1}{\ctr_3} \quad \HTriple{\ctr_3}{e}{\specEns{r}{\ctr_4}} \quad \entailS{\ctr_4}{\ctr_2}
  }{\HTriple{\ctr_1}{\coden{e}}{\specEns{r}{\ctr_2}}
  }{Conseq}
\end{array}
\quad
\begin{array}{c}
  \incrRule{
  \HTriple{\ctr_1}{e}{\specEns{r}{\ctr_2}}
  }{\HTriple{\specSep{\ctr_1}{\ctr}}{\coden{e}}{\specEns{r}{\specSep{\ctr_2}{\ctr}}}
    }{Frame}
\end{array}
\]
  }

\paragraph{Specification composition.} The Hoare rules for function calls rely on a type specification being
provided for each method, using \code{f{::}\lambda~x^*~r {\fdot} \dspec}. The specification must be in case-spec form and is
merged with the pre-state \coden{\ctr} via the operator \code{\coden{\ctr}\,{;}\,\dspec}, defined by the two reduction rules below.
Intuitively, \code{\coden{\ctr}\,{;}\,\dspec} applies the specification \code{\dspec} in the context of the current state \coden{\ctr}. For each case
guard \coden{\ctr_i}, the operator uses \code{\casewedge} (Definition~\ref{def:casewedge}) to check how much of \coden{\ctr}
satisfies \coden{\ctr_i}, and threads the residual frame into that branch's post-state. A branch
whose guard is inconsistent with \coden{\ctr} drops out as vacuous. An
\code{\Abrt} outcome arises in one of two ways: where a branch's postcondition explicitly yields it, or from the implicit
otherwise clause (Sec.~\ref{sec:type-logic}) covering inputs that match no stated guard.

\begin{small}
\[
\begin{array}{rl}
\coden{\ctr\,{;}\,\specCase{\ctr_1{\caseRA}\dspec_1;\cdots;\ctr_n{\caseRA}\dspec_n}}  & \code{\implyArr}
\coden{((\ctr~{\casewedge}~\ctr_1)\,{;}\,\dspec_1)~{\vee}~\cdots~{\vee}~((\ctr~{\casewedge}~\ctr_n)\,{;}\,\dspec_n)} \\[0.6em]
\coden{\ctr \,{;}\,\specEns{r}\ctr_2}
  & \code{\implyArr} \coden{{\ctr~{\sep}~\ctr_2}}
\end{array}
\]
\end{small}

We introduce a special conjunction \code{\casewedge} that performs case analysis on both separation and pure
types, defined as follows.

\begin{defn}[Case conjunction $\casewedge$]
\label{def:casewedge}
On pure types, \code{\casewedge} coincides with standard conjunction:
\code{(\ctr~\casewedge~\pure) \equivA (\ctr \wedge \pure)}. In the presence of a separation type \code{\heap}, it
performs bi-abductive heap entailment to extract a residual frame \code{\ctr_s} together with an abduced pure precondition \code{\pure_0}:
\[
  \incrRule{
  \entailS{\ctr \land \pure_0}{\heap \rightsquigarrow \ctr_s}
  }{\ctr~\casewedge~\heap \land \pure \equivA \ctr_s \land \pure \land \pure_0
  }{}
\]
\end{defn}

\paragraph{Example: specification composition.} Consider a call \code{(f~x)} where \code{f} has specification \\
 \code{\dspec \equiv
\speckw{case}[x]\,\{x:Int {\caseRA}~\speckw{ens}[r]\,r:Int;~x:{\neg}Int~{\caseRA} \speckw{ens}[r]\,r:\Abrt\}}. If the pre-state is \code{\ctr \equiv x{:}Int}, then \code{\ctr ; \dspec} yields
\code{x{:}Int {\wedge} r{:}Int} (the first branch matches, the second is vacuous). If instead \code{\ctr \equiv x{:}Any}, then \code{\ctr ; \dspec}
yields \code{(x:Int {\wedge} r:Int)} \code{\vee} \code{(x:(Any{\wedge}\neg Int) {\wedge} r:Abrt)}, signalling a possible \code{\Abrt} from inputs not known to
be Int.

\subsection{Rules for Expression Forms}
\label{sec:rules}
The remaining rules in Figure~\ref{fig:forward_rules} handle each expression form in the core language. We describe
each rule in turn.

\begin{figure}[!h]
\small

\[
\begin{array}{cc}
  \incrRule{
  \fresh{\code{r}} 
  }{\HTriple{\ctr}{x}{\specEns{r}{\specAnd{\ctr}{\coden{r{:}\{x\}}}}}}{Var}
\end{array}
~
\begin{array}{c}

    \incrRule{\fresh{\code{r},T_1..T_n} \quad \ctr_1{\equiv} (\ctr~{\wedge}~\bigwedge_{i=1}^n\,x_i{:}T_i)
    }{\HTriple{\ctr}{\ecall{C}{x_1..x_n}}{\specEns{r}{\specAnd{\ctr_1}{\heapto{r}{\ecall{C}{{\it T_1..T_n}}}}}}
  }{Constr}
\end{array}
\]
\[
\begin{array}{c}
\incrRule{\fresh{r} \qquad v~{::=}~c{\vbar}\err{\vbar}\abrt}{\HTriple{\ctr}{v}{\specEns{\it r}{\specAnd{\ctr}{\coden{\it{r{:}ty(v)}}}}}}{
  Val}
\end{array}
\quad
\begin{array}{c}

    \incrRule{\fresh{r} \quad \entailS{\ctr}{f{::}\lambda~x^*~r {\fdot} \dspec}
    }{\HTriple{\ctr}{\ecall{f}{x^*}}{\specEns{\it r}{\stageseq{\ctr\,}{\dspec
      }}}
  }{Call}
\end{array}
\]
\[
\begin{array}{cc}
  \incrRule{
  \fresh{r} \qquad \ctr_1{\equiv}({\ctr}~{\wedge}~({x{:}t})~{\wedge}~r{:}\{x\})\vee({\ctr}~{\wedge}~\neg({x{:}t})~{\wedge}~r{:}\Abrt)
  }{\HTriple{\ctr}{(t)~x}{\specEns{\it r}{\ctr_1}}}{Cast}
\end{array}
\]

\[
\begin{array}{c}
 \incrRule{
  \HTriple{\ctr}{e_1}{\specEns{r_1}{\ctr_1}} \quad
  \HTriple{\subst{r_1}{x}{\ctr_1}}{e_2}{\specEns{r}{\ctr_2}}
 }{\HTriple{\ctr}{\elet{x}{e_1}{e_2}}{\specEns{r}{\ctr_2}}
  }{Let}
\end{array}
\]
\[
\begin{array}{c}
       \incrRule{
        ~\HTriple{\specAnd{\ctr}{(x{:}p_1{\vee}\heapto{x}{p_1})}}{e_1}{\specEns{r}{\ctr_1}} ~
        \cdots~ \HTriple{\specAnd{\ctr}{(x{:}p_n{\vee}\heapto{x}{p_n})}}{e_n}{\specEns{r}{\ctr_n}}
  }{\HTriple{\ctr}{\ematch{x}{p_1{\ra}e_1\,\vbar\,{\cdots}\ }}{\specEns{r}{\bigvee_{i=1}^n \ctr_i}}
        }{Match}
\end{array}
\]
\[
\begin{array}{c}
      \incrRule{
        \fresh{\code{f}} \qquad v^*=vars(\ctr){-}vars(e){\cup}\{x^*\}\quad
         \HTripleIN{\exists v^*{\fdot}\xpure(\ctr)}{e}{\dspec}
  }{\HTriple{\ctr}{\efun{x^*}{e}::\tlambda{x^*\,r}{\dspec}}{\specEns{f}{\specAnd{\ctr}{f{::}\tlambda{{\it x^*\,r}}{
              {\dspec}}}}}
  }{Lambda}
\end{array}
\]
  \caption{Forward~Hoare Rules on Expressions for Type Safety}
  \label{fig:forward_rules}
\end{figure}

\paragraph{Rule commentary.} $Var$ records that the result \code{r} equals the variable \code{x} via the singleton type \code{r{:}\{x\}}.
$Constr$ introduces fresh type variables \code{T_1..T_n} for the field types and produces a separation type
\code{\heapto{r}{C(T_1..T_n)}}, giving the caller full ownership of the freshly allocated node. $Val$ handles literal
constants, \code{err}, and \code{abrt}, each assigned its canonical type via \code{ty(v)}. $Call$ applies specification
composition \code{(\ctr;\dspec)} as described in Sec.~\ref{sec:structural}. $Cast$ produces two branches: one where the cast
succeeds (\code{x{:}t}, result is \code{x} itself) and one where it fails (
\code{\neg(x{:}t)}, result is \code{\Abrt}); the preprocessing of
Sec.~\ref{sec:core} ensures \code{\Abrt} propagates outward. $Let$ threads the post-state of \code{e_1} (with \code{r_1} substituted
by \code{x}) into \code{e_2}. $Match$ distributes the pre-state across branches by conjoining each pattern guard,
collecting post-states in disjunctive form. $Lambda$ uses the auxiliary checking judgement
\coden{\HTripleIN{\ctr}{e}{\dspec}}
(Figure~\ref{fig:check_spec}) to verify that the body \code{e} satisfies the supplied specification \code{\dspec}, then records the
closure's specification in the post-state.
\begin{figure}[!h]
\hspace{-1em}
  {\small
    \[
\begin{array}{c}
      \incrRule{
        \HTripleIN{\ctr~{\sep}~\ctr_1}{e}{\dspec_1}\cdots\HTripleIN{\ctr~{\sep}~\ctr_n}{e}{\dspec_n}
        }{\HTripleIN{\ctr}{e}{\specCase{\ctr_1{\caseRA}\dspec_1;\cdots;\ctr_n{\caseRA}\dspec_n}}
  }{Spec{-}Case}
\end{array}
\hspace{-1em}
\begin{array}{c}
      \incrRule{
        \HTriple{\ctr}{e}{\specEns{r}{\ctr_1}} \quad \entailS{\ctr_1}{\ctr_2}
        }{\HTripleIN{\ctr}{e}{\specEns{r}{\ctr_2}}
  }{Spec{-}Ens}
\end{array}
    \]
  }

  \caption{Checking Hoare Rules for Type Specification}
  \label{fig:check_spec}

\end{figure}

The checking judgement
\coden{\HTripleIN{\ctr}{e}{\dspec}}
is used when processing lambda abstractions ($Lambda$
rule): the specification \code{\dspec} is supplied as an input, and the rule verifies that the body \code{e} satisfies it.
$Spec{-}Case$ dispatches on each case guard, prepending the guard \code{\ctr_i} to the pre-state and checking
the body against the corresponding \code{\dspec_i}. $Spec{-}Ens$ checks that the forward post-state \code{\ctr_1} entails the
declared postcondition \code{\ctr_2}.

While we support higher-order functions in our Hoare rules, we currently restrict function type
parameters to flow-insensitive pure types, since we rely on a two-stage pre/post specification
without higher-order constraints. This allows such functions to be treated as pure functions
without any type mutation. Simple parameter variables are designated to carry singleton types
(see Appendix~\ref{sec:singleton}), which reduces the use of quantified type variables.

\subsection{Worked Example}
\label{sec:we}

To illustrate how the rules interact, we trace the derivation for the \code{inc\_transform} function from
Sec.~\ref{sec:flow-sen}: \code{inc\_transform(x) = let~v = str\_of\_int(!x+1)~in~update(x,v)}, whose specification is \code{req[x]~\heapto{x}{Ref(Int)}}\coden{~ens[r]~\heapto{x}{Ref(Str)} {\wedge} r:()}. Starting from pre-state \code{\ctr \equiv \heapto{x}{Ref(Int)}}, the $Let$ rule sequences the two sub-expressions: the binding dereferences the cell (\code{!x{:}Int}), adds one, and applies \code{str\_of\_int} to obtain \code{v:Str}, each step an intermediate $Call$ applying the callee's specification via \code{(\ctr;\dspec)}. The body \code{update(x,v)} then strong-updates the cell through the $Call$ rule, \emph{consuming} \code{\heapto{x}{Ref(Int)}} and \emph{re-emitting} \code{\heapto{x}{Ref(Str)}}. The resulting post-state \code{\heapto{x}{Ref(Str)} {\wedge} r:()} matches the declared
postcondition, so the specification is verified. The $Frame$ rule carries any ownership disjoint from \code{x} across the sub-expressions unchanged.



\hide{
  To provide support for higher-order functions that
are flow-sensitive with
separation types, we will need to utilize
a more expressive specification logic, called
{\em staged logics} \cite{Darius:FM24}, which
would have allowed each
function-type parameter to be instantiated
with the corresponding specification with separation types.
This major extension is left for a future investigation.
}

\hide{
  This \code{\casewedge} operator allows us to perform case analysis
on separation types that may help strengthen the current program state to
guarantee successful heap entailment. We show below how
a set of current states from \coden{\ctr} is being matched
against different case specifications of heap form \code{\heap}.
These examples show how bi-abduction is able to
obtain a pure pre-condition \code{\pure_0} in addition to
performing frame inference for \coden{\ctr_s} that also took the
borrowing mechanism into account.}

\hide{
\begin{align*}
&(i)~ \heapto{x}{\gtype{Ref}{\odot T}}&\casewedge\quad&\heapto{x}{\gtype{Ref}{\odot A}} ~ &\equiv~&~(T=A)\\
&(ii)~ x:{\gtype{Ref}{\odot T}}        &\casewedge\quad& \heapto{x}{\gtype{Ref}{\oplus A}@B}   &\equiv~& x:{\gtype{Ref}{\odot T}}\land \false\\
                                          &&&& \equiv~& \false  \\
&(iii)~ \heapto{x}{\gtype{Ref}{\ominus T}}&\casewedge\quad& \heapto{x}{\gtype{Ref}{\odot A}}     &\equiv~& \gtype{Ref}{\ominus T}<:\gtype{Ref}{\odot A}
  \\
                                          &&&& \equiv~& \false  \\
&(iv)~ \heapto{x}{\gtype{Ref}{\odot Int}}&\casewedge\quad& \heapto{x}{\gtype{Ref}{\ominus Str}@B} &\equiv~&\heapto{x}{\gtype{Ref}{\odot Int}}\land \code{\ominus Str<:\odot Int}
                                           \\
                                          &&&& \equiv~& \false  \\
&(v)~ \heapto{x}{\gtype{Ref}{\odot Int\lor Str}}&\casewedge\quad& \heapto{x}{\gtype{Ref}{\ominus Str}@B} &\equiv~&\heapto{x}{\gtype{Ref}{\odot Int\lor Str}}\\
                                              &&&&&  \land \code{Str<:(Int\lor Str)}\\
                                          &&&& \equiv~& \heapto{x}{\gtype{Ref}{\odot Int\lor Str}}\\
&(vi)~ \heapto{x}{\gtype{Ref}{\odot Int}}&\casewedge\quad& \heapto{x}{\gtype{Ref}{\oplus Any}@B} &\equiv~& \heapto{x}{\gtype{Ref}{\odot Int}}\land \code{Int<:Any} \\
                                          &&&&  \equiv~& \heapto{x}{\gtype{Ref}{\odot Int}}  \\
&(vii)~ \heapto{x}{\gtype{Ref}{\odot Any}}  &\casewedge\quad& \heapto{x}{\gtype{Ref}{\oplus Int}@B} &\equiv~& \heapto{x}{\gtype{Ref}{\odot Int}}\land \code{Any<:Int}\\
                                          &&&& \equiv~& \false
\end{align*}
\noindent
Example (i) shows how heap node \code{\heapto{x}{\gtype{Ref}{{\odot}T}}} in current state is being
removed with propagated \code{T{=}A} to support a subsequent strong update.
Example (ii) shows a mismatch between pure type and separation type
is handled by inferring a contradiction.
Example (iii) shows a mismatch from the variances \code{{\ominus}{{<:}}{\odot}}~\code{\equiv~false}.
Example (iv) shows a mismatch from \code{\heapto{x}{\gtype{Ref}{{\ominus}Str}}@B}
that
can be resolved by a weakening of
\code{\heapto{x}{\gtype{Ref}{{\odot}Int}}}
to
\code{\heapto{x}{\gtype{Ref}{{\odot}Int{\vee}Str}}}
which then succeeds in (v).
Example (vi) shows how pre-state \code{\heapto{x}{\gtype{Ref}{{\odot}Int}}}
is preserved by the borrow mechanism.
Example (vii) shows why we cannot read a \code{{\oplus}Int} from \code{{\odot}Any}
from such a \code{Ref} location.







}

\section{Type Predicates Subsume GADTs and Liquid Types}
\label{sec:compare}

This section compares our type predicate framework with two closely related approaches: Generalised Algebraic
Data Types (GADTs), which encode structural invariants into the type structure, and liquid types, which refine
base types with logical predicates. We show that pure type predicates subsume both, and that separation type
predicates go further still into heap ownership, a dimension neither GADTs nor liquid types can reach.

Pure type predicates are closely related to the recursive measures of Kawaguchi et al. \cite{KawaguchiRJ2009}. These measures encode structural invariants as terminating first-order functions over ADTs for use in liquid type refinements. Separation type predicates, in turn, are closely related to the user-defined inductive heap predicates of separation logic verifiers such as HIP \cite{ChinDNQ2012}. Our contribution unifies both within a single type logic, as first-class types subject to the Boolean algebra of types, semantic subtyping, and a stability analysis distinguishing which predicates survive aliased mutation.

Table~\ref{tab:features} places our framework alongside OCaml, Rust, and Liquid Haskell across
nine capabilities central to this paper; \toolname{} is the only one that supports all nine.
The rest of this section examines the two closest approaches, GADTs and liquid types, in detail.

\begin{table}[h]
\centering
\small
\caption{Feature comparison: $\checkmark$~=~full, $\sim$~=~partial, $\times$~=~no support.}
\label{tab:features}
\begin{tabular}{lcccc}
\toprule
\textbf{Feature}
  & \textbf{TypeHL}
  & \textbf{OCaml}
  & \textbf{Rust}
  & \textbf{Liquid Haskell}\\
\midrule
Pure types                 & $\checkmark$ & $\checkmark$ & $\checkmark$ & $\checkmark$ \\
GADTs                      & $\checkmark$ & $\checkmark$ & $\times$ & $\checkmark$ \\
Type Predicates            & $\checkmark$ & $\times$ & $\times$ & $\checkmark$ \\
Separation types           & $\checkmark$ & $\times$ & $\sim$   & $\times$ \\
Separation predicates      & $\checkmark$ & $\times$ & $\times$ & $\times$ \\
\midrule
Flow-sensitive types       & $\checkmark$ & $\times$ & $\checkmark$ & $\times$ \\
Must-aliasing              & $\checkmark$ & $\times$ & $\times$     & $\times$  \\
Path-sensitive type specs  & $\checkmark$ & $\times$ & $\times$     & $\sim$    \\
Err/Exc/Abrt as types      & $\checkmark$ & $\times$ & $\times$     & $\times$  \\
\bottomrule
\end{tabular}
\end{table}

\subsection{Comparison with GADTs}
Type predicates
in our framework strictly subsume GADTs in two important respects.

First, {\em type predicates can express relational index constraints that GADTs cannot}. GADT type parameters
can only be equated to specific constructor-determined types; they cannot express {\em inequalities} or arithmetic
relations between indices.

For example, a sorted list predicate with bounds:
\begin{lstlisting}[name=functor,numbers=none]
 pred x:SortedList(T,lo,hi) = x:Nil  (*@$\vee$@*) (*@$\exists$@*) v,r.
       x:Cons(v,r)(*@$\wedge$@*)v:T(*@\code{{\wedge} lo{\leq}v {\wedge} v{\leq}hi {\wedge}}@*)r:SortedList(T,v,hi)
\end{lstlisting}
requires the relational constraint \code{lo {\leq} v {\leq} hi} between index values, something GADTs cannot express without
full dependent types (e.g. Agda or Idris). Type predicates in our framework express this naturally as a logical
formula.

Second, and more practically, type predicates do not require a {\bf new algebraic data type for
each new invariant}. With GADTs, every new invariant demands a completely new data type definition with new
constructors encoding the invariant into the type structure. In our framework, the same underlying data
representation, the same \code{Cons} and \code{Nil} constructors, can be given progressively richer type predicates without
any change to the data definition itself:
\begin{lstlisting}[name=functor,numbers=none]
 pred x:List(T)           = ... (* plain list *)
 pred x:List(T,n)         = ... (* length-indexed *)
 pred x:SortedList(T,lo,hi) = ... (* sorted, with bounds *)
\end{lstlisting}
All three predicates describe values built from the same constructors. This separation between {\em data representation}
and {\em type invariant} means that existing code and data structures need not be refactored when a stronger invariant is
required; only the specification changes. Separation type predicates (e.g. \code{\heapto{x}{List(T)}}) extend this further to heap
ownership and spatial conjunction, which are entirely outside the scope of GADTs.

\subsection{Comparison with Liquid Types}

Pure type predicates are also closely related to liquid types (as
in Liquid Haskell), which refine base types with logical predicates, for example, \code{\{v:Int | v > 0\}} denotes
positive integers. Like liquid types, our solution via type predicates
attaches logical invariants to values as formulas rather than encoding them into
the type structure as GADTs do. Both approaches also allow the same underlying data representation to carry progressively
stronger specifications without changing the data definition.
There are, however, three important differences.

\begin{itemize}
\item Liquid types refine base types with first-order predicates
over primitive values (integers, booleans); our pure type predicates are recursive, handling arbitrary inductively
defined data structures such as lists, trees, and GADTs.

\item Liquid types are restricted to decidable refinement
logics (typically linear arithmetic) to keep SMT solving tractable; our predicates are not confined to a decidable
fragment and can express arbitrary relational and structural invariants.

\item Liquid types have no counterpart to
separation type predicates: there is no notion of heap ownership or spatial conjunction in Liquid Haskell\footnote{Nevertheless, there is a recent
retrofit of liquid types to Rust \cite{LehmannGVJ2023} which uses the
ownership and borrowing mechanisms of Rust.}. Pure
type predicates thus generalise liquid types from flat refinements over base values to recursive predicates over
inductively defined structures. Separation type predicates extend this further to heap ownership, co-existing with pure types,
a dimension outside the scope of liquid types.

\end{itemize}

Though type predicates are not constrained to a decidable fragment,
it is nevertheless quite easy to impose a set of restrictions that can
guarantee decidability for type-checking.
Appendix~\ref{sec:decidable} outlines restrictions that can be imposed on
type predicates to support decidable type-checking.

\section{Soundness of Our Type Specification Framework}
\label{sec:Soundness}

This section presents the soundness theorem for the Hoare rules of Sec.~\ref{sec:Hoare} and establishes the
connection between Hoare triples and standard function types. Sec.~\ref{sec:semantics} defines the semantics of the type logic. Sec.~\ref{sec:soundthm} states and proves
the main soundness theorem.

\subsection{Semantics of the Type Logic}
\label{sec:semantics}

We define the semantics of our Hoare logic with pure and separation types
to ensure type safety, and prove its soundness.
%
%

The semantics is given by a logical relation (Figure~\ref{fig:semantics_type}). In keeping with semantic
subtyping, types are shallowly embedded as predicates of type
\code{val \ra Prop}, so proving a type
assertion \code{v{:}t} amounts to proving the proposition
\code{t(v)} by semantic reasoning. This shallow
embedding means logical connectives for building types are lifted directly from their
propositional counterparts, and subtyping reduces to implication:
\code{t_1 <: t_2 \equiv \forall v. v:t_1 \implyArrB v:t_2}. The
encoding supports dependent singleton types (Appendix~\ref{sec:singleton}) and (co)inductive
type predicates. Step-indexing is not required because the encoding is stateless and
involves no non-(co)inductive types.





\begin{figure}[!h]
  {\small
\begin{equation*}
\begin{aligned}
&\code{\seplogicmodels{s, h}{x{:}\{c\}}}
& \code{\m{iff}}~\
& \code{\mathnormal{s}(x){=}c}
\hspace{4em}
\code{\seplogicmodels{s, h}{x_1{:}\{x_2\}}}
\hspace{1em}
\code{\m{iff}}~
\hspace{0.5em}
\code{\mathnormal{s}(x_1){=}\mathnormal{s}(x_2)}
\\[-0.1em]
&\code{\seplogicmodels{s, h}{x_1{=}x_2}}
& \code{\m{iff}}~\
& \exists\,l\,{\cdot} \code{\mathnormal{s}(x_1){=}\mathnormal{s}(x_2){=}l}
\\[-0.1em]
&\code{\seplogicmodels{s, h}{\coden{\ctr_1} \vee \coden{\ctr_2}}}
& \code{\m{iff}}~\
& \code{
  \seplogicmodels{s, h}{\coden{\ctr_1}} \orr
 \seplogicmodels{s, h}{\coden{\ctr_2}}}
\\[-0.1em]
&\code{\seplogicmodels{s, h}{\coden{\heap} \wedge \coden{\pure}}}
& \code{\m{iff}}~\
& \code{\exists \mathnormal{h_1},\mathnormal{h_2} {\cdot} \mathnormal{h}{=}\mathnormal{h_1}{o}\mathnormal{h_2} \wedge
  \seplogicmodels{s, h_1}{\coden{\heap}}}
\code{~\wedge~\seplogicmodels{s, h_2}{\coden{\pure}}}
\\[-0.1em]
&\code{\seplogicmodels{s, h}{\coden{\heap_1} \sep \coden{\heap_2}}}
& \code{\m{iff}}~\
& \code{\exists \mathnormal{h_1},\mathnormal{h_2} {\cdot} \mathnormal{h}{=}\mathnormal{h_1}{o}\mathnormal{h_2} \wedge
  \seplogicmodels{s, h_1}{\coden{\heap_1}} }
\code{~\wedge~ \seplogicmodels{s, h_2}{\coden{\heap_2}}}
\\[-0.1em]
&\code{\seplogicmodels{s, h}{\coden{\pure_1} \vee \coden{\pure_2}}}
& \code{\m{iff}}~\
& \code{
  \seplogicmodels{s, h}{\coden{\pure_1}} \orr
 \seplogicmodels{s, h}{\coden{\pure_2}}}
\\[-0.1em]
&\code{\seplogicmodels{s, h}{\coden{\pure_1} \wedge \coden{\pure_2}}}
& \code{\m{iff}}~\
& \code{
  \seplogicmodels{s, h}{\coden{\pure_1}} \andd
 \seplogicmodels{s, h}{\coden{\pure_2}}}
\\[-0.1em]
&\code{\seplogicmodels{s, h}{\coden{\neg\pure}}}
& \code{\m{iff}}~\
& \negg~(\code{
  \seplogicmodels{s, h}{\coden{\pure}}
   })
\\[-0.1em]
&\code{\seplogicmodels{s, h}{\coden{\ctr_1} * \coden{\ctr_2}}}
& \code{\m{iff}}~\
& \code{
  \seplogicmodels{s, h}{\coden{(\heap_1 * \heap_2 )\land (\pure_1 \land \pure_2)}}}~where~
\coden{\ctr_i = \heap_i \land \pure_i}
\\[-0.1em]
&\code{\seplogicmodels{s, h}{x{:}t}}
& \code{\m{iff}}~\
& \code{\exists l \cdot \mathnormal{s}(x){=}l \andd \seplogicmodels{s, h}{l{:}t}}
\\[-0.1em]
&\code{\seplogicmodels{s, h}{l{:}prim}}
& \code{\m{iff}}~\
& \code{l \in prim}~where~
\code{prim=Int{\vbar}Str{\vbar}Err{\vbar}\cdots}
\\[-0.1em]
&\code{\seplogicmodels{s, h}{\heaptoSS{l}{C(t_1,..,t_n)}}}
& \code{\m{iff}}~\
& \code{\mathnormal{h}=\{l{\mapsto}C(l_1,..,l_n)\} \andd}
\code{\wedge_{i=1}^n\seplogicmodels{s, h}{l_i{:}t_i}}
\\[-0.1em]
&\code{\seplogicmodels{s, h}{l{:}C(t_1,..,t_n)}}
& \code{\m{iff}}~\
& \code{\mathnormal{h}(l)=C(l_1,..,l_n) \andd}
\code{ \wedge_{i=1}^n\seplogicmodels{s, h}{l_i{:}t_i}}
\\[-0.1em]
&\code{\seplogicmodels{s, h}{l{:}pred(t^*)}}
& \code{\m{iff}}~\
& \code{l{:}pred(t^*)=\coden{\ctr} \andd \seplogicmodels{s, h}{\coden{\ctr}}}
\\[-0.1em]
&\code{\seplogicmodels{s, h}{l{:}\neg t}}
& \code{\m{iff}}~\
& \negg~(\code{\seplogicmodels{s, h}{l{:}t}})
\\[-0.1em]
&\code{\seplogicmodels{s, h}{l{:}t_1{\ra}t_2}}
& \code{\m{iff}}~\
& \code{\small\forall v ~{\cdot}~ (\seplogicmodels{s, h}{v{:}t_1} \rightarrow
\seplogicmodels{s, h}{l(v){:}t_2})}
\\[-0.1em]
%
\end{aligned}
\end{equation*}
}
\caption{Semantics of Type Logic Formulae}
\label{fig:semantics_type}


\end{figure}



\hide{
\begin{verbatim}
  Semantics for Types
  ===================
  // using pre/post
  s,h |- l:\x r -> req pre(x) ens post(x,r)
                    iff forall v. s[x=v],h |- pre(x)
                            => s[x=v,r=l(v)],h |- post(x,r)
  // using case-spec
  s,h |- l:\x r -> spec[x,r] iff s,h,true |- l:s\ xr -> pec[x,r]
  s,h,C |- l:\x r -> case {D1->sp1;..;Dn->spn}
                     iff /\_i s,h,C/\Di |- l:\x r -> sp_i
  s,h,C |- l:\x r -> ens[r] post(x,r)
                     iff forall v. s[x=v],h |- C
                         => s[x=v,r=l(v)],h |- post(x,r)
\end{verbatim}
}



Several clauses deserve comment. The clause for \code{\heap {\wedge} \pure} splits the heap into two disjoint parts \code{\mathnormal{h_1}}
and \code{\mathnormal{h_2}}: \code{\mathnormal{h_1}} satisfies the heap assertion \code{\heap} (ownership) and \code{\mathnormal{h_2}} satisfies the pure constraint \code{\pure} (no
ownership). The clause for \code{\heapto{l}{C(\ldots)}}
requires \code{\mathnormal{h}} to consist of exactly the single cell \code{l}, reflecting full
ownership. By contrast, \code{l{:}C(\ldots)} only requires
\code{l} to point to a \code{C}-node somewhere in \code{\mathnormal{h}}, permitting
aliasing. This is the semantic counterpart of the syntactic distinction between separation types
and pure types established in Sec.~\ref{sec:flow-sen}. The clause for pred unfolds the predicate definition,
connecting type predicates (Sec.~\ref{sec:pred}) to their set-theoretic meaning.


\subsection{Soundness Theorem}
\label{sec:soundthm}
The connection between Hoare triples and standard function types is established by a modified
function arrow \code{\to'}, defined via Hoare triples, coinciding with the standard arrow:

\begin{small}
\vspace{-1.5em}
\begin{align}
  \HTriple{\ctr_1}{e}{\specEns{r}{\ctr_2}} &\qquad \m{iff}  & (\ctr_1 \implies \forall r, \bigstepres{e}{r} \implies r \models \ctr_2) \\
  f{:}t_1{\to'}t_2 &\qquad \m{iff} & \forall v, \HTriple{v{:}t_1}{f(v)}{
  \specEns{r}{r{:}t_2}}
\end{align}
\vspace{-1.5em}
\end{small}

\hide{
\HTriple{\ctr_1}{e}{\specEns{r}{\ctr_2}} & \m{iff} & (\ctr_1 \implies \forall r, \bigstepres{e}{r} \implies r \vDash \ctr_2) \\
  f{:}t_1{\to'}t_2 & \m{iff} & \forall v, \HTriple{v{:}t_1}{f(v)}{
  \specEns{r}{r{:}t_2}}

}

Defining a modified function arrow $\to'$ in terms of a standard Hoare triple (which is also defined extensionally, using a big-step relation), we can see that it is equivalent (a lifting of $\iff$) to the regular function arrow from \cref{fig:semantics_type}.

\begin{lemma}[Agreement between function arrows]
  $t_1 \to t_2 \equiv t_1 \to' t_2$.
\end{lemma}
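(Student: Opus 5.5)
The plan is to prove the equivalence pointwise: for every store $s$, heap $h$ and value $l$, we show $s,h \models l{:}t_1{\to}t_2$ holds iff $l{:}t_1{\to'}t_2$ holds. By the shallow embedding of Sec.~\ref{sec:semantics}, types are predicates $\code{val \ra Prop}$. So ``$\equiv$'' between types is propositional equivalence lifted over values, i.e.\ propositional extensionality applied under a $\forall l$.

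The first step is to unfold both definitions. Figure~\ref{fig:semantics_type} gives the left side as $\forall v.\,(s,h\models v{:}t_1) \rightarrow (s,h\models l(v){:}t_2)$. Definition (2) of $\to'$, with the Hoare triple unfolded by (1), gives the right side as
\[\forall v.\,(s,h\models v{:}t_1)\implies \forall r.\,(\bigstepres{l(v)}{r} \implies r\models r{:}t_2).\]
The two statements then share the universal quantifier over $v$ and the hypothesis $v{:}t_1$. What remains is to show, for a fixed $v$ with $v{:}t_1$, that
\[
s,h\models l(v){:}t_2 \quad\text{iff}\quad \forall r.\ \bigstepres{l(v)}{r}\implies r\models r{:}t_2 .
\]

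The second step, which I expect to be the main obstacle, is to reconcile ``the application $l(v)$ has type $t_2$'' with ``every big-step result of $l(v)$ has type $t_2$''. I would handle this by fixing the interpretation of $l(v)$ in the arrow clause in terms of the big-step semantics of \cref{sec:bigstep}. If application is read as a (partial) function on values, I would use determinism of the big-step relation: $l(v)$ has at most one result $r$, and the membership $l(v){:}t_2$ is read as ``for the result, if any''. This is the partial-correctness reading fixed in the remark on $\bot$. Under that reading, the forward direction instantiates $r$ with the unique result. The backward direction follows because any result of $l(v)$ is that unique value.

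For divergence, both sides hold vacuously. This is consistent with $\bot$ denoting non-returning computations. The case where the result is $\Abrt$ or $\Err$ needs no special treatment, because those are ordinary values classified by $t_2$ on both sides.

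The final step is to re-abstract over $v$ and $l$ and conclude the equivalence of the two predicates. In the Lean mechanisation I would do this with funext and propext.

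The only real care needed is the store and heap. The assertion $r\models r{:}t_2$ in (1) should be read as $s[r{:=}\cdot],h\models r{:}t_2$ with $r$ fresh. Since type assertions on values are stateless, the clause for $x{:}t$ reduces to the value-level clause, so the store update does not matter.
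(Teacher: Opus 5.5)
Your outline follows the paper's route: unfold the arrow clause of Figure~\ref{fig:semantics_type} and definitions (1)--(2), share the $\forall v$ and the hypothesis $v{:}t_1$, and settle what remains with the big-step semantics. The gap is in how you settle what remains. You read application as a partial function and invoke determinism of the big-step relation, but the relation of App.~\ref{sec:bigstep} is not deterministic. Its $\mathit{Ref}$ rule allocates an arbitrary $\ell\notin\mathit{dom}(h)$, and the paper keeps allocation relational on purpose (Sec.~\ref{sec:lean}), because the Frame rule depends on it. For an allocating closure such as $\lambda x.\,\mathit{ref}\;x$, the application $l(v)$ has one result per fresh location. So ``instantiate $r$ with the unique result'' and ``any result is that unique value'' both fail. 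The relation is not the graph of any partial function on values, so your reading of $l(v){:}t_2$ cannot be matched against the all-results quantifier of (1) this way. Your argument would survive only in a heap-free fragment whose evaluator is a function; determinism up to renaming of locations would need a further invariance lemma.

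The fix removes the obstacle rather than solving it. Read the arrow clause behaviourally, as the paper does (``the behavioural function type refers to evaluation''): $s,h\models l(v){:}t_2$ means that every big-step outcome of applying $l$ to $v$ lies in $t_2$. The real use of the operational semantics is then a step you passed over silently. Definition (2) speaks about the call expression $f(v)$, not about the value $l$. You therefore need the $\mathit{App}$ rule and its inversion, together with $s(f)=l$ from the clause for $x{:}t$, to identify the outcomes of $f(v)$ with those of running the closure body with $v$ substituted for its parameter. After that the two sides are literally the same quantification over outcomes, which is why the paper calls the arrows definitionally equivalent. No uniqueness of results is needed, and divergence, \code{Err} and \code{Abrt} outcomes are handled uniformly without separate argument.
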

\vspace{-0.5em}

This lemma shows that the behaviours of well-typed programs are exactly those of programs
specified by Hoare triples: the two notions of function type are definitionally equivalent. It is
proved by unfolding both definitions and using the big-step operational semantics.



The soundness of the Hoare rules is stated as follows.


\vspace{-0.5em}
\begin{restatable}[Soundness]{theorem}{forwardsound}
\label{thm:forwardsound}
Given
  $\HTriple{\ctr_1}{e}{\specEns{r}{\ctr_2}}$, 
$\forall s,h,s',h',v$~if~$\seplogicmodels{s, h}{\ctr_1}$ holds
  and $\bigstepres{s,h,e}{s',h',\resnorm{v}}$,
  then $\seplogicmodels{s'{+}[(r,v)], h'}{\ctr_2}$.
\end{restatable}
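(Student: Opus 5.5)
We prove Theorem~\ref{thm:forwardsound} by induction on the derivation of $\HTriple{\ctr_1}{e}{\specEns{r}{\ctr_2}}$, strengthened so that the induction hypothesis is available for every sub-derivation of the forward judgement and of the checking judgement $\HTripleIN{\ctr}{e}{\dspec}$ of Figure~\ref{fig:check_spec}. Inside each case we invert the big-step derivation $\bigstepres{s,h,e}{s',h',v}$; the semantics is deterministic and syntax-directed on core expressions, so each rule of Figure~\ref{fig:forward_rules} meets exactly one evaluation rule. For the checking judgement we prove the companion statement: if $\HTripleIN{\ctr}{e}{\dspec}$, then every state satisfying $\ctr$ and a case guard $\ctr_i$ evaluates $e$ to a result satisfying the corresponding branch of $\dspec$. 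The two statements are proved by simultaneous induction, because $Lambda$ appeals to the checking judgement and $Spec{-}Ens$ appeals back to the forward judgement.

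The structural and simple cases come first.
\begin{itemize}
\item $Conseq$ follows from soundness of $\vdash$ as semantic entailment (assumed from the entailment procedure) together with the induction hypothesis.
\item $Var$ and $Val$ are immediate: the store is extended with $(r,v)$, where $v=s(x)$ or $v$ is the literal. Since $r$ is fresh, $\ctr$ still holds under $s'+[(r,v)]$ (a store-weakening lemma for fresh variables), and $r{:}\{x\}$ or $r{:}ty(v)$ holds by Figure~\ref{fig:semantics_type}.
\item $Cast$ splits on whether $s,h\models x{:}t$, which matches the two evaluation outcomes.
\item $Match$ uses the fact that the selected pattern $p_i$ satisfies $x{:}p_i\vee\heapto{x}{p_i}$; we apply the hypothesis for $e_i$ and inject into the disjunction.
\item $Let$ composes the two hypotheses. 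It needs a substitution lemma ($s+[(r_1,v_1)]\models\ctr_1$ iff $s+[(x,v_1)]\models\subst{r_1}{x}{\ctr_1}$) and the observation that the intermediate heap $h_1$ is the pre-heap for $e_2$.
\item $Constr$ allocates a fresh location $l\notin dom(h)$. The new cell satisfies $\heaptoSS{l}{C(T_1..T_n)}$, with $T_i$ instantiated to the singletons $\{s(x_i)\}$. The old heap still satisfies $\ctr$ by disjointness, and pure assertions are preserved because they only grow with the heap. For this we need a monotonicity lemma: pure assertions $x{:}t$ are stable under heap extension.
\end{itemize}

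The Frame rule needs a locality (frame) property of the big-step semantics. If $e$ runs from $h_1\uplus h_0$, where $h_1$ satisfies the part owned by $\ctr_1$, then it never touches $h_0$'s separation cells. So $h'=h_1'\uplus h_0$, and the frame $\ctr$ still holds. Pure components of $\ctr$ survive heap change only under the stability conditions of Definitions~\ref{def:s-pure-type} and~\ref{def:pure-type-pred}, so these conditions enter here as hypotheses. We take them as well-formedness side conditions on states, preserved by every rule, and check that preservation case by case.

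$Lambda$ is handled by the companion statement. The closure value satisfies $f{::}\tlambda{x^*\,r}{\dspec}$ by the checking-judgement hypothesis, using that $\exists v^*.\xpure(\ctr)$ is heap-independent and so holds at any call site. We also use the Agreement lemma to identify this with the arrow semantics.

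The main obstacle is the $Call$ case, i.e.\ soundness of specification composition $\ctr\,;\,\dspec$ and of $\casewedge$. We state and prove a separate lemma by induction on $\dspec$:
\begin{itemize}
\item If $s,h\models\ctr$, $f$'s specification is semantically valid, and $f(x^*)$ evaluates to $(s',h',v)$, then $s'+[(r,v)],h'\models\ctr\,;\,\dspec$.
\item For the case form, the pre-state satisfies exactly one guard $\ctr_i$ or none, by pairwise disjointness; if none, the implicit otherwise clause yields $r{:}\top$.
\item In the matching branch, Definition~\ref{def:casewedge} relies on soundness of bi-abduction: $\ctr\wedge\pure_0\vdash\heap\rightsquigarrow\ctr_s$ gives $h=h_\heap\uplus h_s$ with $h_s\models\ctr_s$.
\item The callee's postcondition then describes $h_\heap'$, and we recombine via frame locality into $\ctr_s\sep\ctr_2$.
\end{itemize}
I expect the delicate points to be these: justifying that the abduced $\pure_0$ actually holds in the concrete state, or else that the branch is vacuous; and handling the shared pure heap $h_2$ in the $\heap\wedge\pure$ clause, where pure facts must remain true after the callee's type-consistent writes. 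I would first try to discharge both by assuming $\vdash$ and bi-abduction are sound as oracles and proving the stability-preservation lemma once.
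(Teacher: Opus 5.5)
Your overall plan matches the paper's proof: induction on the derivation, one compatibility argument per rule, with the checking judgement handled simultaneously. But the Frame case, and with it the ``recombine via frame locality'' step of your Call lemma, rests on a locality property that is false as you state it.

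The triple of Theorem~\ref{thm:forwardsound} constrains only runs that terminate normally. A run from the small heap $h_1$ that aborts or takes another path therefore says nothing about the run from $h_1\uplus h_0$. A \code{match} type-tests the cell at $s(x)$. That test can fail from $h_1$ (aborting, or falling to a later pattern) yet succeed from $h_1\uplus h_0$ when the cell is owned by the frame. The large run then takes a branch the premise derivation never covered. So neither ``$e$ never touches $h_0$'' nor $h'=h_1'\uplus h_0$ with a matching run from $h_1$ follows.

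This is exactly the point the paper's mechanisation has to treat specially. Its operational locality theorem carries a side condition on the frame across \code{match} (\texttt{FrameNoCon}). It also gives a countermodel showing that without it a frame-owned constructor cell can flip a \code{match} from abort to normal termination. Stability of the frame's pure part, which is what you concentrate on, does not address this. You need such a side condition on what the frame may own, or alternatively a semantics in which inspecting a cell outside the footprint faults, together with a no-fault component in the triple.

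Your claim that the semantics is deterministic also works against you. With deterministic allocation, the run from $h_1$ may choose a location in $\heapdom{h_0}$, which breaks the disjoint decomposition. The $Ref$ rule only requires $\loc\notin\heapdom{h}$, and the paper keeps allocation relational precisely so that the callee-footprint lemma becomes provable.

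A smaller issue of the same flavour is in your Lambda case. $\exists v^*{\fdot}\xpure(\ctr)$ is not heap-independent under Figure~\ref{fig:semantics_type}, since $l{:}C(t_1,\ldots,t_n)$ inspects $h(l)$. What carries it to the call site is the stability invariant you planned for Frame, so that invariant must also be built into the semantic reading of the closure's specification.
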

\vspace{-1em}

\begin{proof}

 The proof proceeds by induction on the derivation of the Hoare triple, with a compatibility lemma
for each rule in Figure~\ref{fig:forward_rules}. The use of big-step semantics in Theorem~\ref{thm:forwardsound} is deliberate: big-step
semantics guarantees that if an expression terminates it reaches its postcondition, and
non-terminating expressions satisfy the postcondition vacuously. This means a single theorem
simultaneously captures both type preservation (the type of the result matches the
postcondition) and progress (no well-typed, terminating expression gets stuck). Traditional type
systems typically require two separate theorems (preservation and progress) using small-step
semantics.

\end{proof}
\hide{
In a type system, one wishes to prove its type soundness via {\it Preservation} and {\it Progress}. Briefly speaking,
{\it Preservation} makes sure that flow-insensitive pure type of each expression remains unchanged after every evaluation step, while {\it Progress}
makes sure that an expression is either a value or can be further reduced. Our Theorem \ref{thm:forwardsound} is a stronger result
which implies both of them (and also supports
flow-sensitive types) since the theorem states that given the pre/post, the reduction would eventually {\it satisfy} the postcondition ({\it Preservation}).
The program state will eventually {\it reach} the postcondition (meaning that the evaluation is completed without encountering \code{\Abrt} failure), or it is non-terminating
({\it Progress}).}

\vspace{-1em}
\begin{restatable}[Well-typed programs  never abort]{definition}{donotabort}
\label{thm:donotabort}
$e$ is well-typed under satisfiable
$\ctr_1$ if and only if~$\HTriple{\ctr_1}{e}{\specEns{r}{\ctr_2}}$ 
   and $\ctr_2$ does not contain \code{Abrt}.
\end{restatable}

This definition formalises the refined motto of Sec.~\ref{sec:error}: well-typed programs must never
abort. Note that \code{Err} (runtime errors) may appear in \code{\ctr_2}; they are tolerated. Only \code{Abrt} is
forbidden, since \code{Abrt} represents a compile-time type error that the verification framework must
rule out entirely.

\section{Meta-Theory and Evaluation via Lean}
\label{sec:lean}
We have produced a fully machine-checked mechanisation of the
meta-theory of \toolname{} in the Lean~4 proof assistant; and, from it, an executable,
self-certifying type-checker (App.~\ref{sec:leanchecker}; machinery detailed in App.~\ref{sec:leanapp}). The development is self-contained: it depends on no external
library such as \texttt{mathlib}. Every theorem is proved without \texttt{sorry} or
\texttt{admit}. The meta-theorems, and the soundness proofs of the checker itself, depend
only on the standard classical axiom base
$\{\textsf{propext},\ \textsf{Classical.choice},\ \textsf{Quot.sound}\}$; individual examples
checked at a ground instantiation additionally carry a compiler-reduction axiom in their own
footprint, while the symbolic certificates --- those quantified over open types or heights,
including the flagship red-black insert --- are fully axiom-clean. It comprises roughly
$78{,}400$ lines: three layers, each a sound type system covering the paper's central
ingredients, plus certified index engines (about $14{,}700$ lines) and, per layer, the
reflective type-checker.

\paragraph{Three layers.}
\emph{(i)~The pure fragment} (heap-free) formalises semantic subtyping as a Boolean
algebra of types, path-sensitive case specifications with the implicit
$\code{\_}\,{\caseRA}\,\code{r{:}\top}$ clause, the \code{\Err}/\code{\Abrt} discipline, and the
forward Hoare rules as an inductive derivation relation; it establishes
\emph{soundness} (Theorem~\ref{thm:forwardsound}) and \emph{well-typed programs never
abort} (Definition~\ref{thm:donotabort}).
\emph{(ii)~The separation layer} adds imperative \code{Ref} cells and separation types
\code{\heapto{x}{t}}, with allocating constructors, strong update, a relational big-step
semantics, the \emph{Frame} rule, and soundness over it.
\emph{(iii)~The higher-order layer} adds first-class (environment-capturing,
multi-argument) closures, the lambda/checking rules, and the \emph{agreement} lemma
relating the type arrow to the Hoare-triple arrow.

\paragraph{Key meta-theoretic choices.}
Several decisions are worth highlighting, as they are exactly the points where the
mechanisation is delicate.
\begin{itemize}[nosep, leftmargin=1.5em]
\item \emph{Soundness by rule induction.} The Hoare rules are the constructors of an
  inductive relation, and soundness is a single theorem proved by induction over a
  derivation. In the pure layer this is a total, recursion-bounded big-step semantics (so recursion is
  sound and a single theorem subsumes preservation and progress); the separation layer uses
  a \emph{relational} one.
\item \emph{Types interpreted by recursion on the type, not the heap.} A type is a
  predicate on values; for heap types the interpretation looks through the heap but
  recurses only on the (finite) type structure. Consequently cyclic \code{Ref} heaps pose
  no well-foundedness problem and \emph{no step-indexing is required}, avoiding the
  circularity noted in Sec.~\ref{sec:semantics}.
\item \emph{Relational allocation enables Frame.} Allocation chooses \emph{some} fresh
  location rather than a deterministic one; this makes the heap-locality
  (callee-footprint) lemma underlying the Frame rule provable. It holds over the
  full language, including function calls and \code{match}.
\item \emph{Shallow, registry-based type predicates.} Recursive type predicates are
  well-founded Lean definitions, each carrying its own termination proof, side-stepping a
  generic termination check.
\item \emph{Stratification for higher-order types.} The behavioural function type refers to
  evaluation, while the cast rule must decide types; we break the apparent circularity by a
  structural decision procedure for casts and a semantic interpretation for arrows, bridged
  by an agreement lemma.
\end{itemize}

\paragraph{Flagship developments.}
Beyond the core meta-theory, the development mechanises four case-study families, each
with honest boundaries.
\emph{(i)~Red-black balancing.} The five-function Okasaki \code{insert} is verified at
every height, $\code{insert} : \forall h.\,(\code{Int},\code{RBTree}(\code{Int},c,h)) \Rightarrow
\code{RBTree}(\code{Int},\code{Black},h) \vee \code{RBTree}(\code{Int},\code{Black},h{+}1)$, with a recursive
\code{member}. Balancing needs colour-correlated child types to survive pattern matches (each
\code{match} branch gets one disjunct per node package) and a sorted quantifier for the height
index (both part of the mechanised rule set, not ad-hoc lemmas). It is later discharged
\emph{automatically} by the reflective checker (App.~\ref{sec:leanchecker}).
\emph{(ii)~Heap predicates and a typed queue.} The separation layer registers
\emph{inductive heap predicates} with exact footprints (list segments and a packed two-pointer
queue), closing the last coverage gap of App.~\ref{sec:leanapp}. On top sit \emph{ghost-free
packed preconditions}: a caller sees one opaque invariant atom, unfolded through registered
templates. Both \code{enqueue} and \code{dequeue} are certified
against the bare invariant, \code{dequeue} with the honest $\code{\Err}\vee \code{Int}$ postcondition
covering empty and non-empty cases.
\emph{(iii)~Frame over heap predicates.} The Frame rule extends to the new heap-predicate atoms
on the \emph{match-free} fragment (\code{SemValid.frame\_matchfree}), with two flagships: a
disjoint packed queue framed across \code{enqueue}, and a content-carrying cell framed across a
strong update. A \emph{locality countermodel} shows the match restriction is essential (an
aliased content atom can flip a \code{match} from abort to normal execution), while the
corresponding semantic-triple question remains open and documented.
\emph{(iv)~Sized types on a certified arithmetic engine.} Sized-list and sized-tree
specifications ($\code{List}(T,n)$; trees sized $1{+}a{+}b$) are registered over a general
\emph{affine-equality} index engine. Two sized-list flagships are checker-certified end-to-end
for all integers: $\code{append} : \code{List}(T,n)\times \code{List}(T,m) \Rightarrow \code{List}(T,n{+}m)$
and exact-length $\code{length} : \code{List}(T,n) \Rightarrow \{n\}$ (uninhabited at $n{<}0$, content
at $n,m\ge 0$). A sized-tree node's child sizes are not fixed by the parent's: an
\emph{existential} decomposition ($s = 1{+}a{+}b$) no single-pair \code{match} expresses; a
declared-post frame gate crosses this wall, certifying \code{size} and \code{mirror}
(size-preserving destruction and reconstruction) end-to-end. There is no remaining documented index-language boundary. A
companion \emph{inequality}-fragment engine (a certified unit-Farkas procedure) checker-certifies
sorted insertion on the paper's $\code{SortedList}(T,lo,hi)$ predicate, $\code{insert} : (\code{Int},\
\code{SortedList}(\code{Int},lo,hi)) \Rightarrow \code{SortedList}(\code{Int},lo,hi)$, with \code{head} and \code{member} via
a bespoke $\forall$-element destruction rule.

\paragraph{Correspondence with the paper.}
The principal results are mechanised as follows (Lean identifiers in \texttt{typewriter}).

{\small
\begin{center}
\renewcommand{\arraystretch}{1.2}
\begin{tabularx}{\linewidth}{@{}>{\raggedright\arraybackslash}X >{\raggedright\arraybackslash}X@{}}
\Xhline{2\arrayrulewidth}
Paper result & Lean theorem \\
\Xhline{2\arrayrulewidth}
Soundness (Theorem~\ref{thm:forwardsound}) & \texttt{Pure.soundness}, \texttt{Sep.soundness}, \texttt{HO.soundness} \\
Well-typed programs never abort (Def.~\ref{thm:donotabort}) & \texttt{Pure.never\_abort} \\
Agreement between function arrows (Lemma) & \texttt{HO.agreement} \\
Frame rule & \texttt{Sep.SemValid.frame} (via \texttt{Sep.BigStep.frame\_fwd\_typed}) \\
Irreversible weakening (Sec.~\ref{sec:flow-sen}) & \texttt{Sep.sep\_weakens\_to\_pure} \\
\hline
\multicolumn{2}{@{}l}{\emph{Flagship developments}} \\
Red-black insert balanced at every height & \texttt{rbInsert\_semvalid}, \texttt{rbInsert\_semvalid\_reflective} \\
Typed queue, ghost-free packed preconditions & \texttt{enqueue\_semvalid\_packed}, \texttt{dequeue\_semvalid\_packed} \\
Frame over heap-predicate atoms (match-free) & \texttt{SemValid.frame\_matchfree} \\
Sized-list append/length, $\forall$ integers (checker-certified) & \texttt{append\_semvalid}, \texttt{length\_semvalid} \\
Sized-tree destruction/reconstruction preserve size (checker-certified) & \texttt{size\_semvalid}, \texttt{mirror\_semvalid} (semantic form: \texttt{streeAt\_mirrorV}) \\
Sorted insertion over \code{SortedList(T,lo,hi)} (checker-certified) & \texttt{insert\_semvalid}, \texttt{head\_semvalid}, \texttt{member\_semvalid} \\
\Xhline{2\arrayrulewidth}
\end{tabularx}
\end{center}
}

\noindent
Beyond these and the flagship developments above, the development mechanises type-variable polymorphism (e.g.
\code{id{:}\forall T.\,T{\to}T} instantiated at multiple types, and
\code{length{:}\forall T.\,List(T){\to}Int}), user-definable recursive predicates (a
\code{Tree} predicate defined entirely in the registry), multi-argument functions,
environment-capturing multi-argument closures, and pure higher-order functions ranging over
recursive-predicate data (\code{map}, \code{filter}, and \code{fold\_left} over
\code{List(T)}). Each layer ships machine-checked example
programs together with a non-vacuity audit: concrete operational executions witness that
the evaluator produces real results and that the soundness statements are not vacuously
satisfied. The Lean sources, a
theorem index, and build instructions accompany the artifact.
App.~\ref{sec:leanapp} details the machinery (proof-script sizes,
checking times, and standard benchmark and library coverage), while
App.~\ref{sec:leanchecker} describes the reflective, self-certifying type-checker.

\subsection*{Evaluation}
The checker checks a suite of examples reflectively, by layer; none needs a
hand-written proof. At a ground type this covers the M1 arithmetic examples, the separation programs
including strong-update \code{assign}, and the M3 higher-order examples. For eight higher-order
functions (the recursive list functions \code{map}, \code{filter}, \code{fold\_left},
\code{append}, \code{rev} and the three \code{Option} functions) we discharge the whole-program
\code{WFEnv} obligation and lift it to full semantic validity. All eight, plus pure \code{id},
\code{length}, and \code{nth}, are \emph{additionally} certified polymorphic in their element types
by the native-free symbolic mode. The flagship result is \code{RBTree\_insert}: the whole
five-function Okasaki insert is discharged \emph{symbolically} for \emph{every} height ($\forall h$),
end-to-end axiom-clean with no \code{native\_decide} and no SMT, on a certified \emph{successor}
index sub-fragment. The typed two-pointer \code{Queue} is likewise certified reflectively, down
to a ghost-free packed precondition.

In sum, the result is a type-checker \emph{fully automatic on its reflective suite} for expressive type
specifications. Its central strength is not speed but \emph{trust and reach}: a single
checker handles pure-type checking automatically (at native speed for ground programs)
and extends \emph{unchanged} to the expressive specifications of separation types and
recursive/sized predicates (balanced trees, sized lists, a typed queue). Every verdict is
\emph{certified sound} against the Hoare rules rather than trusted.

\section{Related Work}
\label{sec:related}


\paragraph{Semantic Subtyping.}

Semantic-based subtyping interprets each type as the set of values inhabiting it: \code{t_1 <: t_2} holds
when \coden{Set(t_1) \subseteq Set(t_2)}. The advantages of semantic subtyping have been explored in \cite{CastagnaF05, frisch2002semantic,Castagna05},
and subsequent work extended it from several perspectives: subtyping relation computation  \cite{BiermanGHL10},
decidability \cite{GesbertGL15}, and application to other programming paradigms \cite{AnconaC16,PetruccianiCAZ18}. Singleton, union, and
intersection types have received further attention through their set-based interpretation \cite{Dunfield12, OliveiraSA16}. MLstruct \cite{Lionel:OOPSLA22} uses a Boolean algebra of types to infer principal types with union and
intersection for an ML-like language, handling subtyping that traditional Hindley-Milner
inference avoids. This approach was later extended to dynamically-typed languages \cite{Castagna:POPL24}.
Gradual typing offers a complementary route to combining static and dynamic typing, including
under parametric polymorphism \cite{IgarashiSI17}; we instead span this spectrum by adjusting
the strength of specifications rather than by inserting run-time casts.

\hide{
  In contrast,
our work leverages Hoare logic as
for type safety.
}
\hide{
  We achieve this in
a unified setting that encompasses both
statically-typed and dynamically-typed programming paradigms.
}

\hide{
  \paragraph{Error Handling by Type Systems.}
By designing a comprehensive set of error types,
we can capture more kinds of error behaviours via type checking.
This allows us to avoid some pitfalls of ``well-typed programs can go wrong'' \cite{Chaliasos:OOPSLA21}. This work conducts an empirical study of typing-related compiler bugs. It turns out that compilers may reject well-typed programs or accept incorrect programs,
which may lead to unexpected run-time behaviours. The compilers (or type systems)
can decide if type errors are present in the programs.
However, the underlying implementation of the compilers
may break the soundness of their type systems.
Different from this approach, we propose various error types in our
lattice to highlight its importance.
Our system gives users more control over
what constitutes serious type errors, via \code{\Abrt} type,
instead of leaving it entirely to the compilers.
}
\paragraph{Hoare Logic for Type Safety.}
Hoare logic has also been used to ensure type safety in selected domains: typed assembly for
operating systems \cite{YangH10,HamidS04} and dynamically-typed programs \cite{EngelmannOF15, EngelmannO16}. These works extract type
information from the precondition and reduce type safety to verifying the Hoare triple
\code{\{p\}e\{true\}}. Our work greatly
expands on this approach, extending it with separation types, type predicates, case specifications, and a
disciplined error hierarchy.


\hide{
  \paragraph{Reasoning about Program Properties via Refinement Type System}
In another direction, type systems have also been gradually
extended to support richer properties, e.g. \cite{SulzmannV07}.
This has resulted in stronger properties, including functional correctness,
which can be guaranteed by refinement/dependent type systems.
Hoare Type Theory (HTT) \cite{nanevski2005dependent}
even proposes to embed Hoare triple in its type system and
supported polymorphism in the subsequent work \cite{NanevskiMB06, NanevskiMB08}.
HTT takes Hoare triples as special type constructors
which can reason about monadic state and imperative effects.
The original HTT assumes no subtyping between types.
With its roots in linear logic,
Applied Type System (ATS) \cite{Xi17} imposes
a level of abstraction on program states through a novel notion of
recursive stateful views 
to reason about stateful views in support of program
verification via this more advanced type system.
Similarly, \cite{DBLP:conf/esop/TomanSSI020} used ownership refinement types
to support context-sensitive and flow-sensitive properties,
\cite{DBLP:conf/ecoop/KloosMV15} use liquid separation type
to support asynchronous programs, while \cite{DBLP:journals/corr/BakstJ15}
has an enhanced form of alias types \cite{DBLP:conf/tic/WalkerM00,DBLP:conf/esop/SmithWM00} with refinement properties.
In contrast to these lines of advanced type systems work,
we have focused exclusively on type-safety.
}

\paragraph{Type Systems for Memory and Resource Safety.}
Sub-structural type systems have been developed to reason about system resources such as
memory, files, and locks \cite{pierce2004advanced}. The most prominent example is Rust, which uses ownership and
borrowing to ensure memory safety. OxCaml, a variant of OCaml, has similarly been extended
with a linear type system to reason about affinity, uniqueness, and locality of resources,
supporting both memory safety and data race freedom \cite{GeorgesPEWDECPD25, LorenzenWDEL24}. Our goal is similar, but our
approach differs: rather than designing a new type discipline, we leverage the expressivity of
Hoare logic to reason about memory and resource properties through separation types and
flow-sensitivity.
Closest in mechanised guarantee, the foundational verifiers RefinedC \cite{RefinedC} and
RefinedRust \cite{RefinedRust} check C and Rust code against refined ownership types with
Iris-based, machine-checked proofs. They target full functional correctness through a
separation-logic backend, whereas we target type safety through a solver-free, self-certifying
checker.



\hide{
  \todo{Reachability types}
\cite{DBLP:journals/pacmpl/WeiBJBR24,DBLP:journals/pacmpl/BaoWBJHR21}
\todo{Ownership types?,Rustbelt}
\todo{RustBelt} \cite{DBLP:journals/pacmpl/0002JKD18}
\todo{A Logical Approach to Type Soundness} \cite{DBLP:journals/jacm/TimanyKDB24}
Type soundness, a core concept in programming language design, ensures that well-typed programs will not encounter runtime errors related to type mismatches. A logical approach to type soundness, as opposed to a syntactic one, offers a more robust and flexible way to reason about type systems, particularly when dealing with data abstraction and "unsafe" language features.
\todo{Tree borrows PLDI25}
\verb|https://dl.acm.org/doi/pdf/10.1145/3735592|

Syntactic vs. Logical Type Soundness:
Syntactic Type Soundness:
This approach, often proved using "progress and preservation" theorems, focuses on the structure of well-typed programs. It guarantees that well-typed programs will not get stuck during evaluation, meaning they will either terminate or produce a value. However, it has limitations when dealing with complex scenarios like data abstraction or the use of unsafe features.

Logical Type Soundness:
This approach uses semantic models to represent types, providing an extensional view of typing (i.e., what a type means in terms of its behavior) rather than an intensional one (i.e., how a type is represented). It allows for reasoning about the behavior of programs, including those using unsafe features or relying on data abstraction, and ensures that type-based guarantees are upheld in those contexts.

Benefits of Logical Type Soundness:
More Robust Guarantees:
It provides stronger guarantees about program behavior than syntactic type soundness, especially in complex scenarios.
Improved Reasoning:
It allows for more precise and flexible reasoning about type systems and their interaction with program behavior.
Support for Advanced Language Features:
It enables the safe use of advanced language features, including data abstraction, polymorphism, and unsafe operations, which syntactic type soundness might struggle to handle.
Foundation for Verified Software:
It serves as a foundation for building verified software systems where type safety is critical.

\todo{Tree borrows PLDI25}
\todo{Linearity, Uniqueness and Ownership : An Entente Cordiale (IWACO24)} \cite{DBLP:conf/esop/MarshallVO22}

}
\paragraph{From Uniqueness to Separation Type.}
Resource and ownership types for strong aliasing and flow-sensitivity have a rich prior art. A
recent work \cite{ArevaloAPLAS25} related frame rules in uniqueness type systems to separation logic via an FFI.
Earlier, \cite{DBLP:conf/tic/WalkerM00,DBLP:conf/esop/SmithWM00} used alias types in typed assembly to describe heap shapes. Ownership types
were subsequently used to support strong updates in refinement type systems \cite{toman2020consort}. 
Kloos et al. \cite{kloos2015asynchronous} combine liquid types with concurrent separation logic to track heap ownership across asynchronous tasks, supporting strong updates to heap location types. Similarly,
Flux \cite{LehmannGVJ2023} lifts liquid-style refinements to Rust by exploiting Rust's ownership mechanisms to enable strong updates.
Closest in spirit is Mezzo \cite{BalabonskiPP16}, whose permission types control aliasing and
mutation of heap memory, with a machine-checked proof that well-typed programs do not go wrong.
Permission-based reasoning of this kind also underpins verification infrastructures such as
Viper \cite{MullerSS16}.
We introduce ownership in the sequential setting, enabling flow-sensitive type mutation.
Rather than relying on a host language's borrow checker, we use first-class logical assertions together
with type predicates.
This makes our approach applicable to languages without built-in ownership, and additionally supports must-aliasing via singleton types.
Our wider contribution is to unify these ideas within a single Hoare logic framework, increasing
coverage of type-safety scenarios compared to any individual prior system.

\paragraph{Mechanised and logical type soundness.} Our Lean development establishes soundness in
the semantic, logical-relations style rather than by a purely syntactic
progress-and-preservation argument, in the tradition of Iris \cite{Jung2018IrisFT} and
Timany et al.'s logical approach to type soundness \cite{DBLP:journals/jacm/TimanyKDB24} (which
RustBelt \cite{DBLP:journals/pacmpl/0002JKD18} applies to justify Rust's ownership discipline).
We share their reading of a
type as a predicate on values, but interpret heap types by recursion on the (finite) type
structure rather than through a step-indexed model, so no step-indexing is required. From
the same mechanisation we additionally extract an executable, self-certifying type-checker.

\paragraph{Stability and Rely-Guarantee.} The stability condition on pure type predicates
(Definition~\ref{def:pure-type-pred}) is the sequential counterpart of interference freedom in Owicki and Gries's
proof system for concurrent programs  \cite{OwickiGries1976}, and of stability under the rely relation in Jones's
rely-guarantee framework \cite{Jones1981}. The closest type-system counterpart is the rely-guarantee
reference system of \cite{GordonEG2013}, which enforces stability of refinement predicates over aliased
mutable data via per-reference rely/guarantee annotations. Our notion of stable pure type
achieves the same guarantee in the sequential setting without such annotations.

\hide{
\begin{verbatim}
As a result, ConSORT
cannot in general express, e.g., that the contents of two references are equal.
Further, due to our reliance on automated theorem provers we are restricted to
logics with sound but potentially incomplete decision procedures. ConSORT
also does not support conditional or context-sensitive ownerships, and therefore
cannot precisely handle conditional mutation or aliasing.
\end{verbatim}

\todo{Alias Refinement Type}
\cite{DBLP:journals/corr/BakstJ15}
\todo{Context and Flow-Sensitive Ownership Refinement Type}
\cite{DBLP:conf/esop/TomanSSI020}
\todo{Alias Types, a precursor to SL}
\cite{DBLP:conf/tic/WalkerM00,DBLP:conf/esop/SmithWM00}
\todo{Async Liquid Separation Type}
\cite{DBLP:conf/ecoop/KloosMV15}
\todo{A Logical Approach to Type Soundness} \cite{DBLP:journals/jacm/TimanyKDB24}
}





\section{Conclusion}
\label{sec:conclusion}

This paper has presented a Hoare logic framework for type-safety verification that unifies
separation types, case specifications, type predicates, and a disciplined error hierarchy. Building
on semantic subtyping, we have extended the logic of types with: separation types that support
flow-sensitive type mutation and must-aliasing; a comprehensive error hierarchy distinguishing
\code{Err} (runtime errors), \code{Exc} (checked and unchecked exceptions), and \code{Abrt} (compile-time errors);
path-sensitive and flow-sensitive type specifications via case specifications; and type predicates
that subsume GADTs and liquid types.
We have
formalised the Hoare rules in Lean~4 and proved soundness, obtained from it a
self-certifying type-checker by proof reflection, and evaluated it on a benchmark suite
demonstrating cases that standard type systems such as OCaml cannot handle.
\hide{Is Hoare logic critically needed for type-safety?
Perhaps not, since the PL type community has thrived
for so long without it. Nevertheless, as this paper has shown,
Hoare logic is capable of both better expressivity and
better precision, and there are many advantages 
for considering Hoare logic (and its derivatives)
for this fundamental task of ensuring type safety for programming languages.
}


\begin{acks}
This research is supported by the Ministry of Education, Singapore, under its MOE
Academic Research Fund Tier~3 (RIE2025) (MOE Award No: MOE-MOET32021-0001), and
under the Academic Research Fund Tier~1 (FY2023) (Project Title: Automated
Verification for Imperative Higher-Order Programs).
\end{acks}

\bibliography{references}
\bibliographystyle{plain}

\noindent The accompanying Lean mechanization (the full sources, theorem index, and build instructions) is provided in the ancillary files (\texttt{anc/lean-supplement.zip}).

\newpage
\appendix
\section{Basic Features of Types}\label{app:basic}
\subsection{Wider Usage of Singleton Types}
\label{sec:singleton}
\hide{
Recently, the singleton type was used in
\cite{Lionel:OOPSLA22,Castagna:POPL24} for
constant literals of the form \code{c}, such as \code{3} or
\code{``hi"}. Whenever a variable \code{v} is
bound to a constant, say \code{c}, we make use of the
type notation \code{v{:}c} to state that variable \code{v}
has singleton type \code{c}.

In the Scala language itself,
a singleton object class is essentially a class with only a
single instance.
In this paper, we propose to denote a singleton type
as a type that contains only a single value.
We propose to extend the use of singleton types to include
also parameter (input) variables, such as \code{x}, using
the type annotation \code{r{:}x} to denote that some
result \code{r} has singleton type \code{x}. In terms
of Hoare logic, this is essentially
equivalent to \code{r{=}x}, since in any such
output \code{r} with the singleton type \code{x}, \code{r}
itself has the same value as \code{x}.

An example of its use is in the following identity function.

\begin{lstlisting}[numbers=none,name=sec2]
  id x = x
\end{lstlisting}

Using the singleton type, we can use type
\code{\fnAnn{id}{x{:}\topI~\ra~r{:}x}}, with input
\code{x} of allowable input type, \code{\topI}, and output \code{r} to be of
the singleton type \code{x}.
A good thing about such a type specification
is that it requires fewer universally quantified type variables.
The current type annotation approach would instead provide a principal type of form
\code{\fnAnn{id}{\forall a {\cdot} a {\ra} a}} with a universal type
variable \code{a}.
Technically, we can prove that \code{\fnAnn{id}{x{:}\topI {\ra} r{:}x}} is
equivalent
to \code{\fnAnn{id}{\forall a {\cdot} a {\ra} a}}.
However, the former notation
has fewer quantified type variable(s), and is
thus easier to apply (with fewer existential instantiations)
during type-checking.
Our solution can take full advantage of the
Hoare logic approach to type checking since it
is equivalent to
\code{\fnAnnS{id}{x~r}{req~\true~ens[r]~r{:}x}} with a weakest pre-condition \code{\true} and a (strongest) post-condition that is equivalent to
\code{r{=}x}.

For another example, consider the following function \code{\slambda{x}{x~x}}.
Here, the parameter \code{x} is applied to itself.
With the help of the singleton type, we can provide its function type
as \code{\forall b{\cdot}(x{:}x{\ra}b){\ra}b}. Again, this solution utilizes
one fewer quantified type
variable than \code{\forall a{,}b{\cdot} (a{\wedge}a{\ra}b) {\ra} b} proposed in \cite{Castagna:POPL24}.
Fewer quantified type variables typically make it
easier to support existential instantiation of type
variables that are required for calls of
polymorphically-typed methods.
Incidentally, using our solution, we can actually prove
\code{(\slambda{x}{x~x})(\slambda{x}{x~x})~{:}~\bot} and
\code{(\slambda{x}{x~x})(\slambda{x}{x})~{:}~(x{:}\topI {\ra} r{:}x)}.

Both these outcomes cannot be handled 
by \cite{Castagna:POPL24}.
Also, \code{(\slambda{x}{x~x})(3)~{:}~\topI} is rightly rejected
as untypeable by \cite{Castagna:POPL24} and also by
our Hoare logic approach to type-checking.
As the singleton type leads to the most precise typing scenario,
we currently limit its use to only variables and
argument of function calls, where the most precise typing is
always the best option. For field positions, we shall use type variables
instead of singleton types,
in order to search for the best compromise under our flow-insensitive
type system.}















Recently, singleton types were used in~\cite{Lionel:OOPSLA22,Castagna:POPL24} for
constant literals of the form \code{c}, such as \code{3} or \code{``hi"}. Whenever
a variable \code{v} is bound to a constant \code{c}, we use the type notation \code{v{:}\{c\}}
to state that \code{v} has singleton type \code{c}.

In Scala, a singleton object class is essentially a class with a single
instance. In this paper, we define a singleton type as a type containing
only a single value. We extend singleton types to include parameter (input)
variables such as \code{x}, using the annotation \code{r{:}\{x\}} to denote that
result \code{r} has singleton type \code{x}. In Hoare logic terms, \code{r{:}\{x\}} is
equivalent to \code{r{=}x}, since any output \code{r} with singleton type \code{x} has the
same value as \code{x}.

An example is the identity function:
\begin{lstlisting}[name=sec2, numbers=none]
  id x = x
\end{lstlisting}
Using the singleton type, we write
\code{id : (x{:}Any {\to} r{:}\{x\})}, where the input
has type \code{Any} and the output has singleton type \code{x}. This requires
fewer universally quantified type variables than the principal type
\code{id : (\forall a.\,a {\to} a)}. One can verify that
\code{id : (x{:}Any {\to} r{:}\{x\})} is equivalent to
\code{id : (\forall a.\,a {\to} a)}; the former has fewer quantified
variables and is therefore easier to apply during type-checking (requiring
fewer existential instantiations). In Hoare logic notation, this corresponds
to \code{id :: (\lambda\,x\,r.~req~true~ens[r]~r{:}\{x\})}.

As a second example, consider \code{(\lambda x.\,x~x)}, where the parameter \code{x}
is applied to itself. Using singleton types, its function type is
\code{\forall b.\,(x{:}\{x\}{\to}b){\to}b}, which uses one fewer quantified
variable than the \code{\forall a,b.\,(a{\wedge}a{\to}b){\to}b} formulation
of~\cite{Castagna:POPL24}. Fewer quantified variables simplify existential
instantiation for polymorphic method calls. Using our formulation, one can
prove \code{(\lambda x.\,x~x)(\lambda x.\,x~x) : \bot} and
\code{(\lambda x.\,x~x)(\lambda x.\,x) : (x{:}Any {\to} r{:}\{x\})}, both of which are beyond the reach
of~\cite{Castagna:POPL24}. The application \code{(\lambda x.\,x~x)(3)} is
correctly rejected as untypeable by both~\cite{Castagna:POPL24} and our
framework. We currently restrict singleton types to variables and
function-call arguments; field positions use type variables instead.

\subsection{Constructors and Type Predicates}
\label{sec:predicates}
\label{sec:recursive_algebraic}

In statically-typed languages such as OCaml, algebraic data types are
standard. Using the polymorphic list type from Sec.~\ref{sec:intro},
the type annotation for \code{length} is:

\begin{lstlisting}[name=sec2, numbers=none]
  (* Type signature and implementation of length *)
  length : (*@$\forall$@*)T. List(T) (*@\code{\ra}@*) Int
  length(xs) = match xs of
            {Nil -> 0;
             Cons(y,ys) -> 1 + length(ys)}
\end{lstlisting}
Our framework supports types that are more fine-grained than declared
algebraic data types. The type \code{List(T)} is semantically
equivalent to the recursive predicate:
\begin{lstlisting}[name=sec2, numbers=none]
  r: List(T) (*@$\equiv$@*) r:Nil (*@$\vee$@*) r:Cons(T,List(T))
\end{lstlisting}
Under semantic subtyping, \code{Nil} and \code{Cons} are subtypes
of \code{List} and can appear directly in specifications.
A more
precise specification for \code{length} is:
\begin{lstlisting}[name=sec2, numbers=none]
  (* Type specification of length *)
length : (*@$\forall$@*)T. case[xs]
        {Nil  ens[r] r:0;
         Cons(T,List(T))  ens[r] r:Int}
\end{lstlisting}





Here \code{r{:}0} is a singleton type, since
\code{Set(\{0\}){\subset}Set(Int)}. Singleton
return types enable more informative specifications for path-sensitive
code, particularly when values with distinct constructors produce
different outcomes.




Recursive type predicates support algebraic data types in static typing
scenarios and can be extended to GADTs (Sec.~\ref{sec:GADT})
and type classes (Appendix~\ref{sec:typeclasses}).

Although recursive predicates define stronger types, they may appear
incompatible with dynamically-typed programs that rely on runtime type
testing. For complex types such as \code{List(List(T))},
we propose using the unnested form \code{List(\_)} for
runtime testing. Concretely, \code{List(\_)} abbreviates
\code{Nil {\vee} Cons(\_,\_)},
as described next.






\subsection{Combining Type Testing with Pattern-Matching}
\label{sec:pattern-test}

Statically-typed languages use pattern-matching; dynamically-typed
languages rely on type-testing. Recent proposals, including C\#, attempt
to unify both. We achieve this integration via an enhanced \code{match}
construct that supports both pattern-matching and structural type-testing.

The function \code{length\_generic} dispatches the length computation
based on the runtime type of its argument. Its case specification is:




\begin{lstlisting}[name=sec2,numbers=none]
(* Type spec and code of length *)
length_generic : (*@$\forall$@*)T. case[xs]
             {xs:Int (*@${\caseRA}$@*) ens[r] r:1;
              xs:Str (*@${\caseRA}$@*) ens[r] r:Int;
              xs:List(T) (*@${\caseRA}$@*) ens[r] r:Int}
length_generic xs = match xs of
            {Int -> 1;
             Str -> length_of_str(xs);
             List(_) -> length(xs)}
\end{lstlisting}









\hide{
\begin{verbatim}
length : List(/_) -> Int /\ (Int -> Int)
      /\ (Str -> Int) /\ (_ -> Abrt)
length xs = match xs of { Int ->  1;
  Str ->  string_length(x);
  List(_) ->  length_of_list(xs) }

length_of_list : List(/_) -> Int
length_of_list xs = match xs of { Nil -> 0;
  Cons(_,ys) -> 1+length_of_list(ys) }

inc : Ref(.Int) -> Int
inc r = match r of { Ref(v) -> v:=v+1 }

set : forall t Ref(-t) -> t -> ()
set r x = match r of { Ref(v) -> v:=x }

  \bot->\top  -> // |- x:\bot->\top
  Cons(v,vs) -> // |- x:Cons(+t1,+t2) /\ v:t1 /\ v2:t2
  Ref(v) -> // |- x:Ref(+T)/\v: v:T

Can we handle:
  match xs of List(List(Int)) -> ...
or just allow
  match xs of List({\oslash}\_) -> ...
  match xs of Ref({\oslash}\_) -> ...
  match xs of (\bot->\top) -> ...

  I think it is sufficient to use List(_)
for type-testing. We can derive more complete types
from recursive pattern-matching itself.

f :   Int -> Int
   /\ List(+Int) -> Int
   /\ {l1:+Int,L2:+Int} -> Int
f(xs) = match xs of {
    Int -> xs
    List(_) -> sum(xs)
	{l1:v1,l2:v2} -> v1+v2
	}
// Note just simple pattern List(_)
// but intersection type is stronger

sum : List(+Int) -> Int
sum(xs) = match xs of {
   Nil -> 0
   Cons(x,xs) -> x+sum(xs)
}
\end{verbatim}
}

Recall that \code{List(\_) {\equiv} Nil {\vee} Cons(\_,\_)}. The case
specification with singleton type \code{1} for the \code{Int} branch
is strictly more precise than the intersection type:
\begin{lstlisting}[name=sec2,numbers=none]
  length_generic : ((*@$\forall$@*)T. List(T) (*@\ra@*) Int) (*@$\wedge$@*)
       (Int (*@\ra@*) Int) (*@$\wedge$@*) (Str (*@\ra@*) Int)
\end{lstlisting}

The method 
\code{length} was already defined in
Sec.~\ref{sec:predicates}.



\hide{
It is also possible
for the following weaker type specification to be used for the
\code{length} method. However, this type specification is 
considered
type unsafe since it signals the reachability of
the serious \code{Abrt} error outcome denoting a compile-time error.

\begin{lstlisting}[name=sec2,numbers=none]
  (* Weakened type using Abrt is unsafe *)
  length : Any (*@$\ra$@*) (Int (*@$\lor$@*) Abrt)
\end{lstlisting}

One may also use pattern-matching for the record type,
and type-testing for function and other types.
For ease of implementation, we currently
restrict type-testing to just the outermost
type constructor, e.g. \code{List(\_)} or \code{(\_{\ra}\_)}.
However, this does \underline{not} affect the expressivity of our type
specification since we can always impose stronger static types
to limit the valid program codes we wish to support.
An example is illustrated below, where our specification
required functions of type \code{42{\ra}Int}
to be supplied (in one of the cases), even if type-testing can only test for the existence of function type via \code{(\_{\ra}\_) \equiv (\bot{\ra}\top)}.
\begin{lstlisting}[name=sec2,numbers=none]
sum : (List<Int> (*@\ra@*) Int) (*@$\wedge$@*) (Int(*@\ra@*)Int) (*@$\wedge$@*) ((42(*@\ra@*)Int)(*@\ra@*)Int)
                         (*@$\wedge$@*) ({L1:Int,L2:Int}(*@\ra@*)Int)
sum x = match x of { Int (*@\ra@*)  x
                   | (_(*@\ra@*)_) (*@\ra@*) x 42
                   | {L1:v1,L2:v2} (*@\ra@*) v1+v2
                   | Nil (*@\ra@*) 0
                   | Cons(y,ys) (*@\ra@*) y+sum(ys) }
\end{lstlisting}

\trung{Should we provide type specification for \code{sum}? Otherwise, the example and description look incomplete.}

For simplicity, this  paper shall omit record types in
the rest of our presentation.
}
\hide{
  \wnsay{Do we need to know which are static types and
  which are dynamic types, and will only then allow
  type-testing on the dynamic types.
   x:List(t) is static type
   x:Cons(t1,t2) would be a dynamic type

   length {:} List(\code{\oslash\_}) {\ra} Int \\
          \code{{\wedge}} Int \code{\ra} Int \\
          \code{\wedge}~ Str \code{\ra} Abrt \\

   sum {:} List(Int) \code{\ra} Int
  }
  }



\section{Advanced Types}\label{app:advanced}

\subsection{Support for Type Classes}
\label{sec:typeclasses}
We describe how our framework supports type classes, as adopted by
Haskell and Scala for ad hoc polymorphism. The \code{Num} type class
supports generic numeric operations:

\begin{lstlisting}[name=functor,numbers=none]
   + : (*@$\forall$@*)T (*@$\cdot$@*) Num(T) (*@$\Rightarrow$@*) T(*@$\ra$@*)T(*@$\ra$@*)T
\end{lstlisting}

Using pre/post type specifications, we encode this with a type-class
predicate \code{Num(T)} built from the instance declarations of
\code{Num}:

\begin{lstlisting}[name=functor,numbers=none]
  Num(T:*) (*@$\equiv$@*) (T=Int) (*@$\lor$@*) (T=Float) (*@$\lor$@*) ...
 \end{lstlisting}

This predicate appears as a precondition constraint on overloaded
operations:
\begin{lstlisting}[name=functor,numbers=none]
  + :: (*@$\lambda$@*) x y r (*@$\cdot$@*) (*@$\forall$@*) T (*@$\cdot$@*) req[x,y] x:T (*@$\land$@*) y:T
             (*@$\land$@*) Num(T) ens[r] r:T
\end{lstlisting}

Higher-order type classes are supported similarly. For the
\code{Functor} class:

\begin{lstlisting}[name=functor,numbers=none]
 class Functor(F:*(*@$\ra$@*)*) {
  map : (*@$\forall$@*)A,B (*@$\cdot$@*) (A(*@$\ra$@*)B) (*@$\ra$@*) F(A) (*@$\ra$@*) F(B)
\end{lstlisting}

the specification for \code{map} is:
\begin{lstlisting}[name=functor,numbers=none]
 map :: (*@$\lambda$@*) f xs r (*@$\cdot$@*) (*@$\forall$@*) F:*(*@$\ra$@*)*,A,B (*@$\cdot$@*)
  req[f,xs] f:(A(*@$\ra$@*)B) (*@$\land$@*) xs:F(A) (*@$\land$@*) Functor(F)
  ens[r] r:F(B)
 Functor(F:*(*@$\ra$@*)*) (*@$\equiv$@*) (F=List) (*@$\lor$@*) (F=Option) (*@$\lor$@*) ...
\end{lstlisting}

Type classes may be organised in an inheritance hierarchy. For example,
since \code{Monad} extends \code{Functor}, we have the lemma
\code{\forall F{:}(*{\to}*).~Monad(F) {\Rightarrow} Functor(F)}. Note that type-class predicates such as
\code{Num(\_)} and \code{Functor(\_)}
may not be used for runtime type-testing, since they range over types
rather than values.

\subsection{Gradual Typing for Dynamic Languages}
\label{sec:gradual}
\hide{
While our more expressive type specifications
have allowed us to capture more dynamically-typed
programs as statically-safe,
there are still examples that are currently beyond
the scope of our current logic of types,
despite the use of semantics subtyping and separation type.
An example of this is:

\begin{lstlisting}[name=sec2, numbers=none]
  f(x,y) = if x>5 then !y else y+1
\end{lstlisting}

For this function to be statically type-safe, we will need to
use the following pre/post type specification.

\begin{lstlisting}[name=sec2, numbers=none]
  req[x,y] x:Int /\ y:Ref(T)/\Int
  ens[r]  y:Ref(T)/\r:T \/ y:Int/\r:Int
\end{lstlisting}

However, \code{y{:}Ref(T){\wedge}Int}
leads to \code{y{:}\bot} that is equivalent to
\code{\false}. This will
result in a pre-condition that is too strong.
One way of rectifying this is to make
use of gradual typing, with the help of
a special set of pure types
\code{?t}, where type casting can be
performed dynamically at runtime.
Thus, we allow \code{y{:}?(Ref(T){\vee}Int)} to be declared in
our type-specifications, as follows:

\begin{lstlisting}[name=sec2, numbers=none]
  req[x,y] x:Int /\ y:?(Ref(T)\/Int)
  ens[r]   (y:Ref(T)/\r:T)\/(y:Int/\r:Int);
\end{lstlisting}

With the presence of \code{y{:}?(Ref(T){\vee}Int)}, we can allow
runtime casts to be inserted for program code for \code{f}
to 
ensure that runtime
type checking is performed.

\begin{lstlisting}[name=sec2, numbers=none]
  f(x,y) = if x>5 then let z = (?Ref(_)) y in !z
           else let z = (?Int) y in z+1
\end{lstlisting}

Such programs are thus gradually typed
with the possibility of runtime type aborts
for some of their execution runs.
Runtime cast is implemented, as follows:
\begin{lstlisting}[name=sec2, numbers=none]
  (?p) v = match v of {p -> v; _ -> abort}
\end{lstlisting}

The Hoare rule for runtime cast would ignore
\code{abort} commands, as these are now treated
as run-time errors rather than compile-time errors.

\[
\begin{array}{cc}
  \incrRule{
    \fresh{r} \qquad \ctr\,{\vdash}\,x{:}?(\_) \quad \ctr_1{\equiv}({\ctr}~{\wedge}~{(x{:}p)}~{\wedge}~r{:}\{x\})
      }{\HTriple{\ctr}{(?p)~x}{\specEns{r}{\ctr_1}}}{R{-}Cast}
\end{array}
\]
}


Although our expressive specifications capture many dynamically-typed
programs as statically safe, some examples remain outside the current
framework. Consider:
\begin{lstlisting}[name=sec2, numbers=none]
  f(x,y) = if x>5 then !y else y+1
\end{lstlisting}
For this to be statically type-safe, the precondition would need
\code{y{:}Ref(T){\wedge}Int}, which reduces to \code{y{:}\bot},
making the precondition unsatisfiable. Gradual typing addresses this via
a special family of pure types \code{?t}, allowing runtime casts. We permit
\code{y{:}?(Ref(T){\vee}Int)} in specifications:
\begin{align*}
  &\code{req[x,y]}~
    \code{x{:}Int} \wedge \code{y{:}?(Ref(T){\vee}Int)} \\
  &\code{ens[r]}~
    (\code{y{:}Ref(T)} \wedge \code{r{:}T})
    \vee (\code{y{:}Int} \wedge \code{r{:}Int})
\end{align*}
Runtime casts are inserted into the program body to enforce the chosen
branch:
\begin{lstlisting}[name=sec2, numbers=none]
  f(x,y) = if x>5 then let z = (?Ref(_)) y in !z
                  else let z = (?Int) y in z+1
\end{lstlisting}
Such programs are gradually typed: some execution paths may produce a
runtime abort. The runtime cast \code{(?p)~v} is defined as:
\begin{lstlisting}[name=sec2, numbers=none]
  (?p) v = match v of { p -> v; _ -> abort }
\end{lstlisting}
The Hoare rule for runtime cast ignores abort, treating it as a runtime
error rather than a compile-time error:
\[
\begin{array}{c}
  \incrRule{
    \fresh{r} \qquad \ctr\,{\vdash}\,x{:}?(\_) \quad
    \ctr_1{\equiv}({\ctr}~{\wedge}~(x{:}p)~{\wedge}~r{:}\{x\})
  }{\HTriple{\ctr}{(?p)~x}{\specEns{r}{\ctr_1}}}{R{-}Cast}
\end{array}
\]

\subsection{Restrictions to Support Decidability}
\label{sec:decidable}
    In their full generality, type predicates are not decidable: arbitrary recursive predicates with unrestricted quantification and negation quickly exceed the reach of
automated verification. In practice, decidability can be recovered by imposing the following five restrictions,
which together cover the vast majority of data structure predicates that arise in verification.
\begin{itemize}
\item Structural recursion only. Predicates must unfold only on the direct sub-components of a constructor: no
mutual recursion through non-structural positions. This makes well-foundedness syntactically checkable and
ensures unfolding always terminates.
\item Linear arithmetic for index constraints. Restricting index constraints to linear arithmetic over integers (e.g. \code{n
= m+1, lo {\leq} v}) reduces subtyping obligations to linear arithmetic validity, decidable by standard SMT solvers via
the theory of linear integer arithmetic (LIA) --- our Lean mechanisation decides the \emph{equality} fragment of this
theory with a certified, solver-free in-checker procedure, and, for the mechanised \code{SortedList} flagships'
proof obligations, the \emph{inequality} fragment as well, via a certified unit-Farkas decision procedure
(App.~\ref{sec:leanchecker}), reserving SMT for general linear-arithmetic inequality reasoning beyond those
obligations. Non-linear arithmetic or quantifier alternation would push into undecidability.
\item Restricted existential quantification. Existentials in predicate bodies must be eliminable: either by
unification during unfolding, or by bounding them over finite domains. Unrestricted nested existentials can render
subtyping undecidable.
\item Negation-free or negation at base types only. Boolean combinations \code{\vee} and \code{\wedge} are decidable provided
sub-predicates are decidable. However, negation (\code{\neg}) of recursive predicates introduces co-inductive reasoning and
is generally undecidable. Restricting negation to base types only preserves decidability.
\item Symbolic heap fragment for separation predicates. Full separation logic with arbitrary recursive heap
predicates is undecidable. The symbolic heap fragment recovers decidability by restricting the form of inductive
predicate definitions rather than banning recursion itself: each predicate case must be expressed using only
points-to assertions, separating conjunctions, inductive predicate calls, and pure linear arithmetic side conditions
— with no negation and no non-separating disjunction at the heap level. Recursive inductive predicates such as
\code{\heapto{x}{List(T)}}, \code{\heapto{x}{Tree(T)}}, and list segment predicates are fully expressible within this fragment, and bi-abduction
over such predicates is decidable, as exploited by tools such as Smallfoot.
\end{itemize}

These five restrictions together form the practical sweet spot adopted by most separation logic verifiers: structural
recursion, linear arithmetic indices, eliminable existentials, negation-free recursive predicates, and symbolic heap
separation. This covers \code{List(T,n)}, \code{RBTree(T,c,h)}, \code{SortedList(T,lo,hi)} and the data structure predicates that arise
throughout this paper, while keeping verification tractable via SMT solving. In the Lean mechanisation,
\code{List(T,n)}, \code{RBTree(T,c,h)}, and size-indexed trees (nodes of size $1{+}a{+}b$) are now
checker-certified end-to-end: their index constraints are affine equalities, decided by a certified,
solver-free in-checker procedure (App.~\ref{sec:leanchecker}), with no SMT call --- including the
existential child-size decomposition ($s = 1{+}a{+}b$, with the children's sizes not individually
determined by $s$) that sized-tree destruction requires, previously the one remaining checker
boundary, now crossed by a declared-post frame gate. \code{SortedList(T,lo,hi)} needs the index
\emph{inequality} fragment ($<$/$\le$); this fragment is decided too, for the proof obligations of
three checker-certified flagships over \code{SortedList(T,lo,hi)} (sorted \code{insert},
\code{head}, and \code{member}) by a certified, solver-free unit-Farkas procedure, again with no
SMT call (App.~\ref{sec:leanchecker}). Every data-structure predicate class in this paper now has at
least one checker-certified flagship obligation discharged this way. \emph{General} completeness of
the underlying decision procedures stays future work, in the disclosed, restricted form documented
alongside each engine (App.~\ref{sec:leanchecker}): the affine-equality comparators are sound but not
proved complete beyond the sums the mechanised predicates exercise, and the unit-Farkas inequality
procedure does not derive coefficient-scaled or equality-tightened consequences.




\section{Exception Handling}\label{app:exception}
The \code{\Abrt}
or \code{\top} types are nearly always avoided
in the result type
of our functions, as any encounter with abort will cause our program to fail immediately.
Case specification is
the only type specification which allows
\code{\Abrt} or \code{\top} to appear
in the \code{ens} clauses of the otherwise (i.e. \_) clause,
due to the need for completeness in case specifications.
\label{sec:exception}

Runtime errors are often modelled as exceptions that
could be handled by suitable error recovery routines.
In some languages, such as Java, they are further classified as
{\em checked exceptions} which must be flagged by
the type system or {\em unchecked exceptions}
whose occurrences can be ignored by the type system.

In our solution, we would classify checked exceptions
(denoted by \code{ChExc})
and unchecked exceptions (denoted by \code{UnExc}) as subtypes
of \code{Exc}:\\
\centerline{\code{\bot <: \{\ChExc,\UnExc\} <: \Exc <: \top}}

Additionally,
\code{UnExc} is a subtype of
     valid~types  and
\code{ChExc} is disjoint from
both \code{\Any} and \code{\Abrt}:

\centerline{\code{{\UnExc} <: \{..valid~types..\} \wedge }}
\centerline{ \code{  ((\ChExc \wedge \Any) = (\ChExc \wedge \Abrt) = \bot) }}



In this way, checked exceptions will always be carefully tracked by
our type system, while unchecked exceptions are considered as a subtype of both type
universes, \code{Any_P} and \code{Any_S}, and can be 
omitted in our type specifications.

Try-catch handling can now be implemented via a \code{let}-binding,
followed by type testing.
\hide{Also, strict evaluation of expressions
are always type-casted to ensure
that it is of type \code{{\AnyVal} = {\Any} \vee {\Exc}}.}
Raised exceptions that pass type-casting
are propagated outwards via
\code{match} construct, as illustrated by the second translation rule below.

\[
\arraycolsep=0pt
\begin{array}{rll}
&\coden{\etrycatch{e_1}{\Exc~\texttt{->}~e_2}} \\
 \hspace*{1em}\equivArr\hspace*{1em}
& \coden{\elet{x}{e_1}{\ematch{x}{\Exc~\texttt{->}~e_2{\vbar}\_~\texttt{->}~x}}} \\

\\


&\elet{x{:}t}{e_1}{e_2} \\
 \hspace*{1em}\equivArr\hspace*{1em}
& \elet{x}{(t{\ \vee\ }{\Exc})\ e_1}{}\\
&{\ematch{x}{\Abrt{\ \vee\ }\Exc~\texttt{->}~x{\vbar}\_~\texttt{->}~e_2}}
\end{array}
\]

{\wnnay{Can we really pattern-match on Exc and Abrt?
  Is this an efficient way to implement exception-handling
or do we need native continuations?}}





\section{More on Separation Types}\label{app:sepmore}

\subsection{Uniqueness vs Separation Type}
\label{sec:unique}
\hide{
One may also wonder how the uniqueness type is related to
separation type. In our framework, each uniqueness type
\code{\uniqto{x}{T}} is (mutually) disjoint from both separation
type \code{\heapto{x}{T}} and pure type \code{x:T}, as
shown by the 
contradictions.
Note that \code{\equivA} denotes logical equivalence.
\[
\begin{array}{l}
\code{\uniqto{x}{T}} ~\land~ \code{x:T}  ~\equivA~
\code{x{\mapsto}T} ~\land~ \code{x:T} ~\equivA~
\code{x{\mapsto}T} ~{\sep}~ \code{\uniqto{x}{T}} ~\equivA~ \code{\false}
\end{array}
\]


This design decision is key towards better functionalities for each of the three types.
However, we can rely on the following 
equivalence lemmas between uniqueness and separation types,
which allow interconvertibility.

\[
\begin{array}{lll}
  \code{\uniqto{x}{T}}  & \equivA & \code{x{\mapsto}T} ~\land~ \code{NoAlias(x)} \\
  \code{NoAlias(x)} ~\land~ \code{x{=}y} & \equivA & \code{\false}
\end{array}
\]

These two equations state that every uniqueness type \code{\uniqto{x}{T}}
can always be converted to its corresponding
separation type \code{x{\mapsto}T}; while each separation
type 
can be converted to its
uniqueness type if all aliases of form \code{x{=}y}
are dropped.

As type-safety is essentially a static analysis
based on over-approximation,
we support the following weakening lemmas
(that can be performed safely):

\[
\begin{array}{lll}
  \code{\uniqto{x}{T}} ~\land~ \code{T<:A} & \weaken & \code{\uniqto{x}{A}} \\
  \code{x{\mapsto}T} ~\land~ \code{T<:A} & \weaken & \code{x{\mapsto}A} \\
  \code{x{\mapsto}T} & \weaken &  \code{x:T}
\end{array}
\]

We thus always start with
types that are as strong as possible, but allow them to be weakened,
when necessary. We also start with separation/uniqueness
types when the data structures are first constructed,
but allow them to be irreversibly weakened to
flow-insensitive pure-type, as shown in the last lemma above.





Both uniqueness and separation types can be mutated in
support of flow-sensitivity. However, the separation type
also allows must-aliases, while the uniqueness type allows only a single reference.
Hence, separation type is strictly more expressive.
To illustrate the need for aliasing, consider the earlier
\code{swap} method to exchange the contents of two
heap locations, namely:

\begin{lstlisting}[numbers=none]
 swap(x,y) = let v1=!x in let v2=!y in
             update(x,v2); update(y,v1)
\end{lstlisting}

Uniqueness type's solution cannot currently
handle \code{swap(m,m)} since the two parameter objects must be unaliased.
}

Each uniqueness type \code{\uniqto{x}{T}} is mutually disjoint from both
the separation type \code{\heapto{x}{T}} and the pure type \code{x{:}T}, as the
following contradictions show (\code{\equivA} denotes logical
equivalence):
\[
  \code{\uniqto{x}{T}} ~\land~ \code{x{:}T}
  ~\equivA~
  \code{\heapto{x}{T}} ~\land~ \code{x{:}T}
  ~\equivA~
  \code{\heapto{x}{T}} \sep \code{\uniqto{x}{T}}
  ~\equivA~
  \code{\false}
\]
This design choice ensures each type form has distinct capabilities.
Uniqueness and separation types are interconvertible via the following
equivalences:
\[
\begin{array}{lll}
  \code{\uniqto{x}{T}} & \equivA & \code{\heapto{x}{T}} ~\land~ \code{NoAlias(x)} \\
  \code{NoAlias(x)} ~\land~ \code{x{=}y} & \equivA & \code{\false}
\end{array}
\]
Every uniqueness type converts to a separation type; each separation
type converts to a uniqueness type by dropping all aliases of the form
\code{x{=}y}.

As type-safety is based on over-approximation, the following weakening
lemmas apply:
\[
\begin{array}{lll}
  \code{\uniqto{x}{T}} ~\land~ \code{T{<:}A} & \weaken & \code{\uniqto{x}{A}} \\
  \code{\heapto{x}{T}} ~\land~ \code{T{<:}A} & \weaken & \code{\heapto{x}{A}} \\
  \code{\heapto{x}{T}} & \weaken & \code{x{:}T}
\end{array}
\]
Types are always initialised as strong as possible and weakened only
when necessary. Separation and uniqueness types are assigned at
construction and may be irreversibly weakened to flow-insensitive pure
types (last lemma above).

Both allow type mutation for flow-sensitivity; however, separation
types additionally permit must-aliases, whereas uniqueness types allow
only a single reference, making separation types strictly more
expressive. For example, uniqueness types cannot handle
\code{swap(m,m)} since the two arguments must be unaliased.


\subsection{Separation Type for Mutable Data Structure}
\label{sec:dsi}
\hide{
For pure types, one typically uses inductive predicates
to capture the weakest property that is always true
for each given algebraic data type. As an example,
consider the \code{List(T)} algebraic data type
that is defined, as follows:

\begin{lstlisting}[name=sec2, numbers=none]
  data List(T) = Nil | Cons(T,List(T))
\end{lstlisting}

The weakest predicate for such a List(T) data type would be:

\[
\code{r{:}List(T)} \equiv \code{r{:}Nil \vee (\exists q ~.~ r{:}Cons(T,q) \wedge q{:}List(T))} \]


 This predicate is stable for every pure \code{List(T)} type that can be
 (mutably) built from such a data type. Let us consider two
 \code{List(T)} examples, namely one {\em acyclic} and the other
 {\em cyclic}, as shown below. Both of these examples can be proven
 to be instances of the \code{List(T)} predicate type, as shown below.

\begin{lstlisting}[name=sec2, numbers=none]
 // acyclic list is an instance of x:List(T)
 x:Cons(T,y)/\y:Cons(T,z)/\z:Nil |- x:List(T)

 // non-empty cyclic list is also an instance of List(T)
 x:Cons(T,y)/\y:Cons(T,z)/\z:{x}  |- x:List(T)
\end{lstlisting}

 Using separation type, we can distinguish between
 a (possibly empty) acyclic list and a (non-empty) cyclic list
 by the following set of separation predicates.

\begin{lstlisting}[name=sec2, numbers=none]
 x->AList(T)  <=> x:Nil \/ ex q. x->Cons(T,q)*q->AList(T)
 x->CList(T)  <=> x->Cons(T,q)*q->LSeg(T,x)
 x->LSeg(T,p) <=> x:{p} \/ ex q. x->Cons(T,q)*q->LSeg(T)
                                     /\x:(*@$\neg$@*){p}
\end{lstlisting}

 To support circular list, one would need to have
 list-segment predicate, \code{\heapto{x}{\gtype{LSeg}{T,p}}}, which captures an incoming
 pointer x and an outgoing pointer p.

 Using these separation predicates, we can prove the following examples:
\begin{lstlisting}[name=sec2, numbers=none]
  x->Cons(T,y) * y->Cons(T,z) /\ z:Nil |- x->AList(T)
  x->Cons(T,y) * y->Cons(T,z) /\ z:{x}   |- x->CList(T)
\end{lstlisting}

But may also disprove:
\begin{lstlisting}[name=sec2, numbers=none]
  x->Cons(T,y) * y->Cons(T,z) /\ z:Nil |/- x->CList(T)
  x->Cons(T,y) * y->Cons(T,z) /\ z:{x}   |/- x->AList(T)
\end{lstlisting}

Thus, separation predicates are flow-sensitive, and they
typically denote {\em data structure
invariance} that we
may need to guarantee
for our program codes.
We propose to provide this guarantee
 by typed-safe verification using pre/post
 conditions, expressed with the help of
 suitable separation predicates.

}

For pure types, inductive predicates capture the weakest property
guaranteed for a given algebraic data type. For \code{List(T)}:
\[
  \code{r{:}List(T)} ~\equivA~
    \code{r{:}Nil}
    ~\vee~
    (\exists q.~ \code{r{:}Cons(T,q)} \wedge \code{q{:}List(T)})
\]
This predicate is stable for every pure \code{List(T)} value.
Both acyclic and cyclic lists are instances:
\begin{align*}
  \code{x{:}Cons(T,y)} \wedge \code{y{:}Cons(T,z)} \wedge \code{z{:}Nil}
    &~\vdash~ \code{x{:}List(T)} \\
  \code{x{:}Cons(T,y)} \wedge \code{y{:}Cons(T,z)} \wedge \code{z{:}\{x\}}
    &~\vdash~ \code{x{:}List(T)}
\end{align*}

Using separation types, acyclic and cyclic lists can be distinguished
by the following separation predicates:
\begin{align*}
  \code{\heapto{x}{AList(T)}}
    &~\equivA~
      \code{x{:}Nil}
      ~\vee~ \exists q.~
        \code{\heapto{x}{Cons(T,q)}} \sep \code{\heapto{q}{AList(T)}} \\
  \code{\heapto{x}{CList(T)}}
    &~\equivA~
      \code{\heapto{x}{Cons(T,q)}} \sep \code{\heapto{q}{LSeg(T,x)}} \\
  \code{\heapto{x}{LSeg(T,p)}}
    &~\equivA~
      \code{x{:}\{p\}}
      ~\vee~ \exists q.~
        \code{\heapto{x}{Cons(T,q)}} \sep \code{\heapto{q}{LSeg(T,p)}}
        \wedge \code{x{:}\neg\{p\}}
\end{align*}
The list-segment predicate \code{\heapto{x}{LSeg(T,p)}} captures an
incoming pointer \code{x} and an outgoing pointer \code{p}, enabling circular
list support. These predicates yield the following provable and
disprovable assertions:
\begin{align*}
  \code{\heapto{x}{Cons(T,y)}} \sep \code{\heapto{y}{Cons(T,z)}} \wedge \code{z{:}Nil}
    &~\vdash~  \code{\heapto{x}{AList(T)}} \\
  \code{\heapto{x}{Cons(T,y)}} \sep \code{\heapto{y}{Cons(T,z)}} \wedge \code{z{:}\{x\}}
    &~\vdash~  \code{\heapto{x}{CList(T)}} \\
  \code{\heapto{x}{Cons(T,y)}} \sep \code{\heapto{y}{Cons(T,z)}} \wedge \code{z{:}Nil}
    &~\nvdash~ \code{\heapto{x}{CList(T)}} \\
  \code{\heapto{x}{Cons(T,y)}} \sep \code{\heapto{y}{Cons(T,z)}} \wedge \code{z{:}\{x\}}
    &~\nvdash~ \code{\heapto{x}{AList(T)}}
\end{align*}

Separation predicates are flow-sensitive and express data structure
invariants that type-safe verification must maintain. We propose
guaranteeing these invariants via pre/post conditions expressed with
appropriate separation predicates.


\subsection{Separation Type for Data Race Freedom}
\label{sec:drf}
\hide{
A concurrent program of two threads has a race
if both threads access the same shared data, and one of
them is a write; and there is no synchronization
order between these accesses of the two threads. One conservative
way to ensure race freedom between
two threads is to guarantee that they
do not have shared memory, where one thread
has full ownership (to write) and the other thread
has at least shared ownership to read.
Such ownership can be denoted by
separation types, but we will need finer
granularity of ownership on a per-field basis.

Consider a mutable \code{Pair(X,Y)} with two fields
of types \code{X} and \code{Y}.

\begin{lstlisting}[name=sec2, numbers=none]
 data Pair(X,Y) = Pair(X,Y)
\end{lstlisting}

and the following two concurrent threads.
\begin{lstlisting}[name=sec2, numbers=none]
  p.Pair.1 := "hi"  || p.Pair.2 := p.Pair.2+1
 \end{lstlisting}

Though both threads have write access to
\code{p}, they are really operating on disjoint
fields.
To support such access, we shall provide
the following lemma for each object type
which can be used to convert to field-level
separation types.

\begin{lstlisting}[name=sec2, numbers=none]
  p->Pair(X,Y) <=> p.Pair.1->X * p.Pair.2->Y
\end{lstlisting}

Thus, one can denote each \code{Pair} object by
the separation type \code{\heapto{p}{\gtype{Pair}{X,Y}}}
which can be split into two fields
\code{\heapto{p.Pair.1}{X}{\sep} \heapto{p.Pair.2}{Y} }
by the above lemma.

To ensure data-race freedom, we can only use separation types
in concurrent threads, but must not
use pure types (unless we are using immutable data structures),
since the latter can be mutated
by other threads via (arbitrary) aliases of pure
types.
Type-safe verification of our 
example
can be performed, as follows.
\begin{lstlisting}[name=sec2, numbers=none]
  // p->Pair(Any,Int)
  // p.Pair.1->Any  *  p.Pair.2->Int
  p.Pair.1 := "hi"  || p.Pair.2 := p.Pair.2+1
  // p.Pair.1->Str  *  p.Pair.2->Int
  // p->Pair(Str,Int)
\end{lstlisting}

In case concurrent threads require write-accesses
to shared heap memory, we can further extend our framework to
support a locking mechanism that can be applied to data
structures that are captured by separation types.}

A concurrent program has a data race if two threads access the same
shared data, at least one access is a write, and there is no
synchronisation ordering between them. One way to ensure race freedom
is to guarantee disjoint memory access, expressed via separation types
at field-level granularity.

Consider a mutable \code{Pair(X,Y)} with fields of types \code{X}
and \code{Y} and the following two concurrent threads:
\begin{lstlisting}[name=sec2, numbers=none]
 p.Pair.1 := "hi"  ||  p.Pair.2 := p.Pair.2+1
\end{lstlisting}
Although both threads write to \code{p}, they access disjoint fields. The
following field-splitting lemma supports this:
\[
  \code{\heapto{p}{Pair(X,Y)}}
  ~\equivA~
  \code{\heapto{p.Pair.1}{X}} \sep \code{\heapto{p.Pair.2}{Y}}
\]
A pair \code{\heapto{p}{Pair(X,Y)}} can be split into disjoint
field ownership
\code{\heapto{p.Pair.1}{X}} \sep \code{\heapto{p.Pair.2}{Y}}.
Type-safe verification then proceeds as follows:
\begin{lstlisting}[name=sec2, numbers=none]
 // p->Pair(Any,Int)
 // p.Pair.1->Any  *  p.Pair.2->Int
 p.Pair.1 := "hi"  ||  p.Pair.2 := p.Pair.2+1
 // p.Pair.1->Str  *  p.Pair.2->Int
 // p->Pair(Str,Int)
\end{lstlisting}

To ensure data race freedom, only separation types (not pure types)
may be used in concurrent threads unless the data structure is
immutable: pure types permit arbitrary aliases that other threads
could mutate independently. When concurrent threads require write
access to shared heap memory, the framework can be extended with a
locking mechanism over separation-typed data structures.


\section{Operational Semantics}\label{app:opsem}
\label{sec:bigstep}


To facilitate the following soundness proofs, we define a big-step reduction relation with judgments \bigstep{e}{h}{S}{\Rese}{h_1}{S_1}.
Program states consist of a heap $h$ and store $S$, like in Sec.~\ref{sec:Soundness}.
Outcomes are $\Rese ::= Norm(\code{v}) \mid \code{Abrt}$
where $Norm(\code{v})$ captures the valid values and \code{Err} values which could occur during the run-time but do not
terminate the programs. The $\code{Abrt}$ represents a failure which will immediately kill the process.
In the $Match$ case, we assume there is an implicit case that handles a match failure. If the process cannot find
a case to match, it will terminate the program with $\code{Abrt}$.

\[
\begin{array}{cc}
\incrRule{
}{
  \bigstep{\code{v}}{h}{S}{\code{v}}{h}{S}
}{Nil, Const, Lambda}
\end{array}
\]
\[
\begin{array}{cc}
\incrRule{
}{
  \bigstep{\econs{\code{x_1}}{\code{x_2}}}{h}{S}{\econs{S(\code{x_1})}{S(\code{x_2})}}{h}{S}
}{Cons}
\end{array}
\]

\[
\begin{array}{cc}
\incrRule{
}{
  \bigstep{\code{x}}{h}{S}{S(\code{x})}{h}{S}
}{Var}
\end{array}
\]
\[
\begin{array}{cc}
\incrRule{
  type(S(\code{x})) <: \code{t}
}{
  \bigstep{(\code{t}~(\code{x}))}{h}{S}{\code{x}}{h}{S}
}{Cast}
&
\end{array}
\]

\[
\begin{array}{cc}
\incrRule{
  type(S(\code{x})) \not <: \code{t}
}{
  \bigstep{(\code{t}~(\code{x}))}{h}{S}{\code{Abrt}}{h}{S}
}{Cast}
&
\end{array}
\]

\incrHRule{
  \bigstep{\code{e_1}}{h}{S}{\code{v}}{h_1}{S_1}\qquad
  \bigstep{\code{e_2}}{h_1}{S_1[\code{x}\heapstoreupd\code{v}]}{\code{v_1}}{h_2}{S_2}
}{
  \bigstep{(\elet{\code{x}}{\code{e_1}}{\code{e_2}})}{h}{S}{\code{v_1}}{h_2}{S_2{\setminus}\{\code{x}\}}
}{Let}



\[
\begin{array}{cc}
\incrRule{
  S(\code{f}) = \efun{\code{x^*}}{\Phi}{\code{e}}\qquad S(\code{x^*}) = \code{v^*}\\
  \bigstep{\subst{\code{x^*}}{\code{v^*}}{\code{e}}}{h}{S}{\code{v_2}}{h_1}{S_1}
}{
  \bigstep{\ecall{\code{f}}{\code{x^*}}}{h}{S}{\code{v_2}}{h_1}{S_1}
}{App}
\end{array}
\]
\[
\begin{array}{cc}
\incrRule{
  \code{x_1} \in \heapdom{h}
}{
  \bigstep{(\eassign{\code{x_1}}{\code{x_2}})}{h}{S}{\code{()}}{\heapupdate{h}{S(\code{x_1})}{S(\code{x_2})}}{S}
}{Assign}
\end{array}
\]

\[
\begin{array}{cc}
\incrRule{
}{
  \bigstep{\ederef{\code{x}}}{h}{S}{h(S(\code{x}))}{h}{S}
}{Deref}
\end{array}
\]
\[
\begin{array}{cc}
\incrRule{
  \loc \notin \heapdom{h}
}{
  \bigstep{\eref{\code{x}}}{h}{S}{\loc}{\heapupdate{h}{\loc}{S(\code{x})}}{S}
}{Ref}
\end{array}
\]


\[
\begin{array}{cc}
\incrRule{
  S(\code{x}){\subseteq}\code{p_i}\qquad
  \bigstep{\code{e_i}}{h}{S}{\code{v}}{h_i}{S_i}
}{
  \bigstep{(\ematch{\code{x}}{\code{p_1}{\ra}\code{e_1}{{\mSep}\cdots{\mSep}}\code{p_n}{\ra}\code{e_n}})}{h}{S}{\code{v}}{h_i}{S_i}
}{Match}
\end{array}
\]


\section{The Lean Type-Checker}
\label{sec:leanchecker}
The Lean development additionally yields a \emph{reflective, self-certifying type-checker}.
Where the derivations of Sec.~\ref{sec:lean} are built by hand (or by the small \code{forward}
tactic), the checker turns the meta-theory into an \emph{executable} decision procedure and lets
each program discharge its own typing derivation by computation.

\subsection{A checker proved sound once}
For each layer we implement the forward rules as computable functions: an algorithmic subtyping
\code{subTyAlg}, a state-entailment test \code{entailsAlg}, and a recursive \code{checkExpr} over
expressions (\code{checkSep} in the separation layer M2; \code{checkHO} in the higher-order layer
M3), and prove \emph{once} that each is sound against the inductive derivation relation
\code{Deriv}:
\[
  \code{checkExpr}\ \Gamma\ \Delta\ e\ r = \code{some}\ \Delta'
  \ \Longrightarrow\
  \code{Deriv}\ \Gamma\ \Delta\ e\ r\ \Delta'.
\]
Each soundness theorem is itself axiom-clean
($\{\textsf{propext},\textsf{Classical.choice},\textsf{Quot.sound}\}$). Given it, an individual
program's typing derivation follows purely by \emph{computation}, with no per-example proof. The
checker is deliberately \emph{sound but incomplete}: it \emph{checks} a program against a
supplied specification rather than \emph{inferring} one. Its architecture is sketched next,
and the machinery behind the red-black, heap-atom, and packed-precondition developments is
detailed in Appendix~\ref{sec:leanapp}.

\subsection{Checker architecture}
Three design choices make the checker both executable and extensible.

\emph{Law registries.} The registry-based type predicates of Sec.~\ref{sec:lean} are
consumed by the checker through per-predicate \emph{law rows}: a fold row (how a
constructor application folds back into the predicate), cover rows (which constructors a
predicate can exhibit), and disjointness rows. Extending coverage to a new data structure
means registering its laws; the checker itself, and its soundness proof, are
untouched. The red-black development registers three such predicates; the sized-list and
sized-tree predicates of the arithmetic engine below are two more.

\emph{Disjunctive states with certified pruning.} Checker states are disjunctions of
conjunctive typing contexts. A \code{match} splits the state per pattern package,
entailments are checked per disjunct, and infeasible disjuncts are discharged by
\emph{covers-based refutation}: a certified unsatisfiability test prunes branches whose
recorded facts contradict the registered cover laws. This is the mechanism behind
path-sensitive matches and the red-black colour analysis.

\emph{Heap-atom entailment and unfolding} (the separation layer). State entailment over
heaps is a goal-directed \emph{atom-cover} procedure: each goal atom must be covered by a
premise atom under a small set of certified rows (identity, alias redirection, weakening,
and per-predicate fold rows); leftover premise atoms are framed away. Packed
preconditions are handled by registered \emph{unfold templates}, justified once per
predicate by a store-extension lemma.

\emph{Quantified families by reflection.} Three wrappers turn one symbolic checker run
into a theorem quantified over a family: element-type polymorphism ($\forall T$, the
\code{polyReduce} symbolic-reduction macro), successor height indices ($\forall h$, the
sorted quantifier), and affine index valuations ($\forall\rho$, the \code{allIdx} wrapper
of the arithmetic engine below).

\subsection{Two discharge modes}
Typing derivations are obtained by \emph{computation} in one of two modes. (i)~At a \emph{ground}
instantiation, \code{checkExpr\_reflect (by native\_decide)} runs the compiled checker and lifts its
Boolean result through the soundness theorem, with \emph{no per-example lemmas}. The one price is
that \code{native\_decide} adds a compiler-reduction axiom to \emph{that example's} footprint ---
the only departure from the development's otherwise minimal axiom base; the checkers' soundness
proofs themselves do not use it. (ii)~Results quantified over \emph{open} types ($\forall T$) cannot
use \code{native\_decide}, which needs a closed term; they are discharged by \emph{symbolic}
reduction of the checker (the \code{polyReduce} macro), feeding the per-type-variable leaf facts as
hypotheses. These polymorphic theorems are therefore \emph{native-free} and fully axiom-clean.
The two modes differ in cost by orders of magnitude, for a reason complementary to their trust.
\code{native\_decide} \emph{compiles} the checker and runs it as machine code on packed runtime data
(with in-place update and machine integers), taking exactly one branch at each ground \code{match}.
Symbolic mode instead reduces the checker by the kernel's own definitional computation over
\code{Expr} terms: unary/constructor-encoded data, copying substitution, no sharing, and, because
the type or height variable stays \emph{open}, \code{simp}-driven case-splitting rather than a single
concrete branch. The resulting proof is then re-checked by the kernel. Millisecond ground runs thus
become multi-minute symbolic ones; the price buys a minimal, oracle-free trusted base.

\subsection{Certified linear arithmetic for sized types}
\label{sec:checkerlia}
Beyond the successor fragment used by the red-black development, the checker carries a
\emph{general affine-equality} linear-arithmetic engine, realising the equality part
of the decidable fragment outlined in App.~\ref{sec:decidable} with no SMT oracle
(consistent with Sec.~\ref{sec:intro}). Index expressions are \emph{reified} into a
canonical form (an integer constant plus a sorted list of variable coefficients), and
index equalities and disequalities are decided by a single \emph{sound} Boolean procedure
(\code{idxEqB}/\code{idxNeqB}; the equality test's soundness proof is axiom-free, and
completeness is deliberately not claimed). Sized predicates ($List(T,n)$, and
size-indexed trees with $1{+}a{+}b$ at each node) are registered as law rows over this
engine, and $\forall$-index families are certified by one Boolean run through the
\code{allIdx} reflect wrapper.
The ``one Boolean run per $\forall$-index family'' description is exactly true
of this call-site reflect wrapper; the flagships' own well-formedness $\forall\rho$ obligation
instead closes via a small \code{by\_cases} family (3 cases) of symbolic checker runs, not a single
uniform run. Two further flagships are checker-certified end-to-end on this engine, both native-free and
axiom-clean: sized \code{append} ($List(T,n)\times List(T,m) \Rightarrow List(T,n{+}m)$, for all integers $n,m$)
and exact-length \code{length} ($List(T,n) \Rightarrow \{n\}$). In each, the sized-list precondition is
uninhabited at $n<0$, so the semantic content lives at $n,m\ge 0$; \code{append} is witnessed at $n=2,m=1$.
Their sized-list entailments run through per-disjunct certifiers such as \code{slMemOkB}, rather
than one uniform \code{entailsAlg} run. A sized-\emph{tree} node \code{cons(l,r) : ST(T,\{s\})}
decomposes \emph{existentially} as $s = 1{+}a{+}b$, with the children's sizes $a,b$ not determined
by $s$ --- a decomposition no single-pair \code{match} field slot can express directly. A
\emph{declared-post frame gate} (\code{checkSTMatchTo}, paralleling the \code{SortedList} gate
below) crosses this wall instead, and two further flagships are checker-certified end-to-end on
it, both native-free and axiom-clean: \code{size} (double-rebound destruction through both
children; the result \emph{is} the size itself) and \code{mirror} (destruction \emph{and}
reconstruction --- the rebuilt node's commuted child order $1{+}b{+}a$ is reconciled against the
destructed $1{+}a{+}b$ by a dedicated semantic bridge, \code{mirror\_consEntails}). This closes the
sized-tree case; the semantic-only certificate from an earlier iteration (\code{streeAt\_mirrorV},
proved directly over the tree's value-level semantics) survives unchanged, as a complementary
account. Table~\ref{tab:lean-checker} records both new flagships.

\subsection{Certified linear arithmetic for the inequality fragment}
\label{sec:checkerlia-ineq}
On top of the affine-equality engine, the checker also carries a certified \emph{inequality}
decision procedure. Index expressions are compared by \code{idxLeB}, a sound unit-Farkas
worklist procedure that saturates hypothesis differences by unit-coefficient subtraction and
accepts once some saturated difference closes to a nonnegative constant; its soundness proof
is clean (the classical axiom base, rather than axiom-free, since it also needs genuine
\code{Int} order reasoning). Completeness is deliberately \emph{not} claimed: coefficient
scaling (e.g.\ deriving $2x{\le}a{+}b$ from $x{\le}a \wedge x{\le}b$) and equality-tightening
lie outside the unit-Farkas fragment. A closed-only assertion atom \code{idxLe} mirrors the
existing index-singleton semantics, and a comparison-derivation refinement of \code{<} on
index-typed operands (\code{Deriv.binopLtIdx}) packages each branch's guard into a
negation-normalised \code{idxLe} atom feeding this engine.

On this footing the checker certifies a \code{SortedList(T,lo,hi)} predicate (a structural
fold on the tail whose lower bound is rebound at each matched element), together with a
bespoke $\forall$-element destruction rule (\code{Deriv.matchSortedCons}) that universally
quantifies the \code{cons} premise over the matched element's value, closing every emitted
index atom at that value. Three flagships are checker-certified end-to-end on it, all
native-free and axiom-clean: \code{head} (the shared post collapses to a scalar
range-validity atom, so the per-element bound is instead certified in the destruction rule's
own side-condition witness; partial correctness on the \code{cons} case --- the \code{nil}
arm is deliberately evaluation-stuck, so nothing is claimed on empty input), \code{member}
(an honest \code{Bool} post via rebound recursion
and a double-\code{<} comparison encoding), and, as the campaign's capstone, sorted
\code{insert} ($(Int,\ SortedList(Int,lo,hi)) \Rightarrow SortedList(Int,lo,hi)$, a genuine
end-to-end sorted insertion). As with the sized-list rows, the vacuity corners are stated
honestly: \code{insert}'s closed precondition is uninhabited outside $lo{\le}x{\le}hi$ and
\code{head}'s exactly at $lo{>}hi$ (the semantic content lives inside the range, witnessed at
concrete values), while \code{member}'s precondition carries no range atom and is satisfiable
broadly. Each flagship's $\forall$-element \code{cons}-branch premise is discharged by a
named semantic side condition (\code{SoConsObligation}) rather than by the structural checker
gate directly: the declared-post frame gate used by construction flagships such as
\code{insert} deliberately does not re-run the \code{cons} branch at its reserved index
variable, an honest, disclosed checker incompleteness that costs nothing, since the real
premise is supplied by the side condition instead. As with the equality engine, no
completeness claim is made anywhere, for the engine or for the checker.
Table~\ref{tab:lean-checker} lists all three flagships under the inequality-fragment group.

With the two sized-tree flagships above, the checker now discharges every index-language
construct the paper's data-structure predicates use: \textbf{there is no remaining documented
index-language boundary}. What remains are limits on the \emph{completeness} of the certified
procedures themselves, not on coverage. The comparators \code{idxEqB} (affine equality) and
\code{idxLeB} (affine inequality) are both sound but not claimed complete: \code{idxLeB} decides
only the unit-coefficient Farkas fragment, and coefficient scaling or equality-tightening lie
outside it (App.~\ref{sec:decidable}). The declared-post frame gates \code{checkSTMatchTo} and
\code{checkSortedMatchTo} each deliberately omit re-running their $\forall$-child/$\forall$-element
destruction premise at its own reserved index variable, discharging it instead through a named
semantic side condition --- an honest, disclosed incompleteness that costs nothing, since the real
premise is supplied that way regardless. The sized-tree predicate's element type, like
\code{SortedList}'s, is carried shape-only (semantically inert). And every flagship's vacuity
corner is disclosed per instance: both \code{size} and \code{mirror} are vacuous at $s{<}0$
(formalised as \code{mirrorPre\_vacuous}).

\section{Details of Lean Machinery}
\label{sec:leanapp}
\paragraph{Proof size and checking time.}
Table~\ref{tab:lean-examples} reports, for each machine-checked example, the size of its
Lean script (the program, its type specification, and the proof, in lines) and the time
Lean takes to elaborate and kernel-check it. These are \emph{proof-assistant} elaboration
times for the meta-level derivations, and are mostly small: the bulk of the examples check in under a second
(tens to a few hundred milliseconds); the heaviest are the typed-queue pair
(${\approx}4.9$\,s and ${\approx}3.7$\,s) and the five-function Okasaki
\code{RBTree\_insert}, whose full colour/black-height derivation takes about $15$\,s,
bringing the whole suite to about $39$\,s. The proof-script
size is the more informative figure, reflecting the effort to construct each typing
derivation and instantiate soundness; it is a one-time meta-theoretic cost, not paid again
when the \toolname{} verifier checks user programs. The M1 proofs are written with a small
forward-reasoning tactic (\texttt{forward}/\texttt{ml\_side}/\texttt{ml\_entail}) that
applies the Hoare rules and discharges the decidable side conditions automatically: it
makes the scripts more compact at a modest cost in elaboration time, since the generic
tactics perform broader \texttt{simp}/\texttt{decide} search than the hand-written proofs they
replace. The suite spans the paper's worked examples (path-sensitive accessors and
type-dispatch of Sec.~\ref{sec:motivate}, function composition) together with a
representative selection of standard benchmarks and standard-library functions.
These include type-dispatch
(\code{shape\_funs}), list construction (\code{List.append}, \code{List.rev}), a
colored red-black tree with a full balancing predicate (both a recursive search,
\code{RBTree\_member}, and a verified Okasaki insert, \code{RBTree\_insert}, over
\code{RBTree(Int,c,h)}), reference \code{swap} with
must-aliasing, a separation-typed \code{Queue}, and the higher-order standard-library
functions \code{List.map}, \code{List.filter}, and \code{List.fold\_left}. Adding
\code{swap} and the operational \code{Queue} demo to the separation layer required no new
rule, while the fully \emph{typed} \code{Queue.enqueue}/\code{Queue.dequeue} rows are enabled
by the inductive heap-predicate layer and exact-content strong-update rules described below;
\code{append}/\code{rev},
which \emph{construct} lists, are enabled by a data-constructor rule added to the pure layer.
The \code{map}/\code{filter}/\code{fold\_left} examples exercise a genuine \emph{integration}.
The higher-order layer, which already has closures, application, and the behavioural function
type, is extended with the recursive \code{List(T)} predicate (its interpretation, a
data-constructor typing rule, and \code{match} field-type propagation). A pure
function \emph{parameter} may then be applied elementwise across a list under a recursive
specification: e.g.\ \code{map} at type
$(A{\to}B,\ List(A)){\Rightarrow}List(B)$ and \code{fold\_left} at
$((Acc,A){\to}Acc,\ Acc,\ List(A)){\Rightarrow}Acc$.
The suite's newest addition is a general \emph{linear-arithmetic} culminating example. Two
sized-list flagships --- \code{append} ($List(T,n)\times List(T,m) \Rightarrow List(T,n{+}m)$, for
all integers $n,m$; at $n<0$ the sized-list precondition is uninhabited, so the semantic content
lives at $n,m\ge 0$, witnessed at $n=2,m=1$) and exact-length \code{length}
($List(T,n) \Rightarrow \{n\}$, same vacuity at $n<0$) --- are checker-certified end-to-end on a
general affine-equality index engine (Sec.~\ref{sec:decidable}). Two sized-tree flagships,
\code{size} (destruction through both children; the result \emph{is} the size) and \code{mirror}
(destruction \emph{and} reconstruction, the commuted child order reconciled by a dedicated
semantic bridge), are checker-certified end-to-end too, via a declared-post frame gate that
crosses the existential child-size decomposition a single-pair \code{match} cannot express ---
\code{mirror} upgraded from the semantic-only certificate of an earlier iteration, and there is
no remaining documented index-language boundary. A companion \emph{inequality-fragment} engine (a
certified unit-Farkas decision procedure) checker-certifies a \code{SortedList(T,lo,hi)} predicate and,
through a bespoke $\forall$-element destruction rule, three further flagships end-to-end:
\code{head}, \code{member}, and, as the campaign's capstone, sorted \code{insert}
($(Int,\ SortedList(Int,lo,hi)) \Rightarrow SortedList(Int,lo,hi)$).

{
\small
\renewcommand{\arraystretch}{0.95}
\setlength{\tabcolsep}{5pt}
\begin{xltabular}{\linewidth}{p{3cm} X r r}
\caption{\label{tab:lean-examples} Lean examples: proof-script size
(program\,+\,spec\,+\,proof, in lines) and elaboration time (ms).
$^\dagger$~composition-level figure (see below).}\\
\Xhline{2\arrayrulewidth}
Example & Feature exercised & Proof (lines) & Lean (ms) \\
\Xhline{2\arrayrulewidth}
\endfirsthead
\Xhline{2\arrayrulewidth}
Example & Feature exercised & Proof (lines) & Lean (ms) \\
\Xhline{2\arrayrulewidth}
\endhead
\multicolumn{4}{l}{\emph{M1 --- pure fragment}} \\
\code{id} & result / singleton type & 61 & 85 \\
\code{tail} & path-sensitivity, \Err{} on \code{Nil} & 241 & 328 \\
\code{length} & recursion & 201 & 227 \\
\code{fst} & \code{Cons} field-type propagation & 257 & 350 \\
\code{length} (on \code{List(Int)}) & recursive \code{List} predicate & 271 & 310 \\
\code{id} ($\forall T.\,T{\to}T$) & type-variable polymorphism & 98 & 163 \\
\code{length} ($\forall T.\,List(T){\Rightarrow}Int$) & polymorphic recursion & 260 & 475 \\
\code{treeSize} & user-defined predicate (\code{Tree}) & 265 & 370 \\
\code{constFst} & multi-argument function & 104 & 137 \\
\code{head} & path-sensitive accessor, \Err{} on \code{Nil} & 210 & 381 \\
\code{shape} & type-dispatch, union result \code{Int}$\vee$\code{Str} & 232 & 416 \\
\code{append} & list construction + recursion & 283 & 640 \\
\code{rev} & accumulator recursion + construction & 277 & 614 \\
\code{member} & colored-shape tree predicate demo (registry) & 262 & 383 \\
\hline
\multicolumn{4}{l}{\emph{M1 $+$ integer arithmetic}} \\
\code{inc} & base-type arrow, \code{Int} closure & 127 & 206 \\
\code{plus} & binary \code{Int} operator & 93 & 153 \\
\code{List.nth} & polymorphic path-sensitive \Err{} case spec & 762 & 1316 \\
\code{str\_of\_int} & digit-match, \code{div}/\code{mod}, string concat, recursion & 558 & 1192 \\
\hline
\multicolumn{4}{l}{\emph{M1 $+$ balancing predicate}} \\
\code{RBTree\_member} & recursive search over balanced \code{RBTree(Int,c,h)} & 960 & 1019 \\
\code{RBTree\_insert} & Okasaki insert; $\forall h$ spec states result colour and black-height & 5090 & 15034 \\
\hline
\multicolumn{4}{l}{\emph{M2 --- separation / \code{Ref} heaps}} \\
\code{assign} (strong update) + Frame & flow-sensitive mutation, Frame & 255 & 105 \\
\code{Frame} across a call & callee-footprint Frame & 213 & 169 \\
\code{Frame} across a \code{match} & Frame over \code{match} & 199 & 132 \\
\code{weakening} \heapto{x}{t}${\Rightarrow}$\code{x{:}t} & irreversible weakening & 62 & 20 \\
\code{g(x,y){=}mkref\,x} & multi-argument heap function & 163 & 193 \\
\code{swap} & \code{Ref} exchange, must-alias, framed update & 247 & 142 \\
\code{Queue} & separation, front/rear must-alias (operational demo) & 147 & 134 \\
\hline
\multicolumn{4}{l}{\emph{M2 $+$ inductive heap predicates}} \\
\code{Queue.enqueue} & typed two-pointer enqueue; heap-content strong update & 907 & 4862 \\
\code{Queue.dequeue} & \Err{} on empty via must-alias; element typing via eq-slot match & 1162 & 3722 \\
\hline
\multicolumn{4}{l}{\emph{M3 --- higher-order}} \\
\code{apply} & higher-order application & 85 & 116 \\
$\lambda x.\,y$ & environment-capturing closure & 114 & 132 \\
\code{pickFst}$(x,y)$ & multi-argument top-level function & 132 & 191 \\
$\lambda(x,y).\,x$ & multi-argument closure & 102 & 144 \\
\code{compose} & function composition $g{\circ}f$ ($Int{\to}Str{\to}Bool$) & 192 & 170 \\
\hline
\multicolumn{4}{l}{\emph{M3 $+$ recursive \code{List(T)} predicate}} \\
\code{map} & HO map, list construction + recursion & 365 & 799 \\
\code{fold\_left} & HO fold, $n$-ary function parameter & 344 & 588 \\
\code{filter} & HO filter, nested boolean match & 456 & 856 \\
\code{append} & list concatenation over \code{List(T)} & 292 & 456 \\
\code{rev} & accumulator reversal over \code{List(T)} & 301 & 442 \\
\hline
\multicolumn{4}{l}{\emph{M3 $+$ \code{Option} constructors}} \\
\code{Option.get} & path-sensitive case spec, \Err{} on \code{None} & 297 & 313 \\
\code{Option.map} & HO map over \code{Option(T)}, two-case spec & 372 & 474 \\
\code{Option.bind} & HO bind, \code{Option(T)}-returning parameter & 341 & 404 \\
\hline
\multicolumn{4}{l}{\emph{M1 $+$ sized predicates (LIA)}} \\
\code{append} & sized-list append, $List(T,n)\times List(T,m) \Rightarrow List(T,n{+}m)$, all integers ($n{<}0$ vacuous pre, content at $n,m{\ge}0$), general affine-equality engine & 252 & 87$^\dagger$ \\
\code{length} & exact length, $List(T,n) \Rightarrow \{n\}$, general affine-equality engine & 244 & 77$^\dagger$ \\
\code{size} & sized-tree destruction, $ST(T,\{s\}) \Rightarrow \{s\}$, double-rebound recursion through both children, result IS the size (checker-certified) & 351 & 54$^\dagger$ \\
\code{mirror} & sized-tree destruction $+$ reconstruction, size $1{+}a{+}b$ preserved $\forall s$ under the commuted reassembly (checker-certified; upgraded from a semantic-only certificate) & 406 & 55$^\dagger$ \\
\hline
\multicolumn{4}{l}{\emph{M1 $+$ \code{SortedList} (inequality fragment)}} \\
\code{head} & \code{SortedList} destruction; per-element bound certified in the $\forall$-element side-condition witness & 180 & 10$^\dagger$ \\
\code{member} & comparison traversal, rebound recursion, honest \code{Bool} post & 302 & 197$^\dagger$ \\
\code{insert} & end-to-end sorted insertion, post $SortedList(Int,lo,hi)$ (vacuous pre outside $lo{\le}x{\le}hi$) & 986 & 62$^\dagger$ \\
\Xhline{2\arrayrulewidth}
\textbf{Total} & & \textbf{20081} & \textbf{38905} \\
\Xhline{2\arrayrulewidth}
\end{xltabular}
}

\emph{Measurement.} Times are Lean's internal elaboration timings (Lean~4.30.0,
Intel~Xeon~E-2278G), excluding the fixed per-file import overhead; the typed
\code{Queue.enqueue}/\code{Queue.dequeue} rows use the same single-threaded profiler
(async elaboration disabled, per-entry sum) as App.~\ref{sec:leanchecker}, while the earlier
\code{Queue} row is the operational demo. The \code{RBTree\_insert}/\code{RBTree\_member}
rows cover the full colour/black-height balancing development (a five-function Okasaki
insert; the shared \code{RBTree(Int,c,h)} family is counted with \code{RBTree\_member}).
$^\dagger$~The append/length/size/mirror and \code{SortedList} rows report a composition-level
figure: the elaboration
time of the flagship's own top-level theorem(s), composing cached per-branch
\code{bodyChecks}/obligation lemmas rather than a full from-scratch proof-script; the dominant
cost is the checker suite's own elaboration (App.~\ref{sec:leanchecker}).
The \code{size} and \code{mirror} rows are both checker-certified end-to-end
(App.~\ref{sec:leanchecker}); \code{mirror} upgrades the semantic-only certificate of an earlier
iteration, which survives unchanged as a complementary account (\code{streeAt\_mirrorV}).

\paragraph{Coverage and modeling boundaries.}
Table~\ref{tab:lean-coverage} maps a suite of standard benchmarks and standard-library
functions to their status in the
mechanisation. The GADT-style balancing example of Sec.~\ref{sec:pred} is now mechanised:
the indexed predicate \code{RBTree(T,c,h)} is registered semantically (its interpretation
\code{rbAt} pins the root colour \code{c} and black-height \code{h}, unfolding to the
leaf/black/red disjuncts with colour-correlated children), and the spec-level sorted
quantifier $\forall h{:}\mathit{Nat}$ realises the black-height parameter of that predicate:
instantiated at integer singletons, with recursive calls at $h{-}1$ (and, for a red parent,
at $h$). On this footing the recursive \code{RBTree\_member} search and the five-function
Okasaki \code{RBTree\_insert} both check, the latter against
$\forall h.\,(Int,\ \code{RBTree}(Int,c,h)) \Rightarrow
\code{RBTree}(Int,\code{Black},h) \vee \code{RBTree}(Int,\code{Black},h{+}1)$:
the postcondition states the result's exact colour and black-height: black, at height $h$
or $h{+}1$ (the latter when blackening a red root), exactly as red-black insertion behaves. As in
Sec.~\ref{sec:pred}'s definition, the predicate constrains colour and black-height only; key
ordering is not part of it, so the spec makes no BST claim. The mechanisation also instantiates
the decidability recipe of Sec.~\ref{sec:decidable}: \code{rbAt} is structurally recursive and
its index constraints ($h = h_1{+}1$, equal sibling heights) stay within linear integer
arithmetic. These are discharged here once, inside the kernel-checked unfold/fold lemmas, and are now
also implemented as a certified \emph{successor} sub-fragment inside the reflective checker itself
(App.~\ref{sec:leanchecker}), rather than by an SMT solver, consistent with the solver-free
positioning of Sec.~\ref{sec:intro}. The checker's \emph{general} affine-equality index engine,
which certifies the sized-list flagships of Table~\ref{tab:lean-examples} below, is a separate,
now-landed capability described under Table~\ref{tab:lean-checker}'s LIA rows in
App.~\ref{sec:leanchecker}. The typed \code{Queue} is now mechanised as well, closing the last
open row: every Sec.-evaluation benchmark and standard-library entry of
Table~\ref{tab:lean-coverage} is machine-checked. The mechanisation adds an inductive
heap-predicate layer over M2. An exact-footprint list-segment predicate realises the
paper's \code{LSeg}, and the paper's correspondence
$x{\mapsto}\code{List}(T) \Leftrightarrow x{\mapsto}\code{LSeg}(T,\code{Nil})$ holds;
in this encoding it is \emph{definitional} (the list predicate unfolds to the
nil-terminated segment, proved as \code{hlist\_iff\_lseg\_nil}). A heap-content strong-update
rule (\code{assignExact}) types a pointer write together with the structural content of the
cell it overwrites, carrying no content-type side condition: the separating-conjunction split
of the precondition does the work. The empty-queue case of \code{dequeue} realises the
paper's front/rear must-alias as \emph{ghost sharing} (both pointer cells are owned through
one shared ghost) and returns \Err{} structurally on the sentinel. What remains are honest
modeling \emph{boundaries}, not coverage gaps. First, the queue theorems are body-level
ghost-parameterized \code{SemValid} triples, since M2's spec discipline keeps function
\emph{specifications} heap-free (the same claim form as every existing M2 heap example). Second, on the
match-free fragment (straight-line code and calls with match-free bodies) the Frame rule now
also covers these heap-predicate atoms (\code{hpred}/\code{ptsEq}/\code{ptsCon}), with no
\code{FrameNoCon} side-condition there (\code{SemValid.frame\_matchfree}); the flagships are
\code{enqueue\_frames\_queue} (a disjoint packed queue framed across \code{enqueue}, NATIVE-FREE,
clean-3) and \code{ptsCon\_frame} (an exact \code{Cons} cell framed across a strong update). Across
a \code{match} the \code{FrameNoCon} side-condition remains, and is essential to the
operational-locality theorem (\code{frameNoCon\_essential}, a formal countermodel: a frame-owned
\code{con} can flip a \code{match}'s outcome from abort to normal termination); the analogous
question at the \code{SemValid} level is left open, since its abort-permissive arm makes that
flip invisible there. Third, the balancing predicate constrains
colour and black-height only, so no BST key-ordering claim is made (as above). The reflective
checker of App.~\ref{sec:leanchecker} now automates the inductive heap-predicate atoms of the
\code{Queue} layer too: all three queue theorems --- \code{enqueue\_semvalid\_reflective} (at the
\emph{fixed} ghost names \code{f}/\code{ga}/\code{gb}) and the two dequeue twins
\code{dequeue\_semvalid\_empty\_reflective}/\code{dequeue\_semvalid\_cons\_reflective} (each
certifying the exact SAME closed \code{SemValid} proposition as its hand twin) --- are discharged
by the M2 checker via \code{native\_decide}, folding/packing the exact heap atoms through the
\code{heapLawsEnv} registry. Both the red-black rows and the queue rows are thus now also checked
automatically by that reflective checker (the hand derivations here remain the derivation-effort
record; see App.~\ref{sec:leanchecker}). The checker's coverage goes further still: a
packed-precondition wrapper unfolds the opaque \code{hpred} queue invariant itself at the point a
triple is checked, so \code{enqueue\_semvalid\_packed} and \code{dequeue\_semvalid\_packed} certify
both operations directly from a single ghost-free \code{hpred} precondition (dequeue's post the
honest disjunction of its two arms), each by one \code{native\_decide} call
(App.~\ref{sec:leanchecker}). None of these is a soundness obstacle.

{
\small
\renewcommand{\arraystretch}{1.1}
\setlength{\tabcolsep}{5pt}
\begin{table}[h]
\centering
\begin{tabular}{@{}l p{0.62\linewidth}@{}}
\Xhline{2\arrayrulewidth}
Status & Benchmark / stdlib function \\
\Xhline{2\arrayrulewidth}
mechanised (M1) & \code{tail}, \code{length}, \code{length\_generic}, \code{singleton\_type}, \code{shape\_funs} \\
mechanised (M1 $+$ arith) & \code{inc}, \code{(+)}, \code{str\_of\_int} (i.e.\ \code{basic}), \code{List.nth} \\
mechanised (M1 $+$ balancing predicate) & \code{RBTree\_insert}, \code{RBTree\_member} \\
mechanised (M1 $+$ LIA) & \code{append} (sized), \code{length} (sized, exact), \code{size}/\code{mirror} (sized tree, checker-certified) \\
mechanised (M1 $+$ inequality LIA) & \code{SortedList.insert}, \code{SortedList.head}, \code{SortedList.member} \\
mechanised (M2) & \code{stateful\_funs}, \code{refs\_pure}, \code{refs\_sep}, \code{swap\_pure}, \code{swap\_sep} \\
mechanised (M2 $+$ heap predicates) & \code{Queue.enqueue}, \code{Queue.dequeue} \\
mechanised (M3 $+$ \code{List}) & \code{map}, \code{List.map}, \code{List.filter}, \code{List.fold\_left}, \code{List.append}, \code{List.rev}, \code{List.length} \\
mechanised (M3 $+$ \code{Option}) & \code{Option.map}, \code{Option.bind}, \code{Option.get} \\
\Xhline{2\arrayrulewidth}
\end{tabular}
\caption{\label{tab:lean-coverage} Status of a suite of standard benchmarks and
standard-library functions in the Lean mechanisation.}
\end{table}
}

\paragraph{The reflective type-checker: architecture.}
This section details the reflective, self-certifying type-checker of
App.~\ref{sec:leanchecker}. The checker (about $20{,}600$ lines beyond the derivations, atop a further ${\approx}14{,}700$
lines of certified index engines) is a
per-layer \emph{computable} forward checker. For each layer we implement the forward rules as
functions --- an algorithmic subtyping \code{subTyAlg}, a state-entailment test \code{entailsAlg},
and a recursive \code{checkExpr} over expressions in the pure layer M1 (\code{checkSep} in M2;
\code{subTyAlgHO}/\code{entailsAlgHO}/\code{checkHO} in M3) --- and prove \emph{once} that each is
sound with respect to the inductive derivation relation \code{Deriv} (theorems
\code{checkExpr\_sound}, \code{checkSep\_sound}, \code{checkHO\_sound}), each itself axiom-clean
($\{\textsf{propext},\textsf{Classical.choice},\textsf{Quot.sound}\}$).

\paragraph{Decision devices.}
Behind the two modes the checker settles the decidable side conditions of the rules with a small
library of provably sound decision procedures: list-/Bool-exhaustiveness laws
(\code{listCoversB}, \code{boolCoversB}); unsatisfiable-disjunct detection (\code{pureUnsat},
\code{findNegMem}); positive type-disjointness (\code{tyDisjB}, M3, which refutes the vacuous
\code{otherwise} and mismatched branches of the two-case \code{Option} specs); equality-aware
membership with binder projection (\code{sProject}); polymorphic-call instantiation inference
(\code{inferInst}, M1) and let-result-variable alignment (\code{letResVar}, M1); and a bounded
\emph{coverage decision procedure} (\code{coversB}/\code{coverUnsat} with \code{pureDisjoint},
\code{tautNegUnsat}, \code{varPost}, M1). This last procedure refutes the recursive-call coverage tail
$\mathit{ctx}\wedge\neg(g_1{\vee}\cdots{\vee}g_k)$ by case-splitting each list context into its
precise \code{Nil}${\vee}$\code{Cons} unfold and each guard singleton by excluded middle, until every
leaf is subsumed. In the symbolic mode each type parameter flows through only at opaque leaves,
guarded by decidable side conditions (\code{subTyAlg T T}, \code{tyEqB T T},
\code{findNegMem\,\ldots\,= none}) that are genuinely necessary: a closure-singleton type, for
instance, is not reflexively a subtype of itself. The algorithmic subtyping and entailment are
deliberately \emph{sound but incomplete} (function types are compared reflexively, heap entailment
by separating-conjunct matching), mirroring a practical analyser.

\paragraph{Automatic examples (Table~\ref{tab:lean-checker}).}
The pure and higher-order checkers cover all forward rules, so the paper's worked examples check
reflectively without hand-written proofs. In M1 this includes integer/string arithmetic
(\code{concat}/\code{div}/\code{mod}) so that \code{str\_of\_int} and monomorphic \code{nth} check,
and polymorphic-function calls (the instantiation \emph{inferred} by \code{inferInst}). In M3 it
includes closure application and \code{List(T)} field-type propagation, so that the \code{map},
\code{filter}, and \code{fold\_left} bodies check; the separation checker covers the
heap-manipulating rules including strong-update \code{assign}. For eight higher-order functions we
go further and discharge the \emph{whole-program} well-formedness obligation \code{WFEnv} by
\code{native\_decide}, then lift each through \code{HO.soundness} to a full semantic-validity
statement: the five recursive list functions \code{map}, \code{filter}, \code{fold\_left},
\code{append}, \code{rev} and the three \code{Option} functions \code{Option.get},
\code{Option.map}, \code{Option.bind}, certifying specifications such as
$\code{map}:(Int{\to}Bool,List(Int)){\Rightarrow}List(Bool)$ and
$\code{Option.get}:Option(Int){\Rightarrow}Int{\vee}\code{Err}$. All eight are \emph{additionally}
proved \emph{polymorphic} in their element/accumulator types
($\code{map}:\forall A\,B.\,(A{\to}B,List(A)){\Rightarrow}List(B)$, and so on) by the native-free
symbolic mode; and in the pure layer \code{id}, \code{length}, and \code{nth} likewise check for
\emph{all} element types $T$ ($\code{nth}:\forall T.\,(List(T),Int){\Rightarrow}T{\vee}\code{Err}$,
native-free), \code{nth}'s three-case path-sensitive specification additionally exercising the
coverage decision procedure above.

{
\small
\renewcommand{\arraystretch}{1.0}
\setlength{\tabcolsep}{5pt}
\begin{xltabular}{\linewidth}{@{}X l r@{}}
\caption{\label{tab:lean-checker} Reflective-checker results by layer. \emph{Mode}:
\code{native\_decide} (ground, $+$axiom) or \code{symbolic} (native-free); elaboration ms;
$^\dagger$composition-level.}\\
\Xhline{2\arrayrulewidth}
What is certified & Mode & Time (ms) \\
\Xhline{2\arrayrulewidth}
\endfirsthead
\Xhline{2\arrayrulewidth}
What is certified & Mode & Time (ms) \\
\Xhline{2\arrayrulewidth}
\endhead
\multicolumn{3}{l}{\emph{M1 --- pure fragment}} \\
ground reflective programs (16) \newline {\itshape (\code{inc}, \code{+}, \code{div}, \code{mod}, \code{concat}, \code{str\_of\_int}, mono \code{nth}, \code{id}/\code{length} poly-calls)} & \code{native\_decide} & $\lesssim$100 \\
$\code{id}:\forall T.\,T{\Rightarrow}T$ {\itshape (type-variable polymorphism)} & symbolic & 1346 \\
$\code{length}:\forall T.\,List(T){\Rightarrow}Int$ {\itshape (polymorphic recursion)} & symbolic & 1426 \\
$\code{nth}:\forall T.\,(List(T),Int){\Rightarrow}T{\vee}\code{Err}$ {\itshape (path-sensitive spec $+$ coverage procedure)} & symbolic & 2706 \\
\hline
\multicolumn{3}{l}{\emph{M1 --- red-black balancing (reflective)}} \\
\code{RBTree\_member} : $(Int,\code{RBTree}(Int,Bool,Int)){\Rightarrow}Bool$ {\itshape (recursive search)} & \code{native\_decide} & $\lesssim$100 \\
\code{RBTree\_insert} : $\forall h.\,(Int,\code{RBTree}(Int,c,h)){\Rightarrow}$ \newline $\code{RBTree}(Int,\code{Black},h){\vee}\code{RBTree}(Int,\code{Black},h{+}1)$ \ {\itshape (five-function)} & symbolic & ${\approx}15$~min \\
\hline
\multicolumn{3}{l}{\emph{M1 --- general affine-equality LIA (reflective)}} \\
\code{append} : $\forall n\,m.\,(List(T,n),List(T,m)){\Rightarrow}List(T,n{+}m)$ {\itshape (sized-list append, all integers --- $n{<}0$ vacuous pre, content at $n,m{\ge}0$; no per-size checker surgery)} & symbolic & 87$^\dagger$ \\
\code{length} : $\forall n.\,List(T,n){\Rightarrow}\{n\}$ {\itshape (exact-length post $=$ the size itself; same $n{<}0$ vacuity as \code{append})} & symbolic & 77$^\dagger$ \\
\code{size} : $\forall s.\,ST(T,s){\Rightarrow}\{s\}$ {\itshape (sized-tree destruction via a declared-post frame gate, double-rebound recursion through both children; result IS the size itself; vacuous pre at $s{<}0$)} & symbolic & 54$^\dagger$ \\
\code{mirror} : $\forall s.\,ST(T,s){\Rightarrow}ST(T,s)$ {\itshape (sized-tree destruction AND reconstruction; size $1{+}a{+}b$ preserved under the commuted reassembly $1{+}b{+}a$, reconciled by the semantic bridge \code{mirror\_consEntails} --- upgraded from a semantic-only certificate; vacuous pre at $s{<}0$, formalised as \code{mirrorPre\_vacuous}; no remaining documented index-language boundary)} & symbolic & 55$^\dagger$ \\
\hline
\multicolumn{3}{l}{\emph{M1 --- \code{SortedList} (inequality fragment, reflective)}} \\
\code{head} : $SortedList(Int,lo,hi)$, shared post $=$ range-validity $lo{\le}hi$ {\itshape (per-element bound certified in the $\forall$-element side-condition witness; partial correctness on \code{cons})} & symbolic & 10$^\dagger$ \\
\code{member} : $(Int,SortedList(Int,lo,hi)){\Rightarrow}Bool$ {\itshape (honest \code{Bool} post; rebound recursion, double-$<$ encoding)} & symbolic & 197$^\dagger$ \\
\code{insert} : $(Int,SortedList(Int,lo,hi)){\Rightarrow}SortedList(Int,lo,hi)$ \newline {\itshape (end-to-end sorted insertion; $\forall$-element \code{cons} premise via the \code{SoConsObligation} semantic side condition; vacuous outside $lo{\le}x{\le}hi$)} & symbolic & 62$^\dagger$ \\
\hline
\multicolumn{3}{l}{\emph{M2 --- separation / \code{Ref} heaps}} \\
separation programs (8) $+$ pipeline \newline {\itshape (\code{mkref}, \code{constr}, \code{deref}, \code{cast}, \code{let}, strong-update \code{assign}, \code{match}, \code{call})} & \code{native\_decide} & $\lesssim$100 \\
\code{Queue\_enqueue} : enqueue triple at fixed ghosts {\itshape (typed two-pointer queue)} & \code{native\_decide} & $\lesssim$100 \\
\code{Queue\_dequeue\_empty} : dequeue triple, \code{res : Err} {\itshape (empty must-alias case)} & \code{native\_decide} & $\lesssim$100 \\
\code{Queue\_dequeue\_cons} : dequeue triple, \code{res : Int} {\itshape (non-empty case)} & \code{native\_decide} & $\lesssim$100 \\
\code{Queue\_enqueue\_packed} : opaque \code{hpred} atom $+$ \code{v:int} \newline {\itshape (ghost-free packed precondition)} & \code{native\_decide} & $\lesssim$100 \\
\code{Queue\_dequeue\_packed} : bare \code{hpred} atom, disjunctive post \newline {\itshape (ghost-free packed precondition)} & \code{native\_decide} & $\lesssim$100 \\
\hline
\multicolumn{3}{l}{\emph{M3 --- higher-order}} \\
core HO $+$ list whole-programs (11) \newline {\itshape (apply/closures/compose; \code{map}, \code{filter}, \code{fold\_left}, \code{append}, \code{rev} \code{WFEnv})} & \code{native\_decide} & $\lesssim$200 \\
\code{Option} whole-programs (3) {\itshape (\code{Option.get}/\code{map}/\code{bind} \code{WFEnv}, two-case \code{tyDisjB})} & \code{native\_decide} & 537 \\
$\code{map}:\forall A\,B.\,(A{\to}B,List(A)){\Rightarrow}List(B)$ {\itshape (polymorphic whole-program)} & symbolic & 1966 \\
$\code{fold\_left}:\forall Acc\,A.\,((Acc,A){\to}Acc,Acc,List(A)){\Rightarrow}Acc$ \newline {\itshape (polymorphic whole-program)} & symbolic & 2308 \\
$\code{filter}:\forall A.\,(A{\to}Bool,List(A)){\Rightarrow}List(A)$ {\itshape (polymorphic whole-program)} & symbolic & 1214 \\
$\code{append}:\forall T.\,(List(T),List(T)){\Rightarrow}List(T)$ {\itshape (polymorphic whole-program)} & symbolic & 2854 \\
$\code{rev}:\forall T.\,(List(T),List(T)){\Rightarrow}List(T)$ {\itshape (polymorphic whole-program)} & symbolic & 1519 \\
$\code{Option.get}:\forall T.\,Option(T){\Rightarrow}T{\vee}\code{Err}$ {\itshape (polymorphic two-case spec)} & symbolic & 1232 \\
$\code{Option.map}:\forall A\,B.\,(A{\to}B,Option(A)){\Rightarrow}Option(B)$ \newline {\itshape (polymorphic whole-program)} & symbolic & 3165 \\
$\code{Option.bind}:\forall A\,B.\,(A{\to}Option(B),Option(A)){\Rightarrow}Option(B)$ \newline {\itshape (polymorphic whole-program)} & symbolic & 3392 \\
\Xhline{2\arrayrulewidth}
\end{xltabular}
}

\paragraph{Reflective red-black balancing (Table~\ref{tab:lean-checker}).}
The colour/black-height \code{RBTree}/\code{Node} development of Table~\ref{tab:lean-rbtree}
(once hand-derived only) now also checks \emph{reflectively}, through three additions to the
checker. (i)~A per-predicate \emph{law registry} \code{predLaws} (a record of
\code{subTy}/\code{fold}/\code{pkg}/\code{covers}/\code{disj} devices, dispatched from
\code{subTyAlg} with recursion kept inside \code{subTyAlg} by a \code{tySize}-guarded conjunction),
whose \code{predLawsEnv} registers the three indexed families: \code{RBTree(T,c,h)}, its
one-taller successor \code{RBTree}$^{+}$, and the almost-balanced \code{ART} used as \code{ins}'s
intermediate invariant --- each backed by a \emph{once}-proved soundness family
(\code{predLawsEnv\_subTy\_sound}/\code{\_covers\_sound}/\code{\_disj\_sound}/\code{\_pkg\_sound}).
(ii)~A core amendment realising the Hoare \emph{disjunction} rule ($\vee\Gamma$),
\code{Deriv.disj}/\code{Deriv.disjList}, letting \code{checkExpr} split a \code{match} branch's
multi-disjunct pre-state into singletons at branch entry and rejoin the posts, a general
completeness upgrade reaching beyond \code{RBTree}. (iii)~A certified \emph{successor}
sub-fragment of the linear-integer index recipe of Sec.~\ref{sec:decidable}: the index shifts
($h\mapsto h{+}1$, equal sibling heights) are now decided \emph{inside} the checker
(\code{hsSucc}/\code{inferNat}), ground index computation plus open-height $n$/$n{\pm}1$
reasoning, with \emph{no} SMT oracle (consistent with Sec.~\ref{sec:intro}); \code{RBTree\_insert}
itself needs nothing beyond this fragment. The checker's \emph{general} affine-equality
linear-arithmetic engine is a separate, now-landed capability (the LIA rows of
Table~\ref{tab:lean-checker}; Sec.~\ref{sec:decidable}), certifying the sized-list
\code{append}/\code{length} flagships and, via a declared-post frame gate crossing the
existential child-size decomposition, the sized-tree \code{size}/\code{mirror} flagships too,
all four checker-reflectively, no remaining documented index-language boundary. On this footing
Table~\ref{tab:lean-checker} gains two rows certifying the \emph{same} specifications as the
hand rows of Table~\ref{tab:lean-rbtree}: \code{RBTree\_member} (\code{rbmember\_semvalid}), the
wide $(Int,\code{RBTree}(Int,Bool,Int)){\Rightarrow}Bool$ search run at a ground type by
\code{native\_decide}; and the flagship \code{RBTree\_insert}
(\code{rbInsert\_semvalid\_reflective}), the \emph{whole} five-function Okasaki insert discharged
\emph{symbolically} for \emph{every} height ($\forall h$); its \code{WFEnv} obligation and all
five \code{*\_bodyChecks\_reflect} runs are native-free, so the end-to-end certificate is fully
axiom-clean ($\{\textsf{propext},\textsf{Classical.choice},\textsf{Quot.sound}\}$, no
\code{native\_decide}, no SMT). That last row is by far the most expensive symbolic example in
Table~\ref{tab:lean-checker}, the price of a fully symbolic $\forall h$ \code{WFEnv} over a
five-function program.

\paragraph{The hand breakdown (Table~\ref{tab:lean-rbtree}) and the completed boundary.}
Table~\ref{tab:lean-rbtree} retains the \emph{by-hand} red-black balancing development of
Table~\ref{tab:lean-examples} as a derivation-\emph{effort} record, broken down by component: the
shared predicate family and its lemma suite; the recursive \code{member} search; the two Okasaki
rebalancers \code{balanceL}/\code{balanceR}; the height-indexed \code{ins}; the \code{blacken}/%
\code{insert} wrappers with the \code{WFEnv} assembly; and the top-level \code{insert} certificate
against $\forall h.\,(Int,\code{RBTree}(Int,c,h)){\Rightarrow}\code{RBTree}(Int,\code{Black},h)\vee
\code{RBTree}(Int,\code{Black},h{+}1)$, whose postcondition states the result's exact colour and
black-height (black, at height $h$ or $h{+}1$). The predicate constrains colour and black-height
only, so the spec makes no key-ordering (BST) claim. The reflective checker now certifies these
\emph{same} two specifications automatically (the two \code{RBTree} rows of
Table~\ref{tab:lean-checker}), and (as described next) the M2 heap-atom layer behind
the typed \code{Queue} now checks reflectively too, including a further packed-precondition layer
that unfolds the opaque \code{hpred} queue atom itself at the point a triple is checked,
certifying both queue operations ghost-free directly from that single atom. The reflective
checker's \emph{inequality}-fragment engine (a certified unit-Farkas decision procedure) now
closes the paper's \code{SortedList(T,lo,hi)} boundary as well: a bespoke $\forall$-element
destruction rule universally quantifies each \code{cons} premise over the matched element's
value, and three checker-certified flagships --- \code{head}, \code{member}, and the campaign's
capstone sorted \code{insert} --- are added to Table~\ref{tab:lean-checker}'s inequality-fragment
group (Sec.~\ref{sec:decidable}, App.~\ref{sec:leanchecker}). The sized-tree \emph{destruction}
that a reflective \code{mirror} needs (once the one remaining boundary) is now also
checker-certified, via a declared-post frame gate and the \code{size}/\code{mirror} flagships of
Table~\ref{tab:lean-checker}'s LIA group: there is no remaining documented index-language
boundary (Sec.~\ref{sec:decidable}). What remains are limits on the completeness of the
certified decision procedures, not on coverage.

{
\small
\renewcommand{\arraystretch}{0.95}
\setlength{\tabcolsep}{5pt}
\begin{table}[h]
\centering
\begin{tabular}{l r r}
\Xhline{2\arrayrulewidth}
Component & Lines & Lean (ms) \\
\Xhline{2\arrayrulewidth}
\code{RBTree(Int,c,h)} predicate family $+$ unfold/fold lemma suite & 328 & 316 \\
\code{member} --- recursive search (program, derivation, witnesses) & 632 & 703 \\
\hline
\code{balanceL} --- body, $\forall h$ spec, per-height derivation ($+$ combinators) & 1606 & 5004 \\
\code{balanceR} --- mirror of \code{balanceL} & 1158 & 4636 \\
\code{ins} --- height-ledger \code{bodyChecks} (red-root \& black-root cases) & 1277 & 3579 \\
\code{ins}/\code{blacken}/\code{insert} bodies $+$ program; \code{blacken}/\code{insert} \code{bodyChecks} & 899 & 745 \\
\code{WFEnv} $+$ \code{insert} ($\forall h$ \code{rbInsert\_semvalid}) $+$ eval witnesses \& audits & 150 & 1070 \\
\Xhline{2\arrayrulewidth}
\textbf{Total} & \textbf{6050} & \textbf{16053} \\
\Xhline{2\arrayrulewidth}
\end{tabular}
\caption{\label{tab:lean-rbtree} The hand-mechanised red-black balancing development, broken
down by component --- the two \code{RBTree} rows of Table~\ref{tab:lean-examples}
(\code{RBTree\_member}~$960/1019$, \code{RBTree\_insert}~$5090/15034$). Lines are exact source
spans; times use the same profiler method as Table~\ref{tab:lean-examples}.}
\end{table}
}

\paragraph{Reflective heap-atom checking (Table~\ref{tab:lean-checker}).}
The exact-content heap-predicate layer behind the typed \code{Queue} (once hand-derived only)
now also checks \emph{reflectively}, through three additions to the M2 checker \code{SepCheck}.
(i)~An \emph{atom-cover entailment} \code{atomWeakenB} letting the exact atoms \code{ptsEq}/%
\code{ptsCon} weaken to the ordinary typed cells (\code{pts x (Ref t)}/\code{pts x (con c\,\ldots)})
they cover. (ii)~A \emph{fold registry} \code{heapLawsEnv}, dispatched by \code{foldRowB}, that
builds a registered heap-predicate atom by consuming pool atoms up to the pure eq-closure: the
list-segment \code{LSegR} (id~0) by peeling one link/node pair at a time, and the two-pointer
\code{queuePred} (id~1) by packing front/rear pointer cells, an owned sentinel, and an
\code{LSegR} segment --- \emph{folding} only, in the pack direction. (iii)~Binder-exit
field-slot inference: \code{matchE}'s \code{Cons} arm infers an all-\code{.eqv} \code{FieldSlot}
list from the scrutinee's owned \code{ptsCon} cell, naming each bound field by an existing
variable rather than by type alone. On this footing, Table~\ref{tab:lean-checker} gains three rows
certifying the typed two-pointer queue: \code{Queue\_enqueue} (\code{enqueue\_semvalid\_reflective}),
proved at the \emph{fixed} ghost names \code{f}/\code{ga}/\code{gb} --- unlike the $\forall h$
\code{RBTree\_insert} row, this is a ground result, not a $\forall$-ghost one --- and the two
dequeue arms \code{Queue\_dequeue\_empty}/\allowbreak\code{Queue\_dequeue\_cons}
(\code{dequeue\_semvalid\_empty\_reflective}/\allowbreak\code{dequeue\_semvalid\_cons\_reflective}), each
certifying the exact same closed \code{SemValid} proposition as its hand-derived twin. All three are \code{native\_decide}
rows run at ground ghost names; unlike the native-free symbolic \code{RBTree\_insert} row, each
carries its own per-declaration compiler axiom.

\paragraph{Packed-precondition unfolding (Table~\ref{tab:lean-checker}).}
Both queue operations admit a still-\emph{leaner} statement: instead of the fully unpacked
\code{enqueuePre}/\code{deqEmptyPre}/\code{deqConsPre} states above, the precondition is just the
single opaque \code{hpred 1 [t] [f,r]} atom --- plus, for \code{enqueue}, the pure \code{v:int}
conjunct --- GHOST-FREE, with no hand-written link or element names at all. A new registry
\code{unfoldRows} supplies, per registered predicate id, one-step unfolding \emph{templates}: for
the queue (id~1) a COARSE alternative (the pointer-cell shape \code{enqueue} needs) and a SPLIT
alternative (empty/\code{Cons}, the shape \code{dequeue} needs), each introducing a Boolean-checked%
-fresh set of ghost names (\code{ghostsOkB}). The wrapper \code{checkSepPacked} picks an
alternative and Boolean-checks the body against every one of its unfolded cases via
\code{checkSepConseq}. Its soundness lemma \code{checkSepPacked\_sound} --- itself NATIVE-FREE
(clean-3) --- reflects each case to a \code{Deriv}, lifts it to \code{SemValid}, combines the SPLIT
cases with \code{SemValid\_disj\_pre}, and transports the result back across the fresh-ghost
extension via a new store-extension lemma \code{SemValid\_unpack}. That lemma's adequacy over the
packed/unpacked boundary is discharged once per unfolding row (\code{queue\_unfold\_coarse}/%
\code{queue\_unfold\_split}), and it rests on a fully \emph{general} big-step
store-agreement theorem \code{BigStep\_agree}, which holds for every expression, calls included,
with no call-free restriction. On this footing Table~\ref{tab:lean-checker} gains two more rows:
\code{Queue\_enqueue\_packed} (\code{enqueue\_semvalid\_packed}), whose ghost-free packed
precondition needs only \code{v:int} beyond the opaque atom and whose post is the same
\code{enqueuePost} as \code{Queue\_enqueue}; and \code{Queue\_dequeue\_packed}
(\code{dequeue\_semvalid\_packed}), whose precondition is the bare atom (no pure conjunct at all)
and whose post is the honest two-arm disjunction \code{deqEmptyPost}${\vee}$\code{deqConsPost}.
Each theorem is exactly \emph{one} \code{native\_decide} call through the native-free
\code{checkSepPacked\_sound} wrapper. The unfold fires only at this packed-precondition entry
point; it is not a general heap-predicate rule usable anywhere a \code{hpred} atom occurs.

\paragraph{Reflective inequality-fragment checking: \code{SortedList} (Table~\ref{tab:lean-checker}).}
Beyond the general affine-\emph{equality} engine above, the checker carries a certified
\emph{inequality} decision procedure: a sound unit-Farkas worklist over the same reified index
form, deciding $<$/$\le$ constraints with no SMT call (App.~\ref{sec:leanchecker} details the
procedure and its documented incompleteness on coefficient scaling). On this footing the checker
registers a \code{SortedList(T,lo,hi)} predicate (a structural fold rebinding the tail's lower
bound at each matched element) together with a bespoke $\forall$-element destruction rule
(\code{Deriv.matchSortedCons}): unlike the ordinary \code{match} rule, its \code{cons} premise is
universally quantified over the matched element's \emph{value}, closing every emitted index atom
at that value rather than at a symbolic placeholder. Three flagships are checker-certified
end-to-end on this fragment, all native-free and axiom-clean, and gain three rows in
Table~\ref{tab:lean-checker}: \code{head} (the shared post collapses to a scalar range-validity
atom; the per-element bound is instead certified in the destruction rule's own side-condition
witness), \code{member} (an honest \code{Bool} post via rebound recursion and a double-$<$
comparison encoding), and, as the campaign's capstone, sorted \code{insert} (post
$\code{res}:SortedList(Int,lo,hi)$, a genuine end-to-end sorted insertion). In each case the
$\forall$-element \code{cons}-branch premise is discharged by a named semantic side condition
(\code{SoConsObligation}) rather than by the structural checker gate directly; the
declared-post frame gate used by construction flagships such as \code{insert} deliberately does
not re-run the \code{cons} branch at its reserved index variable, an honest, disclosed checker
incompleteness that costs nothing, since the real premise is supplied by the side condition
instead. As with the equality engine, no completeness claim is made for the inequality engine or
the checker.






\end{document}